\newtheorem{lemma}{Lemma}
\newtheorem*{example}{Example}
\title{Adaptive Designs in Fast-Track Registration Processes for Digital Health Applications}
\author{
 Liane Kluge\\
  Competence Center for Clinical Trials Bremen\\
  University of Bremen \\
  \texttt{liane@uni-bremen.de}\\
  \And
 Werner Brannath \\
  Institute for Statistics and\\
  Competence Center for  Clinical Trials Bremen\\
  University of Bremen \\
  \texttt{brannath@uni-bremen.de}\\
}
\begin{document}
\newcommand{\alrel}{\alpha_{\text{rel}}}
\newcommand{\delrel}{\delta_{\text{rel}}}
\newcommand{\etarel}{\eta_{1,\text{rel}}}
\newcommand{\estone}{\hat{\vartheta}_1}
\newcommand{\cp}{\text{CP}_{\delta}}
\newcommand{\cps}{\text{cp}}
\newcommand{\ac}{\alpha_c}
\newcommand{\af}{\alpha_f}
\newcommand{\zf}{z_f}
\newcommand{\lonemin}{\lambda_{1,\min}}
\newcommand{\irel}{I_{\text{rel}}}
\newcommand{\trel}{t_{\text{rel}}}
\newcommand{\itwomin}{I_{2,\min}}
\newcommand{\ionemin}{I_{1,\min}}
\newcommand{\ionemax}{I_{1,\max}}
\newcommand{\Succ}{\mathcal{S}}
\newcommand{\itwoconst}{I_{2,\text{const}}}
\maketitle

\begin{abstract}
Fast-track procedures play an important role in the context of conditional registration of medical devices, such as listing processes for digital health applications. They offer the potential for earlier patient access to innovative products and involve two registration steps. The applicants can apply first for conditional registration. A successful conditional registration provides a limited funding or approval period and time to prepare the application for permanent registration (the second registration step).  For conditional registration, products have to fulfill only a part
of the requirements necessary for permanent registration. There is interest in valid and efficient study designs for fast-track procedures. This will be addressed in this paper. A motivating example is the German fast-track registration process of digital health applications (DiGA) for reimbursement by statutory health insurances. The main focus of the paper is the systematic statistical  investigation of the utility of adaptive designs in the context of fast-track registration processes like the DiGA fast-track. We demonstrate that, in most cases, such designs are much more efficient than the current standard of two separate studies. A careful statistical discussion of the registration requirements and their consequences is also included. The results are based on numerical calculations supported by mathematical arguments.
\end{abstract}
\emph{\textbf{Keywords: }}\normalsize  Confirmatory adaptive design, efficient study design, digital therapeutics (DTx), conditional approval, conditional registration
%\tableofcontents
%%%%%%%%%%%%%%%%%%%%%%%%%%%%%%%%%%%%%%%%%%%%%%%%%%%%%%%%%%%%%%%%%%%%%%%%%%%%%%%%%%%%%%%%%%%%%%%%%%%%%%%%%%%%%%%%%%%%%%%%
%%%%%%%%%%%%%%%%%%%%%%%%%%%%%%%%%%%%%%%%%%%%%%%%%%%%%%%%%%%%%%%%%%%%%%%%%%%%%%%%%%%%%%%%%%%%%%%%%%%%%%%%%%%%%%%%%%%%%%%%
\section{Introduction}\label{sec1}
%%%%%%%%%%%%%%%%%%%%%%%%%%%%%%%%%%%%%%%%%%%%%%%%%%%%%%%%%%%%%%%%%%%%%%%%%%%%%%%%%%%%%%%%%%%%%%%%%%%%%%%%%%%%%%%%%%%%%%%%
%%%%%%%%%%%%%%%%%%%%%%%%%%%%%%%%%%%%%%%%%%%%%%%%%%%%%%%%%%%%%%%%%%%%%%%%%%%%%%%%%%%%%%%%%%%%%%%%%%%%%%%%%%%%%%%%%%%%%%%%

The interest in the so-called \emph{fast-track procedures} for the registration of health products and medical devices is currently growing worldwide. Examples are listing processes for digital health applications, conditional marketing authorisation of the EMA \cite{EMACondMarkAuth1, EMACondMarkAuth2}, accelerated approval of the FDA \cite{FDAAccApp1, us2024expedited}, and other accelerated approval pathways summarized in the work of Gupte et al \cite{gupte2025regulatory}. 
The concept of fast-track procedures is as follows: The applicants have the opportunity to first apply for a conditional registration that provides a limited funding period and time to prepare the application for permanent registration. For conditional registration, products have to fulfill less
strict requirements than those necessary for permanent registration, e.g., with regard to the evidence. After conditional registration, applicants have a specific maximum period of time to meet all the requirements for permanent registration, including convincing evidence. The aim of fast-track procedures is that patients can benefit quickly from innovative products, particularly if a product fulfills an unmet medical need. Fast-track procedures are currently becoming increasingly important in the field of digital medical devices. There are many countries that have established such procedures for digital medical devices. This includes countries within (like France, Germany, Belgium) and also outside the EU (like Korea and the United Kingdom), as summarized in an OECD Working Paper of Chapman \cite{chapman2025towards}. There are also many countries that are in the discussion of implementing fast-track procedures, see for instance Tarricone et al \cite{tarricone2024towards} or \cite{overviewFastTrackEnglish}. Furthermore, the existing registration procedures have also become a sales market for foreign companies like Czechia, Poland, and the United States \cite{digaDirectory, mHealthBelgiumWebsite}.

Our paper is mainly motivated by the German fast-track registration procedure for digital health applications (DiGA), as it has been and still is a pioneer for the establishment of fast-track programs for digital medical devices in many other countries \cite{lantzsch2022digital}.  If a DiGA successfully completes the assessment of the BfArM, it is registered for reimbursement by statutory health insurance companies in Germany. The Federal Institute for Drugs and Medical Devices (BfArM) is the German regulatory authority responsible for medicines and digital health applications. 

In this paper, we consider different design options for fast-track registration processes, whereby we account for the requirements of a conditional registration as well as the requirements for permanent registration. On the one hand, this provides a guide for applicants on how to design efficient studies for a fast-track registration process. On the other hand, this will include a careful statistical discussion of the different requirements found in the guidelines, in particular, the BfArM guideline for the registration of DiGA in Germany \cite{leitfadendiga} (an old BfArM guideline version is also available in English \cite{oldLeitfadenEnglish}). The discussion of these requirements is important, as governments and health insurance companies of many countries are increasingly faced with the challenge of deciding which health technologies they should pay for \cite{chapman2025towards}, and on the other side, the community of applicants wants fair requirements as it has been indicated in a survey by Ataiy et al \cite{ataiy2024digital}.  Taking into account the discussed requirements, our focus is on overall valid, efficient designs. We consider a design as overall valid if it provides overall type I error rate control, in the sense that the probability for a permanent registration of a DiGA without a positive effect in the considered endpoint describing the potential health care effect is bounded by some level $\alpha$, e.g. for the one-sided level of $\alpha=0.025$. Our considerations will include the current standard of separate studies for conditional and permanent registration, as well as two-stage adaptive designs (see, for instance, the work of Wassmer and Brannath \cite{BuchWassmerBrannath2016}) that permit the use of the data of the conditional registration study also for the permanent registration.  To the best of our knowledge, this is the first paper to systematically investigate the usefulness of adaptive designs in the context of a fast-track registration process for digital health applications. We illustrate that under common scenarios, adaptive designs are much more efficient than the standard approach of two separate studies for conditional and permanent registration. An important aspect of this assessment will be the overall performance of the whole two-step registration process, such as the overall success probability (power to receive permanent registration) and, as already mentioned, the type I error rate. 
%In particular, we will identify situations where a sufficient overall success probability implies specific constraints for the overall study design. An example of such a constraint is the requirement of sufficiently large sample sizes in either the second or both steps, i.e., for the conditional and/or permanent registration. 
One important result of our requirements discussion is the identification of scenarios for which a two-stage registration process is less efficient than the direct application for a permanent registration without an intermediate conditional registration step. For such situations, we suggest a data-driven choice based on the first-stage data between the fast-track procedure and the direct application for permanent registration. In particular, we will demonstrate that a mixture of an adaptive two-stage and single-stage design can provide an efficient alternative to applying for permanent registration directly.  Such options are possible with the current DiGA registration regulations in Germany. 
 
Our article is arranged as follows. Section~\ref{sectionExaplesFastTrack} gives a brief overview of the regulations and the established fast-track procedures of the different countries. Section~\ref{subsectionRequirements} provides a general discussion of requirements for conditional and permanent registration. Section~\ref{NewChapterStudyDesignsFastTrack} discusses study designs under the assumption that a permanent registration is impossible or of no further interest after an unsuccessful conditional registration process. Reasons for this restriction could be a lack of resources without conditional registration or regulatory concerns with regard to an initially unsuccessful registration process. In Section~\ref{ChapterNonBindingFutility}, we will consider designs with a possibility of a data-dependent decision between the fast-track procedure and direct permanent registration, which guarantee overall type I error rate control. The paper concludes in Section~\ref{sec:Summary} with a detailed summary and discussion of our results as well as their implications.

\section{Examples of current Fast-Track Registration Processes}\label{sectionExaplesFastTrack}
We briefly outline the legal basis and regulatory requirements of the fast-track procedures for digital medical devices established to date. The ``app on prescription'' in Germany has a pioneering role and was introduced by the Digital Healthcare Act, which came into force in 2019. Since then, digital health applications (\emph{Ger.} Digitale Gesundheitsanwendungen, DiGA) can be prescribed by physicians and psychotherapists and are reimbursed by statutory health insurance companies. The currently reimbursable DiGA are listed in the DiGA directory \cite{digaDirectory}. To be listed in the directory, the applications must successfully complete the assessment of the BfArM. The details of the registration process and the requirements for reimbursable DiGA are regulated by the Digital Health Applications Ordinance (DiGAV). A concrete interpretation of that and more practical details can be found in the DiGA guideline of the BfArM \cite{leitfadendiga}. An old guideline version is also available in English \cite{oldLeitfadenEnglish}. 

There are two different possible ways of permanently being listed in the directory: following the fast-track procedure or applying for a permanent listing directly. The requirements to permanently be listed in the directory are the demonstration of at least one positive healthcare effect (via statistical significance) for a specific patient group, as well as all quality and compliance standards from Sections 3 to 6 of the DiGAV. In order to prove a positive healthcare effect, a quantitative comparative study should be conducted and submitted to the BfArM. It must also be registered in a public trial registry. Compared to the application for a permanent listing, the requirements for a conditional listing are partly less strict because the study for the proof of a positive healthcare effect must not yet have been conducted or concluded. 
%The quality and compliance standards from Sections 3 to 6 DiGAV must still be met. 
Instead, the DiGA applicants must submit a so-called \emph{systematic evaluation of data} where data from the use of the DiGA must be analyzed. This data, which we call \emph{pilot data}, must plausibly illustrate that a subsequent comparative study will be successful in proving a positive healthcare effect of the DiGA \cite[pp.~109-110]{leitfadendiga}. According to the DiGA guide, showing this initial evidence includes a relevant (numeric) positive healthcare effect that must be calculated from the pilot data  \cite[p.~110]{leitfadendiga}. Further requirements are specified in the DiGA guide. The pilot data collection is intended to prepare the final study for the proof of the positive healthcare effect needed for permanent registration. For instance, this includes case numbers, measuring instruments, endpoints, and recruitment methods. The applicants have to submit the evaluation concept (including study protocol) for the final study when applying for conditional listing. Ideally, the patient populations and endpoints investigated in the systematic evaluation of data and the final study should be the same. If an applicant receives conditional registration from the BfArM, the DiGA is listed for reimbursement in the directory for one year. The proof of a positive healthcare effect via a comparative study must be submitted to the BfArM within the one year of conditional listing. Otherwise, the DiGA is removed from the directory, and the applicant can apply its DiGA for a further listing after one year at the earliest. In summary, the applicants have, with the fast-track procedure, one year more to conduct the comparative study to prove a positive healthcare effect while already being listed as a reimbursable DiGA. 

The ``app on prescription'' in Germany and the fast-track registration process for DiGA have inspired France to establish an analogue regulatory framework for digital medical devices, \emph{DMD} for short, (\emph{FR.} dispositifs médicaux numériques (DMN)) that are reimbursable by health insurance companies. In 2023, France adopted the fast-track registration process, which is called \emph{PECAN} (\emph{FR.} Processus de Certification des Applications Numériques de Santé) \cite{overviewFastTrackFrance} with Decree 2023-232. Belgium has also established an authorisation pathway for digital health applications (\emph{FR.} Applications de santé mobiles) with the \emph{mHealth pyramid}, see \cite{mHealthBelgiumWebsite}, via the royal decree \cite{RoyalDecree2021}. Many other EU countries are currently in the process of introducing or in discussion of implementing fast-track procedures for digital health applications, such as Austria, Italy, and more \cite{tarricone2024towards, overviewFastTrackEnglish, AustriaDiGA}. There are also ambitions to create a uniform assessment framework for digital interventions within the EU via the ASSESS DHT \cite{euProject1} or EDiHTA \cite{euProject2} project.

Outside the EU, for instance, in Korea, the fast-track pathway \emph{Integrated Review and Assessment Programme for Innovative Medical Devices} was established in 2022 and 2023, which enables provisional insurance listing based on initial evidence \cite{chapman2025towards}. In the UK, there is the \emph{NICE’s Early Value Assessment (EVA)} pathway, a fast-track process for (digital) technologies which offers the possibility of a conditional recommendation which can provide a basis for conditional funding \cite{chapman2025towards, evaUK}.
\section{Requirements for conditional registration and consequences}\label{subsectionRequirements}
%%%%%%%%%%%%%%%%%%%%%%%%%%%%%%%%%%%%%%%%%%%%%%%%%%%%%%%%%%%%%%%%%%%%%%%%%%%%%%%%%%%%%%%%%%%%%%%%%%%%%%%%%%%%%%%%%%%%%%%%
This section addresses the regulatory and strategic requirements for conditional and permanent registration motivated by the German fast-track procedure. Firstly, we make a few general assumptions. Usually, the data collection for the conditional and for the permanent registration is realized through two separate studies \cite{digaDirectory}. In this paper, we consider the scenario in which the two separate studies have the same design, with the same primary endpoint and the same study population. Later, we propose using two-stage adaptive designs as an alternative to non-adaptive separate study designs. In the following, the term ``stage'' is used to refer to the two studies of a non-adaptive design and the two stages of an adaptive design. 

In the numerical investigations presented below, we assume a health care effect 
parameter $\vartheta$ with $H_1:\vartheta>0$ corresponding to a positive healthcare effect and $H_0:\vartheta\le 0$ to an inefficient product. We will further assume that at both stages, $k=1,2$, we obtain a normally distributed estimate $\hat{\vartheta}_k\sim\mathcal{N}(\vartheta\,,\,I_k^{-1})$ of the healthcare effect with associated one-sided p-values $p_k=1-\Phi(\hat{\vartheta}_1\sqrt{I_k})$, where $\Phi$ is the standard normal distribution function and $I_k$ is the (Fisher) information of the data from stage $k$. Loosely speaking, the information measures how precisely a parameter can be estimated from the data — the larger it is, the more precise the estimate. It is typically proportional to the sample sizes. The stage-wise effect estimates are assumed to be calculated from disjoint cohorts and thereby stochastically independent. We further assume that for a permanent registration, statistical significance at one-sided level $\alpha=0.025$ is required (for the fast-track procedure and direct permanent registration). We assume that applicants must announce before conducting a stage/study, whether it is intended for conditional or permanent registration, in order to avoid inflation of the Type I error rate $\alpha$.

A possible additional requirement for permanent registration, besides statistical significance, in the case of the fast-track procedure, could be that the application for conditional registration must have been successful (for whatever reason). We will leave this open for now and discuss the fast-track procedure in this section with and without this additional requirement.

\subsection{Requirements motivated by the German DiGA guide}
For the conditional registration, motivated by the German DiGA guide, we assume that there is a minimal clinically relevant effect $\delrel>0$, which is strict in the sense that a product with a healthcare effect lower than $\delrel$ is considered as not having a relevant positive impact. Accordingly, a natural requirement for the conditional registration is that, at stage one, a point estimator of $\estone\geq\delrel$ is observed.

Another natural requirement for conditional registration is $p_1\le \ac$ for some significance level $\ac\in(0,1)$. This bounds the probability for a temporary registration of an ineffective product. There are several reasons why the conditional registration of an ineffective product is undesirable, even when the final study for a permanent registration fails. One reason are the costs coming with a temporary registration without a relevant healthcare effect. Users may even be exposed to a product with negative healthcare or side effects during the temporary period of registration.  In addition, applicants and
health insurance companies may suffer a considerable loss of reputation. This requirement is also mentioned in the German DiGA guide, whereby a two-sided significance level of $0.1$ is indicated as a possibility for conditional registration \cite{leitfadendiga}, which corresponds to the one-sided stage-one level of $\ac=0.05$. 

We list further possible requirements for conditional registration. In the DiGA guide, it is indicated that for a successful conditional registration, the study plan for the final stage must be submitted. In this context, it must plausibly be illustrated that a subsequent comparative study/stage will be successful in proving that the DiGA has a positive healthcare effect. This requirement is justified by the costs and complications (for the provider and patients) from a withdrawal of registration, which also comes with a loss of reputation. We propose the following specification of this requirement: The power calculation for the second stage should be done for a sufficiently high power, like $80\%$ (or higher) at the observed first stage healthcare effect, or a lower value.

If the application for a permanent registration is impossible after an unsuccessful study for conditional registration, the impact of the requirements $\estone\geq\delrel$ and $p_1\leq\ac$ for a conditional registration is substantial and therefore requires a careful investigation. Therefore, we will investigate these regulatory requirements and their impact on the strategic requirements of the applicants, such as power considerations.

A priori, it is not clear which of the above-mentioned two requirements is more stringent.  
It can be calculated that $\estone\geq\delrel$ is equivalent to $p_1\le \alrel:= 1-\Phi(\delrel\sqrt{I_1})$. Hence, the first requirement is equivalent to the second one with a specific stage-one level. We introduce the following notations to visualize this level in the relative scale:
\begin{align}\label{eq:trel}
\trel(I_1):=%\frac
{I_1}/{\irel}\quad\text{with}\quad \irel:=%\frac
{\eta_f^2}/{\delrel^2} \text{ where } 
\eta_f:=\Phi^{-1}(1-\beta)+\Phi^{-1}(1-\alpha)~.
\end{align}
The parameter $\eta_f$ defines the noncentrality parameter and $\irel$ the information
needed to achieve a power of $1-\beta$ at $\delrel$ with significance level $\alpha$. 
In our numerical investigations, we assume the commonly used $\alpha=0.025$ (one-sided) and $1-\beta=0.8$, because these values are used in the majority of DiGA studies for permanent registration.  This leads to $\eta_f= 2.8$. 
The significance level $\alrel$ is shown in Figure~\ref{fig:sigLevel} (solid decreasing line) in dependence on $\trel(I_1)$. The $\trel(I_1)$-axis ends at $1$, because for $\trel(I_1)>1$, it does not make sense to decide in favor of the fast-track procedure: A first stage information $I_1>\irel$ ensures a power of at least $1-\beta$ for rejecting $H_0$ at level $\alpha$ with a single fixed-size sample test for all $\delta\geq \delrel$. In this case, a study for a direct, permanent registration is more efficient than the pilot study to achieve conditional registration only. Thus $\irel$ can be considered as a universal upper bound for $I_1$.
\begin{figure}[h]
\centering
\includegraphics[width=0.5\textwidth]{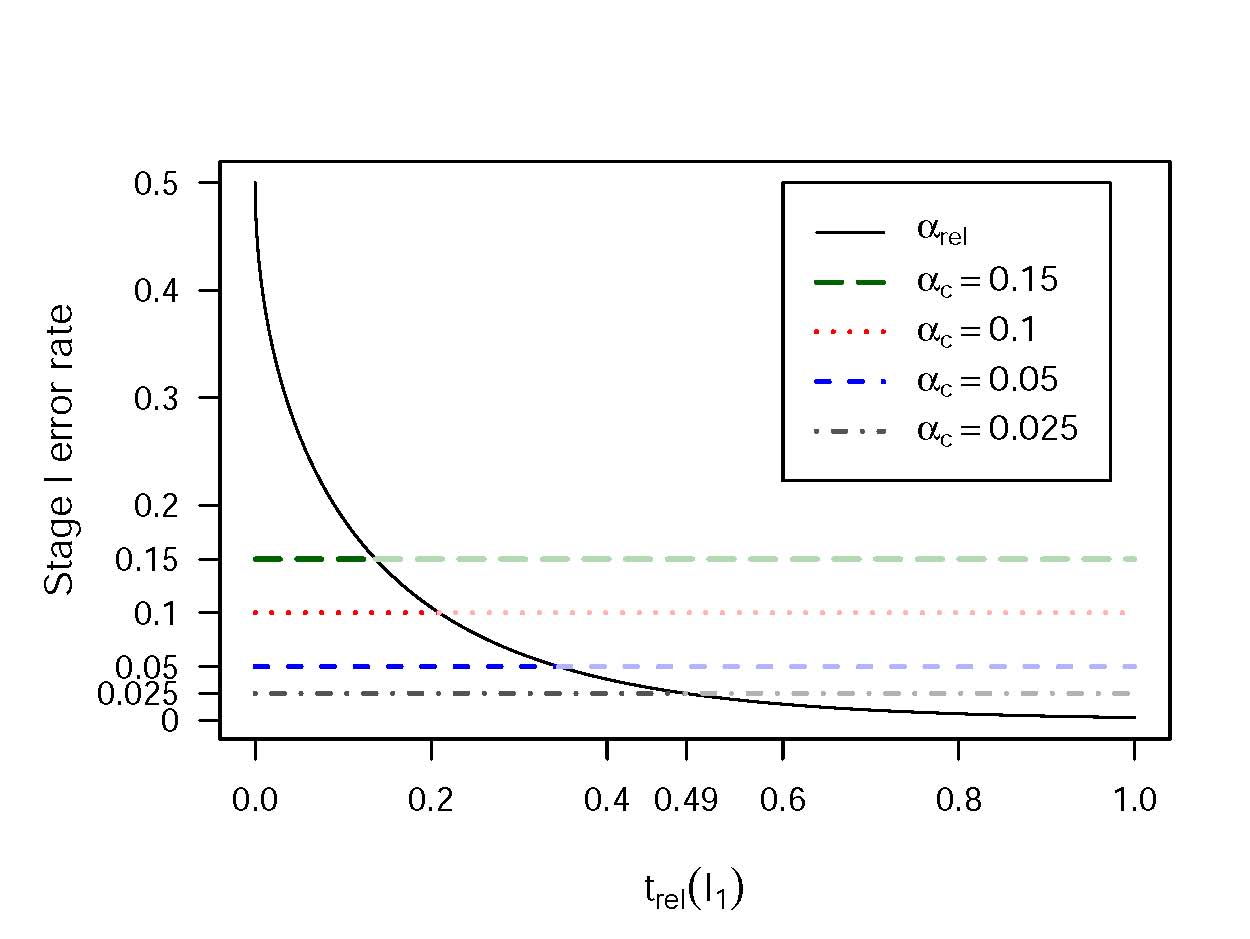}
\caption{\label{fig:sigLevel} Type I error rate of conditional registration resulting from the requirement $\estone\geq\delrel$ in comparison to different levels $\ac$ (one-sided).}
\end{figure}

It can be seen from Figure~\ref{fig:sigLevel}  that demanding $\estone\geq\delrel$ for conditional registration does not automatically control the Type I error rate for erroneously receiving a conditional registration under $H_0$. Accounting for both requirements, one must achieve 
\begin{align*}
    p_1\leq\af:=\min\bigg\{1-\Phi\bigg(\delrel\sqrt{I_1}\bigg),\ac\bigg\}=\min\{\alrel,\ac\}
\end{align*}
in the pilot study. The levels $\ac=0.05, 0.1, 0.15$ (one-sided) are indicated in Figure~\ref{fig:sigLevel} by horizontal lines. For the different choices of $\ac$, the Type I error rate $\af$ for conditional registration can be read off Figure~\ref{fig:sigLevel} by using the minimum of the corresponding horizontal and decreasing solid line. For example, with the choice $\ac=0.05$ indicated in the DiGA guide \cite{leitfadendiga} and $I_1\le 0.3447\cdot \irel$, the requirement for a type I error rate control at level $\ac$ is  more stringent than
the requirement of a point estimator above $\delrel$. 

\begin{example}
    As an illustrative example, we consider a case inspired by the
    DiGA \emph{elevida}, which targets fatigue in patients with multiple sclerosis and has been permanently registered in the German DiGA directory via the fast-track procedure \cite{digaElevida}. The study for permanent registration was a parallel group, two-arm RCT \cite{pottgen2018randomised}. The primary outcome was the physical and cognitive fatigue measured on the Chalder Fatigue Scale. The healthcare effect was defined by the mean difference between a waitlist control group and the treatment group (using DiGA for 12 weeks). As mentioned in \cite{pottgen2018randomised}, the minimally important difference for improvement on the Chalder Fatigue Scale was estimated between $0.7$ and $1.4$. In the following, we will work with $\delrel=1$.  We assume hypothetically that a fast-track procedure is planned for a similar DiGA by another applicant, based on the information of the DiGA elevida. In this fast-track procedure, we use parallel group designs for both conditional and permanent registration, each with balanced samples and normal mean differences as estimates for the healthcare effect. The informations for the first and second stage are then given by $I_k=n_k/(2\sigma^2)$ where $n_k\in\mathbb{N}$ is the sample size per group for stage $k$, $k=1,2$. A sample size calculation with $\alpha=0.025$ and $1-\beta=0.8$ at $\delrel=1$ would give the information $\irel=2.8^2/1^2=7.84$. Using the estimated variance $\hat{\sigma}^2=5.17^2$ from \cite{pottgen2018randomised}, the corresponding per-group sample size would be $n_{1,\text{rel}}:= 420\approx 2\hat{\sigma}^2\irel$. This implies, for example, that a pilot study with $\trel(I_1)=0.5$ corresponds to a study with per-group sample size of $n_1=0.5\cdot 420=210$.
\end{example} 

It is relevant to investigate whether the conditions set for conditional registration are less strict than those for permanent registration. To have a less stringent condition for conditional registration than for permanent registration, $\af>\alpha$ should hold true, which implies $\ac>\alpha$ and $1-\Phi(\delrel\sqrt{I_1})>\alpha$.  The latter condition is fulfilled when  
\begin{align}
    I_1< \ionemax := 
    {\Phi^{-1}(1-\alpha)^2}/{\delrel^2}~.\label{UniversalUpperBoundInformation}
\end{align}
A pilot study with information $I_1\ge \ionemax$ and the requirement $p_1\le \af\ (\le \alpha)$ wouldn't make sense for the applicants, because it would be less or as efficient as a similarly large study for permanent registration for which only $p_1\le\alpha$ is required. Therefore, we only consider pilot information $I_1$ that is lower than the universal upper bound $\ionemax$.

Note that $\trel(\ionemax)$ is independent from $\delrel$ and only depends on $\alpha$ and $\beta$ (which determine $\eta_f$). For $\alpha=0.025$ and $\beta=0.2$ we obtain $\trel(\ionemax)=0.49$ which is marked in Figure~\ref{fig:sigLevel}. In our example fast-track procedure with $n_{1,\text{rel}}=420$, this implies that the per group sample size for conditional registration should be smaller than $n_{1,\max}:= 206\approx 0.49\cdot 420$.
\subsection{Consequences in conjunction with required conditional registration}\label{ConsequencesRequirementsBindCondRegits}

This subsection analyses the consequences of the 
requirements for a conditional registration for situations where an unsuccessful
conditional registration renders a study for a permanent registration impossible (for whatever reasons), i.e. where conditional registration is required. A justified aim of the applicants, but also regulators, should be to guarantee a certain probability of overall success, which we understand as the probability of achieving permanent registration. This leads to the requirement that the success probability of the conditional registration needs to be sufficiently large. For example, to reach an overall success probability of at least $1-\beta$ at some $\delta\geq\delrel$, we must have a pilot study with ``power''
\begin{align}
    \mathbb{P}_{\delta}\left(Z_1\geq z_f\right)> 1-\beta~,\label{conditionCondApprovalExtended}
\end{align}
where $Z_1:=\sqrt{I_1}\estone$ and 
\begin{align}\label{definition_zf}
z_f:=\max\left\{\sqrt{I_1}\delrel,\Phi^{-1}(1-\ac)\right\}\,.
\end{align}
Note at first that this condition cannot be fulfilled (for $1-\beta>0.5$) if $\delta=\delrel$: In this case, the power in (\ref{conditionCondApprovalExtended}) cannot exceed $0.5$ because $Z_1\ge z_f$ implies 
$\estone\ge\delrel$. So if $\delta=\delrel$ is considered highly likely and worthwhile for registration, conducting a single trial for permanent registration is more promising than a fast-track registration process with required conditional registration. 

Let us now assume that investigators a priori assume a healthcare effect $\delta>\delrel$ and aim at a sufficient success probability under this assumption. Due to $Z_1\sim \mathcal{N}(\delta\sqrt{I_1},1)$ condition (\ref{conditionCondApprovalExtended}) is equivalent to
\begin{align}
I_1\geq \ionemin:=\max\left\{\left(\frac{\Phi^{-1}(1-\beta)}{(\xi-1)\cdot\eta_f}\right)^2, \left(\frac{\Phi^{-1}(1-\ac)+\Phi^{-1}(1-\beta)}{\xi\cdot\eta_f}\right)^2\right\}\cdot\irel\label{lambdaleftMax}
\end{align}
where
\begin{align*}
    \xi:={\delta}/{\delrel}\in(1,\infty)~.
\end{align*}
Figure~\ref{minimumNeededBoth} (left) shows, in dependence on $\xi$, the universal lower bound (\ref{lambdaleftMax}) in the relative scale $\trel(\ionemin)={\ionemin}/{\irel}$. Note that $\trel(\ionemin)$ is independent from $\delrel$ and only depends on $\alpha$, $\beta$, $\ac$ and $\xi$. 
The first part of the maximum in (\ref{lambdaleftMax}) comes from the requirement $p_1\leq\alrel$ and is represented in Figure~\ref{minimumNeededBoth} (left) by the solid decreasing line. The other lines represent the second part of the maximum in (\ref{lambdaleftMax}) (which comes from the requirement $p_1\leq\ac$) for the different (one-sided) levels $\ac=0.05, 0.1, 0.15$. The minimum possible relative Information $\trel(\ionemin)$ can also be read off Figure~\ref{minimumNeededBoth} (left) by using for each $\ac$ the maximum of the corresponding horizontal line and the decreasing solid line.

The universal upper bound $\ionemax$ for the information of a reasonable pilot study is also shown in Figure~\ref{minimumNeededBoth} (left) on the relative scale, i.e. $\trel(\ionemax)$, namely by the horizontal dot-dashed line. Recall that this universal upper comes from the regulatory requirement $\estone\geq\delrel$ and the reflection that the requirements for conditional registration should not be more stringent than for a permanent registration. One can see that the universal lower bound $\trel(\ionemin)$ can exceed this universal upper bound, in which case the fast-track procedure with a required conditional registration does not make sense. This is the case for $\xi$ falling short of the intersection point between the decreasing solid line and dot-dashed line, which equals 
\begin{align}
\xi_{\min}:= 1+ \frac{\Phi^{-1}(1-\beta)}{\Phi^{-1}(1-\alpha)}\,.
\end{align}
With $\alpha=0.025$ and $\beta=0.2$ we obtain $\xi_{\min}\approx 1.43$. This implies that the here considered fast-track process is useful only if we can assume a health care effect that is at least 43\% higher than the minimal relevant effect $\delrel$. Note that this effect would even be larger with a larger overall power of e.g. $1-\beta=0.9$.
All the following considerations of the fast-track procedure with required conditional registration will be made under the assumptions $\alpha=0.025$ and $\beta=0.2$.

\begin{figure}[h]
\centering
\includegraphics[width=1\textwidth]{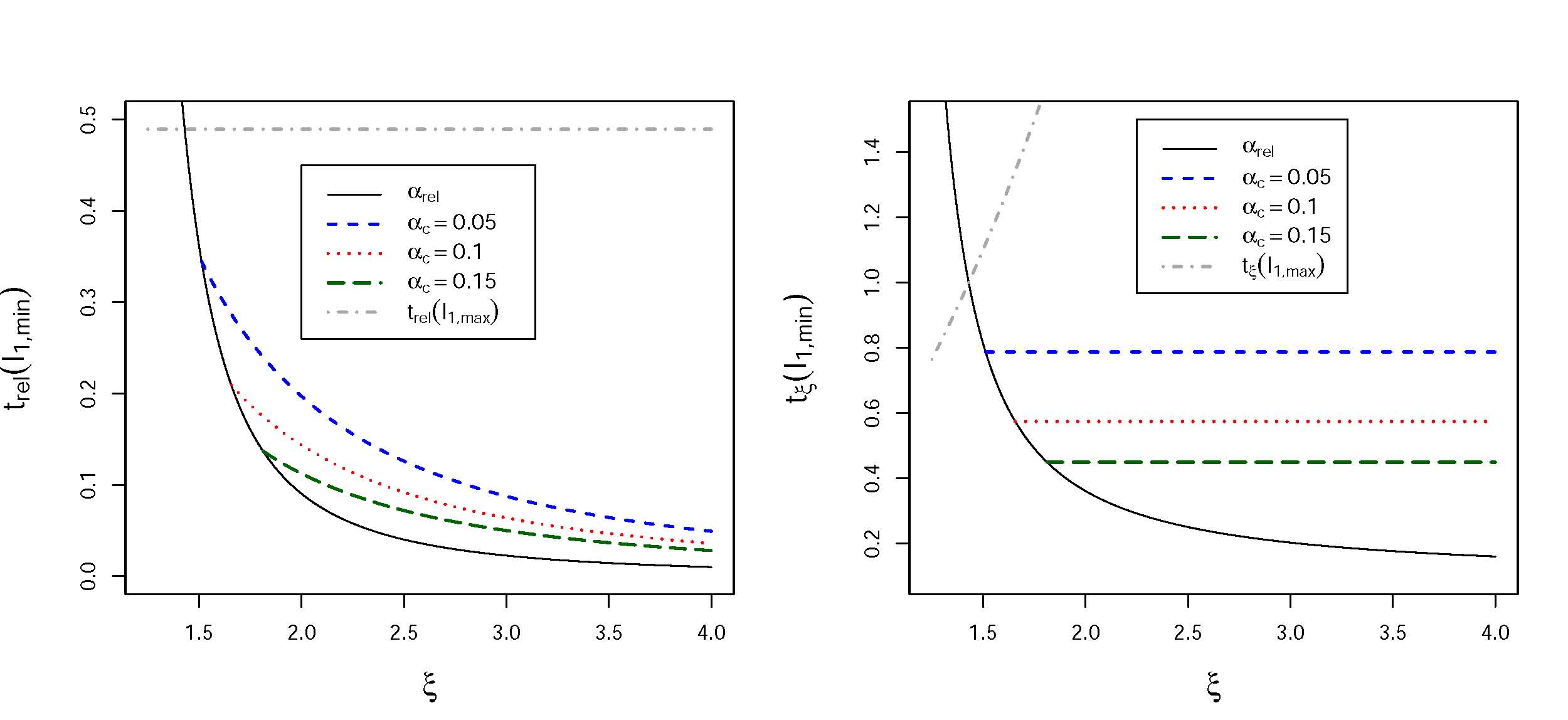}
\caption{Visualisation of the minimum information needed for the first period of data collection. The universal lower bound for $\trel(I_1)={I_1}/{\irel}$ in dependence on $\xi=\delta/\delrel\ge \xi_{\min}=1.43$ (left) and the universal lower bound for $t_{\xi}(I_1)={I_1}/{I_{\delta}}$ in dependence on $\xi\ge \xi_{\min}$ (right) are shown.\label{minimumNeededBoth}}
\end{figure}
\begin{example}[continued]
    Since the observed healthcare effect in the conducted trial \cite{pottgen2018randomised} was $2.74$, a reasonable and conservative a priori healthcare effect assumption is $\delta=2=2\cdot\delrel$, i.e. $\xi=2$. We can read off Figure~\ref{minimumNeededBoth} (left), that $\trel(\ionemin)=\ionemin/\irel=n_{1,\min}/n_{1,\text{rel}}=0.2$ for the choice of $\ac=0.05$. This implies that,  to meet (\ref{conditionCondApprovalExtended}), the fast-track procedure needs a minimum first stage sample size per group of $n_{1,\min}=84= 0.2\cdot 420$.
    For comparison, a fixed-size sample test with significance level $\alpha=0.025$ and power $1-\beta=0.8$ at $\delta$ would need the 
    information of $I_{\delta}:=\eta_f^2/\delta^2=2.8^2/2^2=1.96$, i.e. a per group sample size of $n_{\delta}:= 105\approx 2\hat{\sigma}^2 I_{\delta}$.    
\end{example}

Next, we discuss the choice of $\ac$ for the fast-track procedure with required conditional registration with reference to Figure~\ref{minimumNeededBoth} (right).
This figure shows the universal lower bound for the pilot information on another scale, namely relative to  the information $I_{\delta}:={\eta_f^2}/{\delta^2}$ needed for achieving a power of $1-\beta$ at the true or a priori assumed alternative $\vartheta=\delta$ with a single fixed-size sample test: The vertical axis of this figure is
\begin{align*}
t_{\xi}(I_1):={I_1}/{I_{\delta}}=\trel(I_1)\xi^2
\end{align*}
and it shows $t_{\xi}(\ionemin)=\trel(\ionemin)\xi^2$ in dependence of $\xi$. Since this scale increases with $\xi^2$, the horizontal dotted-dashed line for the universal upper bound in Figure~\ref{minimumNeededBoth} (left) becomes a curved line in Figure~\ref{minimumNeededBoth} (right), and the non-solid curved lines in Figure~\ref{minimumNeededBoth} (left), representing
the part of the universal lower bound coming from the requirement $p_1\le \ac$, are becoming straight lines in Figure~\ref{minimumNeededBoth} (left). The universal lower bound for the pilot information (to achieve a power larger than $1-\beta$) is again represented by the maximum of the decreasing solid curve and the horizontal line corresponding to the respective $\ac$. One can see from this figure that with $\ac=0.05$ the minimum information for the pilot study must exceed $78\%\ (\approx 80\%)$ of the information $I_\delta$ required for a fixed-size sample test with a power of $1-\beta$ at the true or a priori assumed alternative $\vartheta=\delta$.
This indicates that $\ac=0.05$ may lead to rather large pilot studies and that a larger choice for $\ac$ would be more practical. Since the level $\ac=0.15$ leads to a universal lower bound below $I_\delta/2$ (more precisely, $44,94\%$ of $I_\delta$) and may still be considered an acceptable level for conditional registration, we will use this value in our further investigations. Recall that for our example fast-track procedure, we have for $\delta=2$ that $I_{\delta}=1.96$ with corresponding per group sample size $n_{\delta}=105$. From Figure~\ref{minimumNeededBoth} (right) for $\ac=0.15$ it follows that $0.45=t_{\xi}(\ionemin)=n_{1,\min}/n_{\delta}$. That means $n_{1,\min}=48\approx0.45\cdot n_{\delta}$ is now the minimum needed first-stage sample size per group and significantly smaller than $84$, which is the one needed if $\ac$ is not $0.15$ but $0.05$.

Let us further explain why we considered the two different scales $\trel$ and $t_{\xi}$: Depending on the applicant’s stance, i.e. the confidence in the assumed a priori effect $\delta$, one scale may be more relevant than the other. If there is little confidence in the assumed a priori effect $\delta$ and few or no data have been collected before starting to collect data as part of the fast-track procedure, the $\trel$-scale may be more interesting. However, if applicants have greater confidence in the correctness of the assumed a priori effect $\delta$, comparing the information with the one needed for a fixed-size sample test powered at $\delta$ may be more appropriate. In this case, applicants especially do not likely want the (minimum) information for the first stage to exceed $I_{\delta}$, i.e. $t_{\xi}(\ionemin)<1$ should hold.

In addition to the regulatory requirements, we recommend that applicants set their own strategic requirements, such as achieving a certain overall success probability, including the conditional and permanent registration process, to improve their likelihood for a successful registration.

%%%%%%%%%%%%%%%%%%%%%%%%%%%%%%%%%%%%%%%%%%%%%%%%%%%%%%%%%%%%%%%%%%%%%%%%%%%%%%%%%%%%%%%%%%%%%%%%%%%%%%%%%%%%%%%%%%%%%%%%
\section{Study designs with required conditional registration}\label{NewChapterStudyDesignsFastTrack}
In this section, we assume that at the time of the planning of a fast-track procedure by the applicants, an unsuccessful study for conditional registration precludes any further attempt to receive a permanent registration. Reasons for this may be regulatory requirements or financial aspects, for example. We relax this assumption in Section~\ref{ChapterNonBindingFutility}. We will investigate the efficiency of the standard design with two separate studies when a certain overall power is anticipated, and compare them with adaptive designs after a brief recap of adaptive designs. 
%The discussed and assumed registration requirements from Section \ref{subsectionRequirements} form the framework for our investigation.

\subsection{Overall performance of the standard fast-track registration process}\label{subsectionConsequences}
We consider the overall and second-stage power and corresponding sample sizes of the standard design with two separate stages (studies). The overall power calculation (including both stages) is done for some a priori assumed effect $\delta$ with $\xi=\delta/\delrel>\xi_{\min}$ for the power $1-\beta$. We further assume that, after a successful first stage, the second stage is powered for the same value $1-\beta$, but now at the observed first stage estimate $\estone$, using the well-known sample size formula for z-tests. Recall that 
a successful first stage implies $\estone\ge\delrel$. 

Since the calculated second stage information can become arbitrarily small for large interim effect estimates, but the second stage study should to be sufficiently large to be sufficiently convincing for a permanent registration, we also assume a specific minimum information $\itwomin$ that is never undercut, even if the second stage sample size calculation falls short of this minimum. The selection of this minimum stage-two information $\itwomin$ also provides a way to control the overall power with a fixed stage-one information. 
%Fixing the stage-one information and choosing $\itwomin$ to reach a sufficiently large overall success probability may be of particular interest for small (start-up) companies that have a relatively small budget for the pilot study but can recruit additional funds for the second study during the temporary registration period. 
In the following, we will consider the minimal information $\itwomin$ required for an overall success probability ($1-\beta$) as a function of a (given) first-stage information $I_1$ assuming a specific health care effect $\delta=\xi\delrel$ for $\xi>\xi_{\min}$. This will inform us about the efficiency of the standard fast-track procedure under different assumptions and compare it with alternative designs utilizing adaptive design methodology.
 
As described, we assume that the fast-track procedure leads to a successful permanent registration if the first stage is successful (with conditional registration) and, at the second stage, the hypothesis $H_0$ is rejected by the z-test with second-stage p-value $p_2\leq\alpha$. Therefore, the overall power can be written as
\begin{align}\label{IntegralOverallPowerSharperBorder}
\Succ(\itwomin,I_1):=\, & \mathbb{P}_{\delta}\left(Z_1\geq\zf,p_2\le \alpha\right)=\int\limits_{z_f}^{\infty}\mathbb{P}_{\delta}(p_2\le \alpha~|~Z_1=z_1)\,\phi\big(z_1-\sqrt{I}_1\delta\big)dz_1\notag\\ 
&=\int\limits_{z_f}^{\infty}\left\{1-\Phi\left(\Phi^{-1}(1-\alpha)-\sqrt{\max\big\{\itwomin,I_1\eta_f^2/z_1^2\big\}}\delta\right)\right\}\phi\big(z_1-\sqrt{I}_1\delta\big)dz_1,
\end{align}
where $\phi$ is the density of the standard normal distribution, $\eta_f$ is defined as in \eqref{eq:trel}, $z_f$ is as in \eqref{definition_zf} and
\begin{align}\label{RelativeInformationZweiteStudieKlassisch}
I_2(z_1):=I_2(z_1,\itwomin,I_1):=\max\big\{\itwomin,I_1\eta_f^2/z_1^2\big\}=\max\left\{\itwomin,I_1\left\{\Phi^{-1}(1-\beta)+\Phi^{-1}(1-\alpha)\right\}^2/{z_1^2}\right\}
    \end{align}
is the information for the second stage when $z_1\geq\zf$. 
\begin{example}[continued]
For our example fast-track procedure, we assume that $n_1=63=0.6\cdot n_{\delta}>n_{1,\min}=48$ is chosen as a per group sample size for the first stage, i.e. $I_1=n_1/(2\hat{\sigma}^2)=1.18$  and $t_{\xi}(I_1)=0.6$. Then $\zf=\max\{\sqrt{n_1/(2\hat{\sigma}^2)}\delrel,\Phi^{-1}(1-0.15)\}=1.09$ (see (\ref{definition_zf})). If, for instance, $\estone=1.2$ is the observed estimate after the first stage with corresponding (observed) noncentrality parameter $z_1=\sqrt{I_1}\estone=1.4$, the fast track procedure is continued because $z_1>\zf$. The sample size per group for the second stage is than according to (\ref{RelativeInformationZweiteStudieKlassisch}):
\begin{align}
    n_2(z_1):=2\hat{\sigma}^2 I_2(z_1)=\max\left\{n_{2,\min},2\hat{\sigma}^2\left\{\Phi^{-1}(1-\beta)+\Phi^{-1}(1-\alpha)\right\}^2/\estone^2\right\}.\label{formulaSecondStageSampleSizeExampleSeparate}
\end{align}
If $\itwomin$ and thus $n_{2,\min:}=2\hat{\sigma}^2\itwomin$ is set, the letter per group sample size can explicitly be calculated.
\end{example}

Recall from Section~\ref{subsectionRequirements} that to achieve the overall power $1-\beta$ at the assumed a priori effect $\delta=\xi \delrel$ for $\xi>\xi_{\min}$, the information of the first stage must be chosen greater than $\ionemin$ defined in \eqref{lambdaleftMax}. This results from the fact that the overall power is less than the power for a successful conditional registration. It can be shown that for $\xi>\xi_{\min}$ there exists for all $I_1>\ionemin$ a unique minimum information $\itwomin$ such that $\Succ(\itwomin,I_1)=1-\beta$ if $\Succ(0,I_1)\leq 1-\beta$, and if $\Succ(0,I_1)>1-\beta$ the overall power is achieved without a minimum information for the second stage, in which case we set $\itwomin=0$. The proof of this result can be found in the Appendix.

Again, for a more universal investigation, we consider relative informations rather than the absolute informations $\itwomin$ and $I_1$. Since we believe that in practice, the fixed-size sample test for the a priori assumed $\delta=\xi\delrel$ is the more relevant benchmark than the fixed-size sample test for $\delrel$, we will present the results on the $t_{\xi}$-scale, however, now for a fixed $\xi$.
Because $$\sqrt{I}\delta =\sqrt{I/I_{\delta}}\,\eta_f
=\sqrt{t_\xi(I)}\,\eta_f \quad\text{and}\quad 
\zf=\max\{\sqrt{t_{\xi}(I_1)}\eta_f/\xi, \Phi^{-1}(1-\ac)\} \text{ by \eqref{definition_zf}},
$$ we can write $\mathcal{S}(\itwomin,I_1)$ as function of $t_{\xi}(\itwomin)$ and $t_{\xi}(I_1)$ 
for given $\alpha$, $\ac$, $\beta$ and $\xi$.

The dotted lines in Figure~\ref{fig:MinInform} show for $\xi=1.75$ and $\xi=2$ in dependence on $t_{\xi}(I_1)$, where $t_{\xi}(\ionemin)<t_{\xi}(I_1)\leq t_{\xi}(\ionemax)$, the minimum relative information $t_{\xi}(\itwomin)$ required with the standard fast track process with required conditional registration to have overall power $1-\beta=0.8$. The other lines will be described and discussed in the next section. The R-code used for the calculation of $t_{\xi}(\itwomin)$ and for all subsequent images can be downloaded from \url{https://github.com/LianeKluge/Adaptive-Designs-in-Fast-Track-Registration-Processes}. As it can be seen from Figure~\ref{fig:MinInform}, when $I_1$  comes close to the critical value $\ionemin$, the second stage information becomes large. This is because when $I_1$ comes close to the critical value, the power for the conditional registration comes close to $1-\beta$. Then, the second stage information required for the anticipated overall power becomes larger to compensate for the requirement to also reject $H_0$ at the second stage at level $\alpha$. This behavior applies to any design for which $Z_1\ge z_f$ is a requirement for a final success. We note that for the standard design with two separate stages, the requirement for a specific overall power demands that the minimum information for the second stage is always greater than $I_{\delta}$, i.e.\ the sample size required for a fixed-size sample test with a power $1-\beta$ at the a priori assumed effect $\delta$. For our example fast-track procedure we have for $\xi=2$ due to the choice of $n_1=63$ that $t_{\xi}(I_1)=n_1/n_{\delta}=63/105=0.6$. According to Figure~\ref{fig:MinInform} (right) the minimum needed relative information for the second stage corresponding to $t_{\xi}(I_1)=0.6$ is $1.41=t_{\xi}(\itwomin)=n_{2,\min}/n_{\delta}$. That means the minimum needed per group sample size for the second stage is $n_{2,\min}=1.41\cdot 105=148$.

In the next subsection, we will discuss adaptive designs for fast-track procedures, which have some important advantages. Compared to the non-adaptive design, where the overall level of achieving both the conditional and the permanent registration is $\af\cdot\alpha$, the overall level is $\alpha$ for adaptive designs. This leads to greater efficiency when aiming to achieve a certain overall power. Moreover, in contrast to the standard design with separate studies, the data from the first stage can also be used for the final confirmatory hypothesis test of $H_0$. This increases efficiency even more. The performance of two adaptive designs is teased in Figure~\ref{fig:MinInform} by the dashed line and the solid line.
\begin{figure}[h]
\includegraphics[width=1\textwidth]{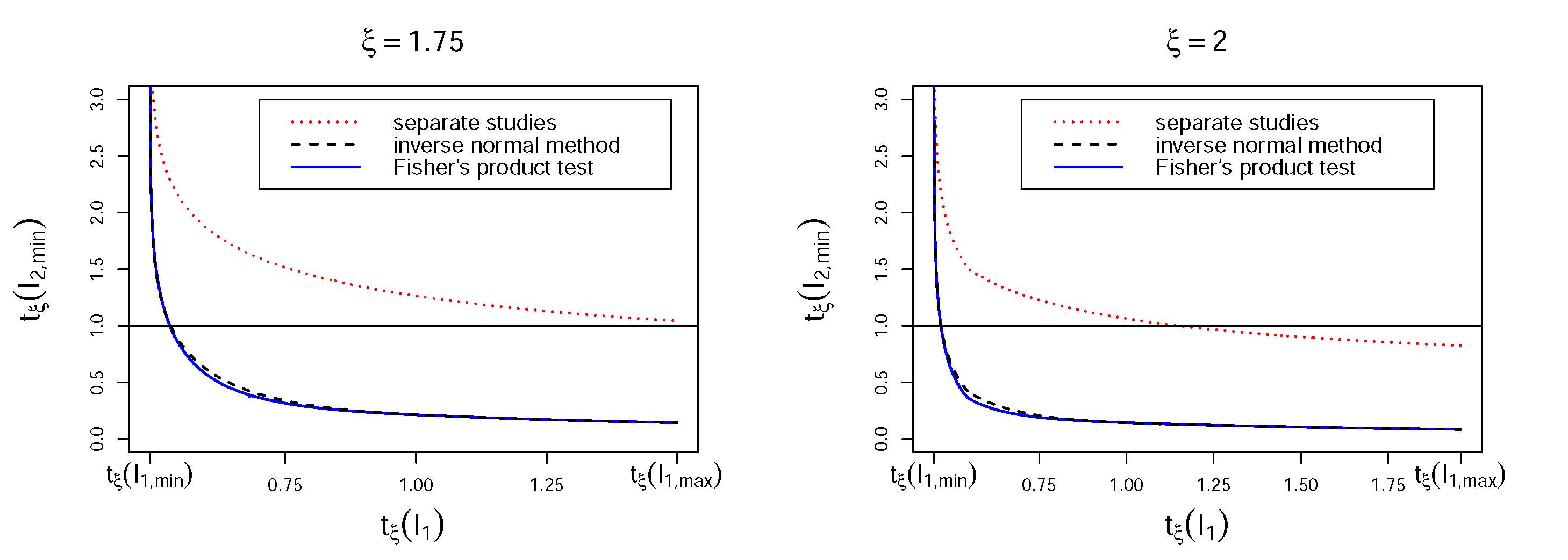}
\caption{Minimum needed information for the second stage to achieve overall power of $1-\beta=0.8$ for $\xi=1.75$ (left) and $\xi=2$ (right) for the common non-adaptive design and the adaptive designs, inverse normal method and Fisher’s product test.}
\label{fig:MinInform}
\end{figure}

\subsection{Adaptive Study Designs for Fast-Track Processes}\label{subsectionAdaptiveDesigns}
We will start with a brief recap of confirmative adaptive designs. Confirmatory (two-stage) adaptive designs are organized into two sequential stages of data collection, with an interim analysis at the end of the first stage (see, for instance,  the works of Wassmer and Brannath \cite{BuchWassmerBrannath2016} and Bretz et al \cite{ueberblickAdaptiveDesignsBretz}). The interim data can be used to decide whether to terminate or continue the trial. The study can be terminated due to futility (with an acceptance of $H_0$) or due to success, i.e.\ early rejection of a hypothesis. 
So far, adaptive study designs play an important role in clinical drug development (see, for example, the works of Pallmann et al \cite{pallmann2018adaptive} and Cerqueira et al \cite{cerqueira2020adaptive}).  
They allow the user to learn from the data of an ongoing trial by performing design modifications at interim when the trial is not terminated. Thus, adaptive designs can be considered as extensions of group sequential designs \cite{BuchWassmerBrannath2016}. Possible design modifications at interim include sample size re-estimation, selection of a primary endpoint, or dropping of treatment arms in combined phase II/III trials while controlling the type I error rate. Moreover, the data from both stages can be used for the final confirmatory hypothesis test. That means, in contrast to the common situation of an external pilot study that prepares an independent confirmatory study, adaptive designs permit the use of the pilot data for the final confirmatory hypothesis test. As we will see in this section, in the context of fast-track procedures, the use of adaptive designs can increase efficiency substantially.
%in comparison to the common design with two separate stages.  

There are various strategies for constructing adaptive tests, for instance, the \emph{combination test} approach introduced by Bauer \cite{bauer1989multistage} and Bauer and Köhne \cite{bauer1994evaluation} and the \emph{conditional error function} approach introduced by Proschan and Hunsberger \cite{proschan1995designed}. There is also a direct relationship between them \cite{BuchWassmerBrannath2016}. In this paper, we work with the conditional error function approach. The principle is as follows: 
Significance levels with $0\le \alpha_1<\alpha_0\le 1$ for early rejection and acceptance, respectively, may be predefined. Furthermore, 
a non-decreasing function $A:\mathbb{R}\rightarrow[0,1]$, referred to as the \emph{conditional error function} must be specified in advance. Let $Z_k=\hat{\vartheta}_k\sqrt{I_k}\sim \mathcal{N}(\vartheta\sqrt{I_k},1)$ denote the z-statistic calculated from the stage $k$ data for testing a hypothesis $H_0$. If after the first stage $Z_1\geq\Phi^{-1}(1-\alpha_1)$, the trial is stopped and $H_0$ is rejected. If $Z_1<\Phi^{-1}(1-\alpha_0)$, the trial is stopped due to futility with an acceptance of $H_0$. In the case $\Phi^{-1}(1-\alpha_0)\leq Z_1< \Phi^{-1}(1-\alpha_1)$ the trial is continued to stage 2 and then $H_0$ is rejected iff $Z_2\geq \Phi^{-1}(1-A(Z_1))$. With our assumptions under $H_0$, the distribution of $Z_1$ and the conditional distribution of $Z_2$ given $Z_1$ are stochastically smaller than or equal to the standard normal distribution. Thus, if the trial is continued to the second stage, the conditional rejection probability is less than or equal to $A(Z_1)$. If $Z_1\geq\Phi^{-1}(1-\alpha_1)$ is observed after the first stage, the conditional rejection probability $A(Z_1)$ can be set to $1$ and in the case of stopping the trial due to a \emph{binding futility stopping rule} to  $0$. The Type I error rate for $H_0$ is under control if this conditional error function fulfills
 \begin{align}
    \int_{-\infty}^{+\infty}\, A(z_1)\phi(z_1)\mathrm{d}z_1 =\alpha_1+\int_{\Phi^{-1}(1-\alpha_0)}^{\Phi^{-1}(1-\alpha_1)}A(z_1)\phi(z_1)\mathrm{d}z_1\leq \alpha~.\label{generalConditionCef}
\end{align} 
For flexible decision making with regard to the futility stopping, stopping with $Z_1<\Phi^{-1}(1-\alpha_0)$ can be ignored by not setting $A$ to 0 in this case, or equivalently, replacing the lower stopping boundary with~$-\infty$
in \eqref{generalConditionCef}. The study can still be stopped with acceptance of $H_0$. A futility level $\alpha_0<1$ that is ignored in (\ref{generalConditionCef}) is often called \emph{non-binding} futility level, and the resulting flexibility comes with the price of reduced power (if the futility stop is executed). In this paper, we consider scenarios with binding (Section~\ref{NewChapterStudyDesignsFastTrack}) and non-binding (Section~\ref{ChapterNonBindingFutility}) futility stopping. 

The Type I error rate is still under control if the design for the first stage is determined or modified after the first stage based on the interim data, given that the conditional distribution of the resulting second stage z-statistic $Z_2$ given $Z_1$ is stochastically smaller than or equal to the standard normal distribution under $H_0$. This follows from what is sometimes denoted as \emph{conditional invariance principle}; see e.g. the work of Brannath et al \cite{brannath2007multiplicity}. 

One example of a design modification is adjusting the second stage sample size, or more generally, the second stage information. A typical goal of this is to achieve a sufficiently high conditional probability, i.e.\ a certain \emph{conditional power}, of rejecting $H_0$ when the trial is continued to the second stage. 
%A high conditional power corresponds to a high chance for being successful at the second stage. 
Thus, in the context of a conditional DiGA registration, a high conditional power at the end of the first stage should be a convincing argument for a successful subsequent stage in proving that the DiGA has a positive healthcare effect. It thereby meets the corresponding requirement mentioned on pages~109-110 in the DiGA guideline \cite{leitfadendiga}.

The conditional power is the conditional probability of rejecting $H_0$ in the second stage, conditional on the first stage data. For the true effect  $\vartheta$, the conditional power is given by 
\begin{align*}
    \mathbb{P}_{\vartheta}\left(Z_2\geq \Phi^{-1}(1-A(z_1))\right)=1-\Phi\left(\Phi^{-1}(1-A(z_1))-\vartheta\sqrt{I_2}\right)~.
\end{align*}
Since the true effect is unknown, the sample size or information of the second stage can be calculated such that a conditional power of $1-\beta$ is achieved at the observed estimate $\estone$ or noncentrality parameter $z_1$ of the first stage. There are other strategies for adjusting the information used; see, for example, Section~7.4 in the work of Wassmer and Brannath \cite{BuchWassmerBrannath2016}.

We will now describe how two-stage adaptive designs can be used in the two-stage fast-track procedures. For the application for conditional registration, the data from the first stage is used, and for the application for permanent registration, the data from the first and second stages are used. We consider the case where there is no possibility of an early rejection after the first stage, which means we set $\alpha_1:=0$. We demand that the conditional error function $A$ fulfills $A(z_1)\leq 0.5$ for all $z_1\in\mathbb{R}$. This ensures that $H_0$ can only be rejected after the second stage when $Z_2\geq 0$, i.e. a non-negative effect $\hat{\vartheta}_2\geq 0$ is observed. After the second stage, the hypothesis $H_0$ is rejected iff $Z_2\geq \Phi^{-1}(1-A(Z_1))$. We will use two different kinds of conditional error functions. The first one is a slight modification of the conditional error function of the \emph{inverse normal method} with equal weights \cite{lehmacher1999adaptive}. The second considered conditional error function is a slight modification of the conditional error function of \emph{Fisher's product test} \cite{fisher1932statistical} (proposed in the works of Bauer \cite{bauer1989multistage} and Bauer and Köhne \cite{bauer1994evaluation} for use in adaptive designs): We define for $z_0\in\mathbb{R}\cup\{-\infty\}$
\begin{align}
A_{I,z_0}(z_1):=
    \begin{cases}
			\bigg\{1-\Phi\left(\frac{\Phi^{-1}(1-c_{I}(z_0))-w_1 z_1}{w_2}\right)\bigg\} \wedge 0.5, & \text{if $z_1\geq z_0$}\\
            0, & \text{otherwise}
		 \end{cases}\label{cefInverNormal}
\end{align}
where $w_1=w_2=\sqrt{0.5}$ and 
\begin{align}
A_{F,z_0}(z_1):=
    \begin{cases}
			\frac{c_F(z_0)}{1-\Phi(z_1)} \wedge 0.5, & \text{if $z_1\geq z_0$}\\
            0, & \text{otherwise}
		 \end{cases}~.\label{cefFisher}
\end{align}
The constants $c_{I}(z_0)$ and $c_F(z_0)$ are chosen such that they are nonnegative and fulfill (with $A:=A_{I,z_0}$ or $A:=A_{I,z_0}$) condition (\ref{generalConditionCef}). The chosen $z_0$ can be interpreted as a binding futility bound on the z-scale with corresponding early stopping level $\alpha_0=1-\Phi(z_0)$. If there is no possibility for permanent registration after a non-successful conditional registration, then the binding futility boundary $z_0=z_f$ is a logical choice. This implies that the considered conditional error functions depend on $t_{\xi}(I_1)$, $\xi$ (and the fixed $\eta_f$) because $z_f$ depends on these parameters.

Similar to Section~\ref{subsectionConsequences}, we will investigate the overall power. The requirements for conditional registration remain unchanged: $\hat{\vartheta}_1\geq\delrel$ and $p_1\leq\ac$ (below $\ac=0.15$) needs to be shown. We also assume that after the first stage, the second stage is powered such that a conditional power of $1-\beta$ is achieved in the observed noncentraility parameter $z_1$ of the first stage. Thus, we define 
\begin{align}\label{adaptiveStudyDesigInformQuot}
    I_2^{A}(z_1):=I_2^{A}(z_1,\itwomin,I_1):=\max\left\{\itwomin,{I_1\left\{\Phi^{-1}(1-\beta)+\Phi^{-1}\big(1-A(z_1)\big)\right\}^2}/{z_1^2}\right\}, \quad z_1\geq\zf,
\end{align}
where the minimum information $\itwomin$ remains to be defined. Because $H_0$ is tested at the second stage by a z-test at level $A(z_1)$, we obtain the overall power 
\begin{align}\label{allgemeineConditionOverallPowerAdaptivD}
\nonumber
\Succ_A(\itwomin,I_1) & :=\mathbb{P}_{\delta}\left(Z_1\geq z_f,\text{reject~}H_0\right)\\
& =\int\limits_{z_f}^{\infty}
\left\{
1-\Phi\left(\Phi^{-1}(1-A(z_1))-\eta_f{\sqrt{I_2^{A}(z_1)}}/{\sqrt{I_{\delta}}}\right)
\right\}
\phi\big(z_1-\sqrt{I_1}\delta\big) d z_1~.
\end{align}

Similar to Subsection~\ref{subsectionConsequences}, for achieving an overall power of at least $1-\beta$, the information $I_1$ must be strictly greater than $\ionemin$. Furthermore, it can be shown that for $\delta >\delrel$ and $\eta_f$ and all $I_1>\ionemin$ with $\Succ_A(0,I_1)\le 1-\beta$ there exists a unique minimum information $\itwomin$ such that $\Succ_A(\itwomin,I_1)=1-\beta$. For $I_1$ with $\Succ_A(0,I_1)\geq 1-\beta$, there is no minimum information needed for the second stage. In the latter case, we set $\itwomin=0$. For proof, we refer to the Appendix. For our example fast-track procedure, the calculation of the second-stage sample size can be done similarly to (\ref{formulaSecondStageSampleSizeExampleSeparate}) by replacing $\alpha$ with $A(z_1)$.
%\begin{example}[continued]
%For our example fast-track procedure, the calculation of the information in (\ref{adaptiveStudyDesigInformQuot}) means that for an observed healthcare effect $\estone$ such that the corresponding noncentrality parameter fulfils $z_1=\sqrt{I_1}\estone\geq\zf$ the second stage sample size per group is given by (with $n_{2,\min}:=2\hat{\sigma}^2\itwomin$)
    %$$n_2^{A}(z_1):=2\hat{\sigma}^2 I_2^{A}(z_1)=\max\left\{n_{2,\min},2\hat{\sigma}^2\{\Phi^{-1}(1-\beta)+\Phi^{-1}(1-A(z_1))\}^2/\estone^2\right\}.$$
%\end{example}

The dashed and solid lines in Figure~\ref{fig:MinInform} show the minimum required sample sizes relative to $I_\delta$, i.e.\ $t_{\xi}(\itwomin)$, for the two different $\xi$
with the conditional error functions $A_{I,z_f}$ and $A_{F,z_f}$, respectively. It can be seen that, with regard to the minimum needed information for the second stage, the control of the overall power ($1-\beta=0.8$) at $\delta=\xi\delrel$ comes for the adaptive designs with a much smaller requirement for the minimum information than for the non-adaptive design. For the adaptive designs, the minimum second stage information becomes quickly less than the information $I_\delta$ required for the fixed-sample size test and is less than 50\% of this information if the interim information is $0.55\cdot I_\delta$ or more. For the example fast-track procedure, recall that for $\xi=2$ the relative information  $t_{\xi}(I_1)=n_1/n_{\delta}=63/105=0.6$ corresponds to the chosen first stage sample size per group $n_1=63$. According to Figure~\ref{fig:MinInform} (right), the minimum relative information needed for the second stage is now $t_{\xi}(\itwomin)=0.33$ and $t_{\xi}(\itwomin)=0.29$ for the inverse normal method and Fisher's product test, respectively. This implies $n_{2,\min}=0.33\cdot n_{\delta}\approx 35$ and $n_{2,\min}=0.29\cdot n_{\delta}\approx 31$, respectively. These sample sizes are much lower than $n_{2,\min}=148$ needed for the non-adaptive design.

\subsection{Comparison of different study designs}\label{subsectionComparisonDesigns}
We now want to compare the efficiency of conducting an adaptive design (with conditional error function $A\leq 0.5$) to a design with two separate stages with regard to the expected and maximum information under the same assumptions as in the previous sections. In particular, this implies that for a given interim information $I_1$, the applicant achieves the overall power of $1-\beta$ at the a priori assumed effect $\delta=\xi\cdot\delrel$ by choice of the required minimum second stage information $\itwomin$. We will compare the expected second stage information in dependence of $I_1>\ionemin$ assuming that the true effect is~$\delta$. 
%Additionally, we will compare the maximum information for the second stage. 
Like for the first stage, expected and maximum informations will be compared for different $\xi$ and again relative to $I_\delta$, i.e.\ on the $t_\xi$-scale for a most general comparison which only depends on $\ac$, $\alpha$, $\beta$ and $\xi$.

Since the second stage information \eqref{adaptiveStudyDesigInformQuot} is decreasing in $A(z_1)$, which itself is a non-decreasing function in $z_1$, the maximum sample size is achieved for $z_1=z_f$ or equals $\itwomin$.

Figure~\ref{MeanInformationBothXi} shows for $\xi=1.75$ and $\xi=2$ the (relative) mean information for the second stage for the two adaptive designs with conditional error functions $A_{I,\zf}$ and $A_{F,\zf}$ and the non-adaptive design. It can be seen that the adaptive designs have for all $t_{\xi}(I_1)$ with $t_{\xi}(\ionemin)<t_{\xi}(I_1)\leq t_{\xi}(\ionemax)$ a lower (relative) mean information. Furthermore, for the adaptive designs for the most $t_{\xi}(I_1)$, the relative mean information is (clearly) less than $1$. Thus, the mean information needed for the second stage is less than the information needed for a (single) fixed-size sample test with power $1-\beta$ (direct application for permanent registration). Furthermore, it can be numerically calculated that the mean information over both stages is not much larger than $1$. The corresponding Figures can be found in the Supplement.

\begin{figure}[h]
\includegraphics[width=1\textwidth]{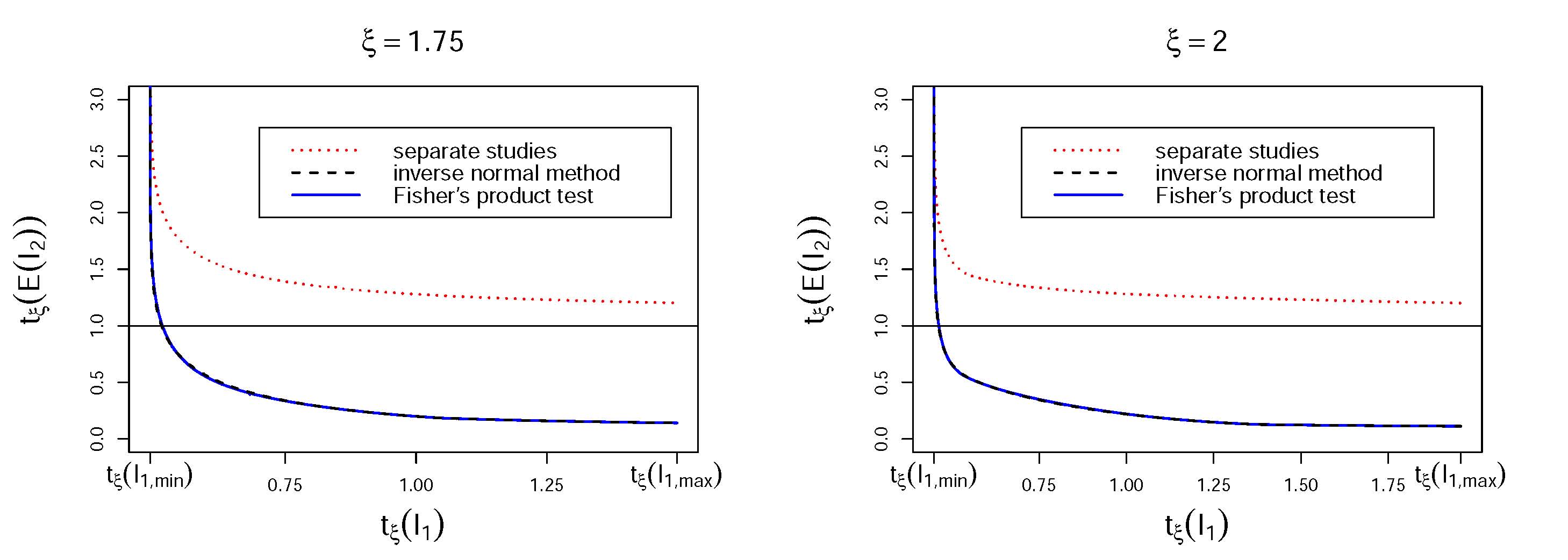}
\caption{Visualization of the mean information for the second stage for $\xi=1.75$ (left) and $\xi=2$ (right) for adaptive designs and a non-adaptive design.}\label{MeanInformationBothXi}
\end{figure}

The maximum informations of the considered designs are shown in Figure~\ref{fig:MaximumInformationBothXi}. Figure~\ref{MeanInformationBothXi} and Figure~\ref{fig:MaximumInformationBothXi} have in common that if $t_{\xi}(I_1)$ comes closer to $t_{\xi}(\ionemin)$, the mean and maximum information for the second stage become large and tend to infinity. This is
because the minimum information $t_{\xi}(\itwomin)$ tends to infinity when $t_{\xi}(I_1)$ comes closer to the critical value $t_{\xi}(\ionemin)$ as described in the previous subsections. As can be seen from Figure~\ref{fig:MaximumInformationBothXi} (right) for $\xi=2$, the maximum information of all designs has a linear increasing part between $0.45$ and $0.55$. This is because, for $\xi=2$ for all $t_{\xi}(I_1)\leq 0.55$, the border $\zf$ is the same: It holds $\zf=1-\Phi^{-1}(1-\ac)= 1.04$ (i.e. the demand $p_1\leq\ac$ is stricter than having a numeric relevant estimation). This implies that the maximum information is a locally linear function of $t_{\xi}(I_1)$. For $\xi=1.75$, there is no such interval on which $\zf$ is constant. Thus, there is no increasing part in the maximum information in Figure~\ref{fig:MaximumInformationBothXi} (left). For both adaptive designs for $\xi=1.75$ and $\xi=2$, the relative maximum information remains constant for $t_{\xi}(I_1)>1.06$ and $t_{\xi}(I_1)>1.38$, respectively, because the conditional error functions are then constantly equal to $0.5$. Figure~\ref{fig:MaximumInformationBothXi} shows that the adaptive designs are also convincing with regard to the maximum information when compared with the non-adaptive design. We analyze this result.
\begin{figure}[h]
\includegraphics[width=1\textwidth]{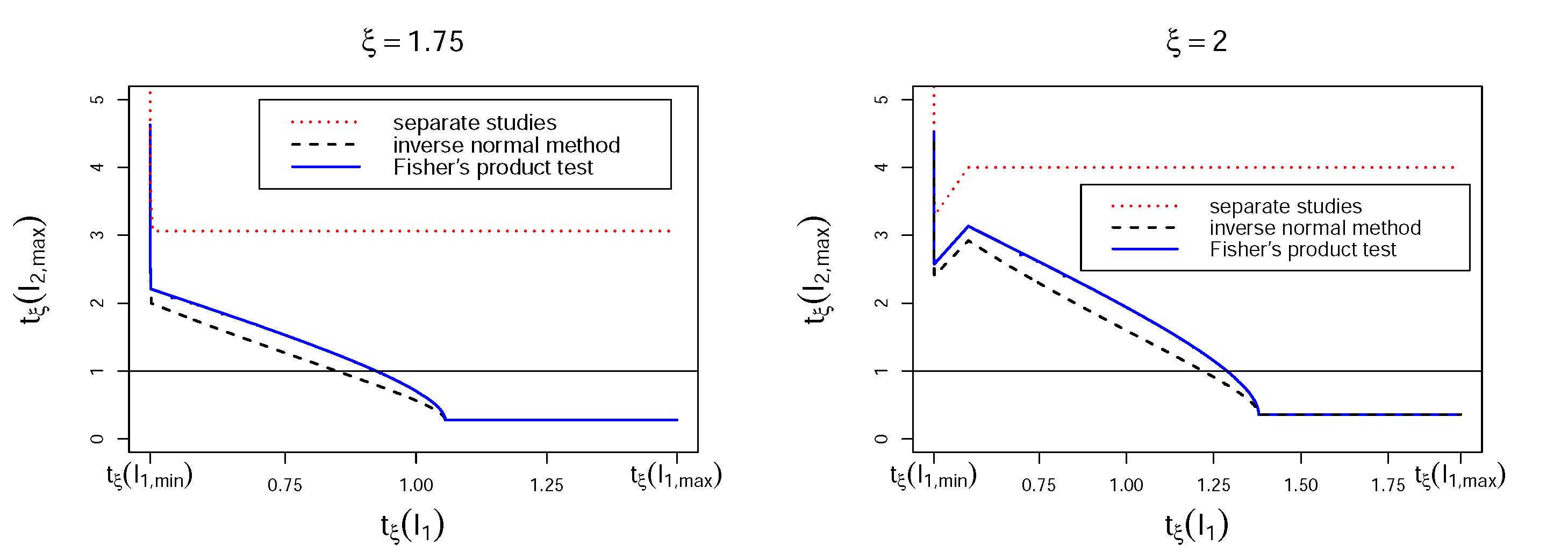}

\caption{Visualization of the maximum information for the second stage for $\xi=1.75$ (left) and $\xi=2$ (right) for adaptive designs and a non-adaptive design.}\label{fig:MaximumInformationBothXi}
\end{figure}

Recall that the maximum sample size is achieved for $z_1=\zf$ or equals $\itwomin$. It seems obvious that an adaptive design with conditional error function $A$ has lower maximum information than the non-adaptive design if $A(\zf)>\alpha$ is fulfilled. Nevertheless, since the maximum information can also be equal to $\itwomin$, a formal proof is needed (see the Appendix). For our considered conditional error functions for $\xi=1.75$ and $\xi=2$ for all $I_1>\ionemin$ it holds indeed $A_{I,\zf}(\zf)>\alpha$ and $A_{F,\zf}(\zf)>\alpha$ which gives a formal explanation for the lower maximum informations in Figure~\ref{fig:MaximumInformationBothXi}. It is important to note that the binding futility bound enables the possibility to construct conditional error functions $A$ which fulfill for all $z_1\geq\zf$ that $A(z_1)>\alpha$ and thus $A(\zf)>\alpha$ while having the Type I error controlled at level $\alpha$. In the most extreme case, the conditional error function, which is constant equals ${\alpha}/{\af}>\alpha$, can be chosen with the binding futility bound $Z_1<\zf$. This design would probably only be accepted if the authorities were informed about it before the start of data collection of the first stage.

\begin{example}[continued]
For our example fast-track procedure for the a priori assumed $\xi=2$ and the considered first stage sample size of $n_1=63$ (i.e. $t_{\xi}(I_1)=0.6$) Figure~\ref{fig:MaximumInformationBothXi} (right) yields $t_{\xi}(I_{2,\max})=4,2.78,3$ for the non-adaptive design, the inverse normal method and Fishers' product test, respectively. Using the relation $t_{\xi}(I_{2,\max})=n_{2,\max}/n_{\delta}$ and the results from the previous sections we have that the sample size per group for the second stage lies between: $n_{2,\min}=148$ and $n_{2,\max}=420$ (non-adaptive design), $n_{2,\min}=35$ and $n_{2,\max}=292$ (inverse normal method), $n_{2,\min}=31$ and $n_{2,\max}=315$ (Fisher's product test). Figure~\ref{MeanInformationBothXi} (right) yields $t_{\xi}(E(I_2))=1.41$ (non-adaptive design) and $t_{\xi}(E(I_2))=0.47$ (adaptive designs). This implies a second stage mean sample size per group of $E(n_2)=1.41\cdot n_{\delta}=149$ (non-adaptive design) and $E(n_2)=0.47\cdot n_{\delta}\approx 50$ (adaptive designs). The average sample sizes of the second stage are close to the minimum sample sizes, which implies that the distribution of the sample size is left-skewed in the sense that sample sizes close to the maximum ones are rare.
\end{example}
\section{Combination of conditional and permanent registration}\label{ChapterNonBindingFutility}
Recall from Section~\ref{ConsequencesRequirementsBindCondRegits} that, under the assumption that the applicants cannot apply for a permanent registration after an unsuccessful attempt to obtain conditional registration, the fast-track procedure does not make sense for all $\xi\le \xi_{\min}=1.43$. Then, the overall success probability of achieving permanent registration is given by $\mathbb{P}_{\delta}(\text{reject~}H_0)=\mathbb{P}_{\delta}(Z_1\geq\zf, \text{reject~}H_0)$, and we have seen that for $\xi\le \xi_{\min}=1.43$ there is no first-stage information $I_1$ for which an overall success probability of $1-\beta=0.8$ can be achieved, while the requirements for a conditional registration are less strict than those for a direct permanent registration (i.e., $\ionemin>\ionemax$). Motivated by this, we relax the assumption that a conditional registration is a requirement for permanent registration. We assume that, based on the pilot data, we can decide to either apply for conditional registration or abstain from this application and continue directly with the second stage for permanent registration. We will work again with $\ac=0.15$. This Section presents possible designs and design inspirations for all a priori effect assumptions $\xi\geq 1$, in particular for an a priori assumed (or true) effect of $\delta/\delrel\leq\xi_{\min}$.

\subsection{Suggestion for a combination strategy}

A reasonable strategy for this is to apply for conditional registration whenever $Z_1\geq \zf$ (since the requirements are met), and otherwise to abstain from this application and continue the two-step approach with the second stage for permanent registration. This process is illustrated in Figure~\ref{twoArms}. In order to meet the overall type I error rate $\alpha$, we utilize adaptive design methodology and use for both cases an overall conditional error function $A$ satisfying \eqref{generalConditionCef}. We will discuss possible choices for $A$ in the following. 

\begin{figure}[h]
\centering
%\resizebox{!}{3.5cm}{\includegraphics[width=1\textwidth]{Kluge_Figure6.eps}}
\resizebox{!}{3.5cm}{
\begin{tikzpicture}
\node[rectangle, draw=none, minimum size=1pt] at (1, 0.5) {\bf 1.\ Stage};
\node[rectangle, draw=none, minimum size=1pt] at (7.5, 2.3) {\bf 2.\ Stage};
\draw[-{Latex[length=3mm, width=4mm]}] (0,0) -- (3,0);
\filldraw[black] (3.2,0) circle (4pt);
\draw[-{Latex[length=3mm, width=4mm]}] (3.4,0) -- (6,1) node[pos=0.4, above, sloped] {$Z_1\geq\zf$};
\draw[-{Latex[length=3mm, width=4mm]}] (3.4,0) -- (6,-1) node[pos=0.4, below, sloped] {$Z_1<\zf$};
\node[rectangle, draw, align=left] at (12, 1) {
\hspace{1em}$I_2(z_1)=\max\left\{\itwomin,I_1\left\{\Phi^{-1}(1-\beta)+\Phi^{-1}(1-A(z_1))\right\}^2/{z_1^2}\right\}$ \\[.25em] 
\hspace{1em}where $\itwomin$ such that \\[.25em]
\hspace{11em}$\mathbb{P}_{\delta}\left(%\text{reject~}H_0
Z_2\ge \Phi^{-1}\big(1-A(Z_1)\big)\middle|Z_1\geq\zf\right)\ge 1-\beta$\hspace{1em}\ };
\node[rectangle, draw, align=left] at (12, -1) {
\hspace{1em}$I_{2}(z_1)=\itwoconst$ where $\itwoconst$ such that \\[.25em]
\hspace{11em}$\mathbb{P}_{\delta}\left(%\text{reject~}H_0
Z_2\ge \Phi^{-1}\big(1-A(Z_1)\big)\middle|Z_1<\zf\right)= 1-\beta$\hspace{1em}\ };
\end{tikzpicture}}
\caption{Combination strategy with a data-driven decision to either apply or abstain from the application for conditional registration, where the overall type I and II errors are under control.}\label{twoArms}
\end{figure}
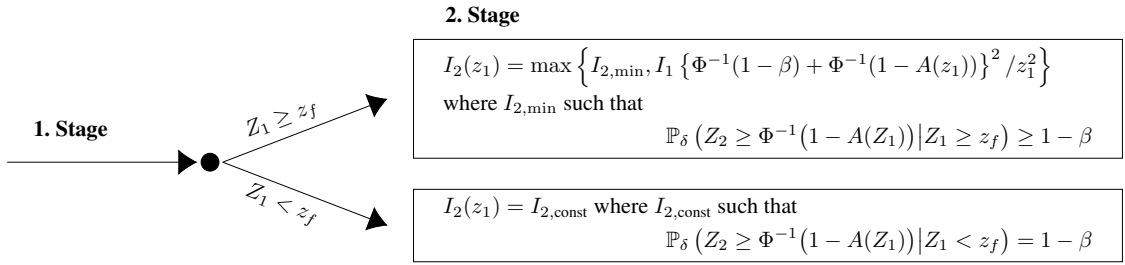

As indicated in the lower branch of Figure~\ref{twoArms}, we use for $Z_1<z_f$ a constant second-stage sample size $I_{2,\text{const}}$, namely the one that provides us with the conditional success probability of $1-\beta$ given $Z_1<z_f$, for the a priori assumed effect $\delta=\xi\delrel$.  We are not aiming for a specific conditional power in this case, because this is costly in terms of sample size, particularly if $Z_1<z_f$, and there is no conditional registration that needs to be extended with a sufficiently high (conditional) probability. We are using the a priori assumed effect $\delta$, because a mid-trial effect re-estimation appears questionable without conditional power control. 

In order to also achieve an overall success probability  of at least $1-\beta$ at $\delta$ (i.e.\ over both branches in Figure~\ref{twoArms}), we need to have a conditional success probability of at least $1-\beta$ also in the upper branch, i.e.\ we need to require that 
\begin{align} \label{UpperBranchPower}
\mathbb{P}_{\delta}\left(Z_2\ge \Phi^{-1}\big(1-A(Z_1)\big)\middle|Z_1\geq\zf\right)\ge  1-\beta\,.
\end{align}
Moreover, if $Z_1\geq \zf$, we choose, as in Section~\ref{NewChapterStudyDesignsFastTrack}, the second-stage sample size such that the conditional power $1-\beta$ is reached at the estimated effect estimate. As in Section~\ref{NewChapterStudyDesignsFastTrack}, condition \eqref{UpperBranchPower} can be satisfied with a sufficiently large minimum second stage information $\itwomin$. In contrast to the situation in Section~\ref{NewChapterStudyDesignsFastTrack}, we find now for all $\xi>  1$ such minimal second-stage information. This is because, for all $\delta>0$, the left-hand side of \eqref{UpperBranchPower} converges to 1 with increasing second-stage information.  

In the Appendix, it is shown that, for a design with conditional error function $A$ for all $\xi\geq 1$ and $I_1>0$, there exists a unique minimal information $\itwomin$ such that (\ref{UpperBranchPower}) can be met with equality if $\Succ_A(I_1,0)\leq (1-\beta)\cdot\mathbb{P}_{\delta}(Z_1\geq\zf)$, where $\Succ_A$ is as in \eqref{allgemeineConditionOverallPowerAdaptivD}. If $\Succ_A(I_1,0)> (1-\beta)\cdot\mathbb{P}_{\delta}(Z_1\geq\zf)$, condition (\ref{UpperBranchPower}) is achieved with an arbitrarily small information for the second stage. In this case, we can set $\itwomin=0$. 

We note that the use of the constant conditional error function, $A(z_1)=\alpha$ for all $z_1$, corresponds to the standard fast-track process if $Z_1\geq\zf$. For $Z_1<z_f$, it results in a new (second-stage) study for permanent registration with information $I_{2,\text{const}}=I_{\delta}$ and without using the pilot data. A potentially more efficient approach that utilizes the pilot data also for the permanent registration (in both cases) is to use a non-constant conditional error function like $A_{I,-\infty}$ or $A_{F,-\infty}$. Note that these conditional error functions should not include the binding futility stopping rule $Z_1<\zf$, because we are testing now $H_0$ also if $Z_1<\zf$. 

The proposed strategy of performing no adaptations at interim and so use a constant information $\itwoconst$ for the second stage whenever $Z_1<\zf$, particularly suggests to %consider the following test procedure: If $Z_1<\zf$ is observed, $H_0$ is tested by a
use a fixed-size sample z-test with %using 
the cumulative sample of both stages. That means, if $Z_1<\zf$ is observed, $H_0$ is rejected if 
\begin{align}
    Z=\sqrt{I_1/(I_1+\itwoconst)}Z_1+\sqrt{\itwoconst/(I_1+\itwoconst)}Z_2\geq \Phi^{-1}(1-\alpha)~.\label{fixedSizeSampleTest}
\end{align}
%For our example fast-track procedure we have $I_1=n_1/(2\sigma^2)$ and $\itwoconst=n_{2,\mathrm{const}}/(2\sigma^2)$ with per group sample sizes $n_1$ and $n_{2,\mathrm{const}}$. Thus, (\ref{fixedSizeSampleTest}) is equivalent to $$Z=\sqrt{n_1/(n_1+n_{2,\mathrm{const}})}Z_1+\sqrt{n_{2,\mathrm{const}}/(n_1+n_{2,\mathrm{const}})}Z_2\geq \Phi^{-1}(1-\alpha)~.$$

It can easily be seen (and is well known) that (\ref{fixedSizeSampleTest}) is equivalent to $Z_2\geq\Phi^{-1}(1-\tilde{A}_{Z,\alpha}(Z_1))$ for the specific inverse normal conditional error function 
$$\tilde{A}_{Z,\alpha}(z_1):=1-\Phi\left(\frac{\Phi^{-1}(1-\alpha)-\sqrt{I_1/(I_1+\itwoconst)}z_1}{\sqrt{\itwoconst/(I_1+\itwoconst)}}\right),\quad z_1\in\mathbb{R},$$ 
which satisfies level condition (\ref{generalConditionCef}). To complete the design, for the upper branch, the conditional error function must be extended in such a way that the type I error rate does not exceed $\alpha$. Here we choose $\tilde{A}_{Z,\alpha^{\prime}}$ where $\alpha\le \alpha^{\prime}\le 1$ is the largest possible value such that the Type I error rate of
\begin{align}\label{definition_az}
    A_Z(z_1):= 0.5\wedge\begin{cases}
			\tilde{A}_{Z,\alpha^{\prime}}(z_1), & \text{if $z_1\geq \zf$}\\
            \tilde{A}_{Z,\alpha}(z_1), & \text{if $z_1<\zf$}~.
		 \end{cases}
\end{align}
is equal to or less than $\alpha$ (i.e.\ (\ref{generalConditionCef}) holds true).
For all examples considered in the following, it can be seen that the significance level $\alpha$ is fully used, i.e. for $A_Z$ equality holds in (\ref{generalConditionCef}). Note that the proposed strategy is based on the Müller and Schäfer method \cite{muller2001adaptive} with the difference that the conditional error function $A_Z$ is truncated here by $0.5$.
In line with the proposed strategy in Figure~\ref{twoArms}, we assume that $\itwomin$ and $\itwoconst$ are chosen such that the aimed conditional success probabilities specified in Figure~\ref{twoArms} are achieved.

In the following, we investigate and compare the performance of the conditional error functions $A_Z$, $A_{I,-\infty}$, and $A_{F,-\infty}$ and the non-adaptive design.

\begin{figure}[h]
\includegraphics[width=1\textwidth]{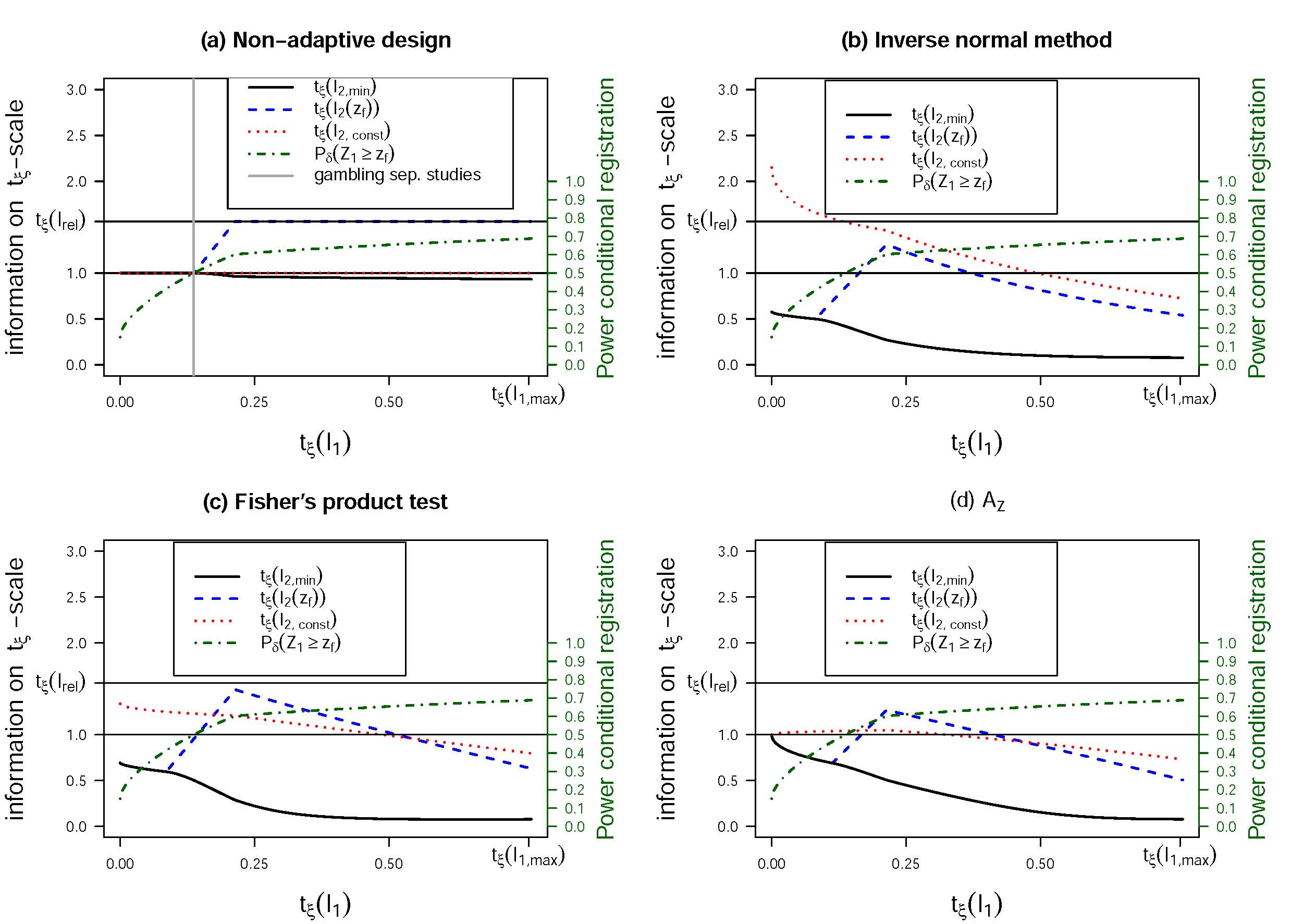}
\caption{Comparison of minimum and maximum information for the second stage of the suggested combination strategy with a non-adaptive and the adaptive designs when $\alpha_c=0.15$ and $\xi=1.25$. \label{fig:BS}}
\end{figure}

\subsection{Investigation of the second stage information $I_{2,\text{const}}$ for the case $Z_1<z_f$}\label{subsectionInvestigationI_2const}

The dotted lines in the plots (a) to (d) of Figure~\ref{fig:BS} show for the case $Z_1<z_f$ the constant second-stage information $t_\xi(I_{2,\text{const}})$ in dependence of $t_\xi(I_1)$ when
$\xi=1.25$ and $\alpha_c=0.15$. The plots are for the non-adaptive design  ($A\equiv\alpha$) and the adaptive designs with conditional error functions $A_{I,-\infty}$, $A_{F,-\infty}$ and the specific inverse normal conditional error function $A_Z$ that is line with $I_{2,\text{const}}$ defined in \eqref{definition_az}, respectively. In the Supplement, a figure is provided that allows a direct comparison of $t_{\xi}(\itwoconst)$ between the designs. For the non-adaptive design we get $I_{2,\text{const}}=I_{\delta}$ for all $t_\xi(I_1)$. We can see that for the adaptive designs, the information $I_{2,\text{const}}$ decreases with $t_\xi(I_1)$ and becomes smaller than $I_{\delta}$ if $t_\xi(I_1)>0.5$  (for $A_{I,-\infty}$ and $A_{F,-\infty}$) or $t_\xi(I_1)>0.33$ (for $A_{Z}$). 

For the numerical determination of $\itwoconst$, we used a bisection search. This requires that the conditional success probability is non-decreasing in $\itwoconst$. For the non-adaptive designs and for the adaptive designs with $A_{I,-\infty}$ and $A_{F,-\infty}$, this can easily be seen. For $A_Z$, the monotonicity is proven in the Appendix.

\begin{example}[continued]
For our example fast-track procedure, we will work with $\delrel=1.4$ from now on, which is also reasonable based on Pöttgen et al \cite{pottgen2018randomised}. We further assume that the planning is now based on the results of Van Kessel et al \cite{kesselPlaningStudy}, which were used for the planning of the confirmatory study of the DiGA elevida. Then, a reasonable and conservative a priori assumption for the healthcare effect is $\delta=1.75=1.25\cdot\delrel$, i.e. $\xi=1.25$. Additionally, we continue to assume that a parallel group design for both conditional and permanent registration, each with balanced samples, is used. Now, a information of $I_{\delta}:=\eta_f^2/\delta^2=2.8^2/1.75^2=2.56$, i.e. a per group sample size of $n_{\delta}:=2\hat{\sigma}^2 I_{\delta}\approx 137$, is required for a fixed-size sample test with significance level $\alpha=0.025$ and power $1-\beta=0.8$. This corresponds to the per group sample size of the study conducted for permanent registration of the DiGA elevida \cite{pottgen2018randomised}. If now $0.5=t_{\xi}(I_1)=n_1/n_{\delta}$, i.e. $n_1=0.5\cdot137=69$ would be the chosen first stage sample size per group, Figure~\ref{fig:BS} yields the sample size for the final stage without conditional registration $n_{2,\mathrm{const}}=t_{\xi}(I_{2,\mathrm{const}})\cdot n_{\delta}=1\cdot n_{\delta}=137$ with the non-adaptive design, the inverse normal method and Fisher's product test, or $n_{2,\mathrm{const}}=0.9\cdot n_{\delta}\approx 124$ with $A_Z$.
\end{example}

\subsection{Investigation of the second stage information for the case $Z_1\ge z_f$}
Figure~\ref{fig:BS} shows the minimum second-stage information $t_{\xi}(I_{2,\min})$ by the solid lines and the maximum second-stage information $t_{\xi}(I_{2,\max})$ by the dashed lines for the different designs for the case of a successful conditional registration ($Z_1\ge z_f$).
%Figure~\ref{fig:I2ConstAndI2MinComparisonXiEquals1.25} (right) compares the minimum second-stage information $t_{\xi}(I_{2,\min})$ (when $\xi=1.25$ and $\alpha_c=0.15$) for the case of a successful conditional registration ($Z_1\ge z_f$) between the designs. The minimum second-stage information is also represented in plots (a) to (d) of Figure~\ref{fig:BS} by the solid lines. The dashed lines in these plots show for the case $Z_1\ge z_f$the maximum second-stage information. 
The maximum second-stage information is given by the adaptive second-stage information $I_2^{A}(z_f)$ (set $A\equiv\alpha$ for the non-adaptive design)  when it exceeds $I_{2,\min}$. These lines are linearly increasing in $t_\xi(I_1)$ as long as $\zf=\Phi^{-1}(1-\alpha_c)$, and they are constant or decreasing afterwards. 

With a small interim information, it can happen that the maximum information $I_2^{A}(\zf)$ becomes equal to the minimum information, in which case we need to use a constant second stage information also in the case $Z_1\ge z_f$, namely $I_{2,\min}$ (solid line). For the non-adaptive design ($A\equiv \alpha$), this constant information equals $I_{\delta}$. Since $I_{2,\text{const}}=I_\delta$ also for $Z_1<z_f$,  there is no difference between the lower and upper case in Figure~\ref{twoArms}.
As a consequence, the pilot study is only done to ``gamble'' for a conditional registration. With $\xi=1.25$
and $\alpha_c=0.15$, this happens for all $t_{\xi}(I_1)\leq 0.137$. Such a ``gambling'' with rather small pilot data appears to be a quite questionable strategy. The gambling area is indicated in plot (a) of Figure~\ref{fig:BS} by the vertical solid line. For the considered adaptive designs, a constant second-stage information after successful conditional registration occurs for smaller pilot informations than in the non-adaptive case. For the adaptive designs, the upper and lower cases can differ at least with regard to the constant second-stage information, and the pilot data are always used in the application for permanent registration. Hence, the pilot data are never used only to ``gamble'' for a conditional registration.

%In our example, a constant second-stage information after the conditional registration occurs for all $t_{\xi}(I_1)\leq 0.0813$ also with the inverse normal method, for all $t_{\xi}(I_1)\leq 0.0854$ with Fisher’s product test, and for all $t_{\xi}(I_1)\leq0.1128$ with $A_Z$, i.e.\ for smaller pilot informations than in the non-adaptive case. For the adaptive designs, the upper and lower cases can differ at least with regard to the constant second-stage information, and the pilot data are always used in the application for permanent registration. Hence, the pilot data are never used only to ``gamble'' for a conditional registration.

\subsection{Overall comparison with regard to the second stage information}

Obviously, for the case $Z_1\ge z_f$, the data-driven second-stage information is always in the region between the solid and dashed lines in Figure~\ref{fig:BS}. One can see from this figure that using a non-constant conditional error function requires much smaller second (and thereby overall) informations than the constant conditional error function $A\equiv\alpha$. With a sufficiently large pilot information, the region for the second-stage information can even be uniformly smaller for the adaptive designs than for the non-adaptive design. Moreover, also the constant information $\itwoconst$ for the case $Z_1<z_f$
becomes smaller with the non-constant conditional error functions. This clearly indicates the efficiency gain by using such adaptive designs.
%in comparison to a non-adaptive design with $A\equiv\alpha$.

Figure~\ref{fig:maxAndMeanInformNoBindConDReg} (left) allows for a comparison of the maximum information of the second stage between the different designs over both branches. It can be seen that the non-adaptive design leads to a larger maximum information than the adaptive design with $A_Z$ if $t_{\xi}(I_1)\geq 0.14$, with Fisher’s product test if $t_{\xi}(I_1)\geq 0.17$ and with the inverse normal method if $t_{\xi}(I_1)\geq 0.2$.

Figure~\ref{fig:maxAndMeanInformNoBindConDReg} (right) shows for $\xi=1.25$ and $\alpha_c=0.15$ the average second stage information (relative to $I_\delta$) for the different designs when $\delta=\xi \delrel$.  The non-adaptive design has a larger second-stage information than the inverse normal method for $t_{\xi}(I_1)\geq 0.1398$ and Fisher’s product test for $t_{\xi}(I_1)\geq 0.0742$. In particular, there is only a small non-gambling area for $t_\xi(I_1)$ where the non-adaptive design has a smaller mean information than the inverse normal method. For Fisher's product test and $A_Z$, there is no such area. The adaptive design with $A_Z$ even has a lower mean information than the non-adaptive design for all considered $t_{\xi}(I_1)$. A comparison of the maximum and mean information over both stages can be found in the Supplement.
\begin{example}[continued]
For our example fast-track procedure with assumption $\delrel=1.4$ and the considered relative first-stage information $t_{\xi}(I_1)=0.5$, i.e. $n_1=0.5\cdot n_{\delta}\approx 69$, Table \ref{tableResultsExample} shows the constant second-stage sample size per group $n_{2,\mathrm{const}}$ in case of no successful conditional registration and the minimum and maximum sample size per group in case of a successful conditional registration. The mean information over all $z_1\in\mathbb{R}$ and thus over both branches is also shown by $E(n_2)$. The values in brackets show the quantities added by $n_1$, i.e. over both stages. Depending on the confidence of the applicant in the assumed a priori effect, a comparison of the per group sample sizes of Table \ref{tableResultsExample} with $n_{\delta}=137$ or $n_{\mathrm{rel}}:=2\hat{\sigma}^2\cdot\eta_f^2/\delrel^2=2\cdot 5.17^2\cdot2.8^2/1.4^2\approx 214$ may be of interest.
\begin{table}[ht]
\centering
\begin{tabular}{ |c|c|c|c|c| } 
\hline
$A$ & $n_{2,\mathrm{const}}$ & $n_{2,\min}$ & $n_{2,\max}$ & $E(n_2)$\\
\hline
$A\equiv\alpha$ & $137 (206)$ & $130 (199)$ & $215 (284)$ & $139 (208)$\\ 
$A_{I,-\infty}$ & $137 (206)$ & $14 (83)$ & $137 (206)$ & $68 (137)$\\ 
$A_{F,-\infty}$ & $137 (206)$ & $12 (81)$ & $141 (210)$ & $70 (139)$\\ 
$A_{Z}$ & $124 (193)$ & $22 (91)$ & $124 (193)$ & $70 (139)$\\ 
\hline 
\end{tabular}
\vspace{0.5cm}
\caption{Sample size results for hypothetical fast-track procedure.}\label{tableResultsExample}
\end{table} 
\end{example}

\begin{figure}[h]
\includegraphics[width=1\textwidth]{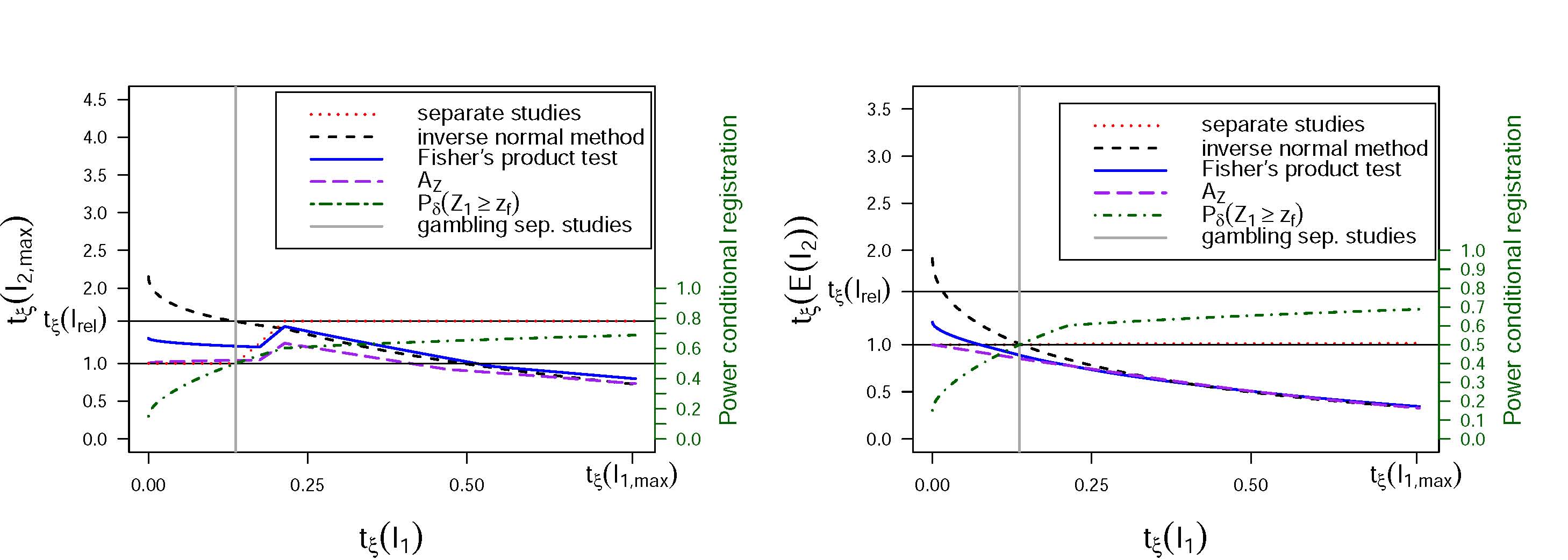}
\caption{Maximum information (left) and mean information (right) for $\xi=1.25$ over both branches.}\label{fig:maxAndMeanInformNoBindConDReg}
\end{figure}

\subsection{Probability for a conditional registration}

The probability for achieving conditional registration (i.e. $Z_1\ge z_f$) under the a priori assumed effect $\delta$ is shown by the dot-dashed lines with values given by the right vertical axis. This curve is the same for all designs and plots of Figures~\ref{fig:BS} and \ref{fig:maxAndMeanInformNoBindConDReg}. It illustrates the possibilities and limitations in reaching conditional registration for given first-stage information $t_\xi(I_1)$ when $\xi=\delta/\delrel$ is a small as $1.25$. For our example fast-track procedure, the power for a conditional registration at $\delta=1.75$ is $\mathbb{P}_{\delta}(Z_1\geq\zf)=0.65$.

\subsection{Additional results and final remarks}
In the Supplement, one can find similar results of  analogous investigations for the alternative value of $\xi=\xi_{\min}$.

We finally mention that there are several alternative strategies for designs with a data-driven decision to either apply for or abstain from the application for a conditional registration. 
For example, we could use an adaptive design with a benchmark boundary $z_{\text{s}}<\zf$, where for $z_{\text{s}}<Z_1<\zf$ one calculates the second stage information such that a conditional power of $1-\beta$ is achieved at the interim estimate $\max\{\delrel,\estone\}$, and to avoid a too large maximum second-stage information, a constant information is used for the second stage if $Z_1\leq z_{\text{s}}$,
like in the lower branch of Figure~\ref{twoArms}. This information could then be calculated to meet the power $1-\beta$ conditional on $Z_1\leq z_{\text{s}}$.
Furthermore, a binding futility bound less than $\zf$ could also be introduced, like a stopping if $\estone\leq 0$, i.e. $Z_1\leq 0$, or stopping for futility, if a significant negative effect is observed in the pilot study. With such futility bounds, we expect only small quantitative (and no qualitative) changes in the results and findings presented here, in particular, when the futility bound is non-binding.

%Another possibility is to introduce an early rejection boundary $\Phi^{-1}(1-\alpha_1)$ for some $\alpha_1<\alpha$ for the first stage (pilot study) that, of course, would need to be accounted for in the choice of the conditional error function to still meet \eqref{generalConditionCef}. This would provide the possibility to already achieve a permanent registration with the pilot study. However, we would think that in this case, the pilot study would need to be sufficiently large to meet all requirements for a permanent registration. 

Moreover, we would like to emphasize the importance of an overall type I error rate control when deciding in a data-driven way whether to apply for conditional registration or abstain from this application and continue with a study for permanent registration. Without such overall type I error rate control, we could end up with a substantially increased probability for a permanent registration of an inefficient device that is much larger than $\alpha$. Assume that we start, like in Section~\ref{NewChapterStudyDesignsFastTrack}, with a two-stage fast-track process using the binding futility rule $Z_1<z_f$ and a (non-constant) conditional error function $A$ at level $\alpha$. If, in the case of an unsuccessful attempt to register conditionally, we (always) start a new study for a direct, permanent registration at significance level $\alpha$, the overall type I error rate could become $(1+\Phi(z_f))\cdot\alpha$ which equals $0.04625$ for $\alpha=0.025$ and $z_f=\Phi^{-1}(1-\alpha_c)$ with $\alpha_c=0.15$. We would conclude that this implies the requirement to disclose in advance the attempt to continue after an unsuccessful pilot study with a study for direct permanent registration. Moreover, if this is intended, one would need to show that the anticipated strategy controls the overall type I error rate at the required level. An example of such a strategy was indicated in Figure~\ref{twoArms} and discussed in this section.   

\section{Summary and Discussion}\label{sec:Summary}
The main focus of this paper is on the utility of adaptive designs for fast-track registration processes (for digital health applications) like the DiGA fast-track in Germany. This refers to the investigation of required minimum, maximum, and average sample sizes under the aim of a certain overall success probability at an assumed a priori healthcare effect. Such an a priori assumption is not uncommon for adaptive designs, where we often speak of a \emph{sample size reassessment}, indicating that some a priori sample size calculation has been performed (see, for instance, Chapter~7 in the work of Wassmer and Brannath \cite{BuchWassmerBrannath2016}). We have clearly seen that the implementation of the fast-track procedure with an adaptive design is more efficient than the adopted approach with separate studies, whereby we worked with the assumption of the same design, same primary endpoints, and same study population for conditional and final registration. The use of adaptive designs often leads to smaller minimum, maximum, and average sample sizes, because they permit the use of the pilot data for the permanent registration study. 
%Although the DiGA fast-track procedure in Germany was used as a case study due to its pioneering role, using adaptive designs is likely to be highly relevant under the fast-track regulations of other countries as well.

Due to Section~\ref{NewChapterStudyDesignsFastTrack}, adaptive designs are particularly more efficient than the independent study approach if an unsuccessful study for conditional registration precludes any further attempt to receive a permanent registration. In this case, they are also less conservative than the separate study approach that has an effective overall type I error rate of $\alpha_f\cdot \alpha$ while the adaptive designs can achieve permanent registration with an overall type I error rate $\alpha$.

We have seen in Section~\ref{ConsequencesRequirementsBindCondRegits} that if an unsuccessful study for conditional registration precludes any further attempt to receive a permanent registration, a fast-track procedure is questionable when the a priori expectation on the health care effect does not go much beyond the minimal clinically relevant effect. This is because in these cases, the requirements for conditional registration are becoming more stringent than those for permanent registration. 
%As a consequence, the sample size required for a successful conditional registration with a fast-track procedure then exceeds the sample size needed for a study that directly targets a permanent registration. 
%As we have seen, this concerns all healthcare effects which are less than $43\%$ greater than the minimally clinically relevant effect if the one-sided level $\alpha=0.025$ is required and power $80\%$ is anticipated.

This observation has motivated us to consider in Section~\ref{ChapterNonBindingFutility} the second type of designs without a binding futility stop, which allow to decide based on the pilot data whether to apply for conditional registration or abstain from this application, and continue with a study for direct permanent registration. The presented designs can be considered for all a priori effect assumptions and are valid due to a strict control of the type I error rate. A prior announcement to the authorities of such a data-dependent mid-trial decision in advance ensures transparency. This appears to us to be an approach that is acceptable in a regulatory environment. Two different types of adaptive designs are presented in Section~\ref{ChapterNonBindingFutility}. The first type is based on a specific conditional error function selected for the fast-track route, which is then extended by the possibility of skipping the conditional registration. The inverse normal method and Fisher's product test are considered. The second type of design is based on the approach of Müller and Schäfer \cite{muller2001adaptive}. Starting with a test for direct permanent registration, the possibility of deciding data-dependent whether to apply for conditional registration is integrated. This is formalized by the conditional error function $A_Z$. Mathematically, the considered types of designs only differ in the way the conditional error function is selected. For the first type of design, the choice is ad hoc; for the other, the conditional error function of the permanent registration is used, which is in line with the Müller and Schäfer method. As we have seen, using the conditional error function $A_Z$ is convincing and outperforms the non-adaptive design across all considered efficiency parameters for at least sufficiently large pilot studies.

As already mentioned, this paper aims to illustrate the considerable potential of using adaptive designs in fast-track procedures such as those for digital health applications by presenting selected designs. However, this means that in certain situations, alternative designs may be more appropriate. For example, under the assumptions of Section~\ref{ChapterNonBindingFutility}, it is not necessary that both arms in Figure~\ref{twoArms} achieve power $1-\beta$ for rejecting the null hypothesis conditional on the respective arm being realized. For instance, a conditional success probability $\pi>1-\beta$ could be assigned to the arm that covers successful conditional registration and a lower one to the other arm, such that an overall success probability of $1-\beta$ is still given. That could increase the success probability for conditional registration.

Note that in Section~\ref{NewChapterStudyDesignsFastTrack}, where conditional registration was a requirement for permanent registration, we can understand the case of an unsuccessful registration as a binding futility stop for the two-stage designs. Instead, we could have dealt with this case like an unbinding futility stop, which would have led to a somewhat conservative design with somewhat larger sample sizes.  However, such designs provide the opportunity to continue with a second stage for permanent registration whenever there are good reasons for and sufficient resources have been achieved. This opportunity may outweigh the rather limited increase in the sample sizes. Corresponding graphs showing the efficiency of such designs can be found in the Supplement.

One should note that in most parts of the paper, the choice of the considered conditional error functions is somewhat arbitrary. Thus, there could be functions that lead to greater efficiency. For instance, one could consider the \emph{optimal conditional error functions} suggested by Brannath and Bauer \cite{brannath2004optimal}, which minimize the expected sample size under the sample-size recalculation approach to achieve a certain conditional power used in this paper.

An additional potential advantage of the adaptive designs mentioned in this paper is the robustness of the operation characteristics with regard to the a priori assumptions, the main reason to do a sample size recalculation at interim analyses or a pilot study. An investigation of the robustness of the adaptive designs in comparison to separate stage approaches would be an interesting topic for future investigations.

Other types of designs that also allow an early attempt to achieve permanent registration at the interim analysis are not considered here. This includes classical group sequential designs and adaptive designs with the possibility of early rejection of the null hypothesis while controlling the overall type I error rate (see, for instance, the work of Wassmer and Brannath \cite{BuchWassmerBrannath2016}). Such designs could be of particular interest when the sample sizes of the first stage are large, and so there might be sufficient evidence for a permanent registration; otherwise, the data could be used to apply for conditional registration. However, this requires the use of group sequential boundaries in order to account for the repeated attempts to reach permanent registration. Note that an adaptive design has the advantage over a classical group sequential design to permit a data-driven sample size calculation and the control of conditional power, which can be understood as a requirement for conditional registration.

At this point, it is important to emphasize that we do not consider a multiplicity adjustment with regard to conditional and permanent registration, i.e., we do not aim to control the probability of erroneously receiving either conditional or permanent registration (or both). We consider such an adjustment as unnecessary because of the temporary nature of the conditional registration that applies only to a rather short time interval (e.g. in Germany, one year, and in other countries, even less). Given this, it's very likely that a patient will be affected by only one registration process, i.e. receives an inefficient health application due to either a temporary or permanent registration. This is in line with the idea of controlling only relevant type I error rates; see, for instance the works of Brannath et al \cite{brannath2023population} and Brannath \cite{brannath2023discussion}.

Finally, we would like to mention some important aspects of a fast-track registration process that remained unaddressed in this paper because it's either out of the scope of our consideration or would have overloaded our investigation. Most of our discussion relies on the existence of a well-accepted minimally clinically relevant healthcare effect. As indicated in our running example, this may not be the case in practice for all endpoints. Moreover, there are quite a few different approaches to determine a minimally clinically relevant effect, and there seems to be no clear guidance on which approach to use; see, for instance, the works of Devji et al \cite{MCIDevji2020evaluating}, Franceschini et al \cite{MCIDfranceschini2023minimal} and Vach and Saxer \cite{MCIDvach2024anchor}. Moreover, as we have seen, the requirement that the point estimator of the healthcare effect exceeds the minimally clinically relevant effect can be stronger than the standard requirement of a p-value below $\alpha=0.025$ for sample sizes that are just as large as required for a power of $80\%$. This questions the use of a minimally clinically relevant effect as a requirement for permanent registration. However, this does not apply to pilot or interim data as long as their sample sizes are moderate, and hence a minimally clinically relevant effect may well be used as a requirement for conditional registration. In any case, for our exposition, the assumption of a minimal clinically relevant effect provided an insightful context for a comparison of different designs for a fast-track procedure.

Another unaddressed possibility is the use of surrogate endpoints for conditional registration, as it has been suggested in the pharmaceutical context \cite{fleming2012biomarkers, elliott2023surrogate}. This requires more discussions and investigations (see, for example, the works of Shih et al \cite{shih2003controlling} and Shih \cite{shih2006plan}), in particular when applied to digital health applications. Moreover, there is the possibility of selecting an endpoint at interim for conditional and permanent registration. In particular, the endpoints as well as the study populations may differ between conditional and permanent registration. Further investigations could also be conducted in this context.

A further unaddressed important issue that goes completely beyond our discussion of our paper are health economic aspects that consider the balance between gains and costs of a fast-track procedure.

\section*{Acknowledgments}
The authors kindly thank PD Dr. Norbert Benda for helpful discussions and comments. Further, the authors gratefully acknowledge the support of the Leibniz ScienceCampus Bremen Digital Public Health (www.digital-public-health.de), which is jointly funded by the Leibniz Association (W72/2022), the Federal State of Bremen, and the Leibniz Institute for Prevention Research and Epidemiology – BIPS.
\bibliographystyle{unsrt}  
\bibliography{references}  

%\bibliographystyle{alpha}
%\bibliography{references}

%\bibliographystyle{apalike}
%\bibliography{references} 
\newpage
\appendix
\section{Proofs}\label{AppendixProofs}
\begin{lemma}\label{LemmaUniqueMinInformSecondStage}
Let $A:\mathbb{R}\rightarrow[0,0.5]$ be a conditional error function which is positive on $[\zf,\infty)$, $\zf>0$, and $\delta=\xi\delrel$ for $\xi\geq 0$ be the a priori assumed or true parameter and $I_1>0$, $\alpha,\beta\in(0,0.5)$. By setting $A$ constant to $\alpha$, the non-adaptive design can be considered. We define the success probability at $\delta$ by
\begin{align*}
\Succ_A(\itwomin,I_1)&=\mathbb{P}_{\delta}\left( Z_1\geq\zf,\text{reject~}H_0\right)\\
&=\int\limits_{z_f}^{\infty}\left\{1-\Phi\left(\Phi^{-1}(1-A(z_1))-\eta_f{\sqrt{I_2^{A}(z_1)}}/{\sqrt{I_{\delta}}}\right)\right\}\phi(z_1-\sqrt{I_1}\delta)\mathrm{d}z_1
\end{align*}
where 
\begin{align*}I_2^{A}(z_1):=\max\left\{\itwomin,I_1{\left\{\Phi^{-1}(1-\beta)+\Phi^{-1}(1-A(z_1))\right\}^2}/{\zf^2}\right\},~z_1\geq\zf~.
\end{align*}
Let $p>0$ such that $p<\mathbb{P}_{\delta}(Z_1\geq z_f)$, $Z_1=\sqrt{I_1}\estone\sim\mathcal{N}(\delta\sqrt{I_1},1)$. If $\Succ_A(0,I_1)\leq p$, we find a unique $\itwomin\geq 0$ such that $\Succ_A(\itwomin,I_1)=p$.
\end{lemma}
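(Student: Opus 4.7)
The plan is to view $\Succ_A(\cdot,I_1)$ as a map on $[0,\infty)$ and establish the three properties needed for a unique solution to $\Succ_A(\itwomin,I_1)=p$: continuity, monotonicity (strict on the relevant region), and that the range covers $p$. The pointwise analysis of the integrand drives everything. Writing $c(z_1) := I_1\{\Phi^{-1}(1-\beta)+\Phi^{-1}(1-A(z_1))\}^2/\zf^2$, we have $I_2^A(z_1)=\max\{\itwomin,c(z_1)\}$, so the integrand is $\itwomin$-independent where $\itwomin\le c(z_1)$ and strictly decreasing in the argument of $\Phi$ (hence the whole integrand is strictly increasing in $\itwomin$) where $\itwomin>c(z_1)$. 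Note $c(z_1)>0$ everywhere because $\beta<1/2$ and $A(z_1)\le 1/2$ both force the bracketed quantity to be positive.

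First, I would use dominated convergence (with dominating function $\phi(z_1-\sqrt{I_1}\delta)$) to conclude that $\itwomin\mapsto \Succ_A(\itwomin,I_1)$ is continuous and non-decreasing. Second, I would compute the boundary behaviour: as $\itwomin\to\infty$, $I_2^A(z_1)\to\infty$ pointwise, so the argument of $\Phi$ tends to $-\infty$, the integrand tends to $\phi(z_1-\sqrt{I_1}\delta)$, and dominated convergence yields
\[
\lim_{\itwomin\to\infty}\Succ_A(\itwomin,I_1)=\int_{\zf}^{\infty}\phi(z_1-\sqrt{I_1}\delta)\,dz_1=\mathbb{P}_\delta(Z_1\ge\zf).
\]
Combined with the assumed inequalities $\Succ_A(0,I_1)\le p<\mathbb{P}_\delta(Z_1\ge\zf)$, the intermediate value theorem furnishes some $\itwomin\ge 0$ with $\Succ_A(\itwomin,I_1)=p$.

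Third, for uniqueness I would argue strict monotonicity on the region where it matters. Set $c_{\min}:=\inf_{z_1\ge\zf}c(z_1)$; for $\itwomin>c_{\min}$, the set $\{z_1\ge\zf:\itwomin>c(z_1)\}$ has positive Lebesgue measure (since $c$ attains values arbitrarily close to $c_{\min}$ on a set of positive measure), and on this set the integrand is strictly larger than at any smaller $\itwomin$. Thus $\Succ_A$ is strictly increasing on $(c_{\min},\infty)$. Since $\lim_{\itwomin\to\infty}\Succ_A=\mathbb{P}_\delta(Z_1\ge\zf)>p$, any solution must lie in the strictly increasing region, which gives uniqueness. (If $\Succ_A(0,I_1)=p$ one takes $\itwomin=0$ as the distinguished solution on the flat region $[0,c_{\min}]$.)

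The main technical obstacle will be handling the possibly crude regularity of $A$. The lemma only assumes $A:\mathbb{R}\to[0,0.5]$ is a conditional error function positive on $[\zf,\infty)$, so $c$ need not be continuous. The delicate step is therefore to argue that the level set $\{z_1:\itwomin>c(z_1)\}$ has positive measure whenever $\itwomin>c_{\min}$. This can be handled by measurability of $c$ together with the definition of infimum: for any $\varepsilon>0$ the set $\{c<c_{\min}+\varepsilon\}$ has positive measure, otherwise $c\ge c_{\min}+\varepsilon$ a.e., contradicting $c_{\min}=\operatorname{ess\,inf} c$ (after replacing $c_{\min}$ by the essential infimum, which suffices since the integral ignores null sets). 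The non-adaptive case $A\equiv\alpha$ follows verbatim with constant $c(z_1)=I_1\eta_f^2/\zf^2$, for which the argument simplifies.
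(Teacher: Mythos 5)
Your proof is correct and follows essentially the same route as the paper's: compute the limit $\mathbb{P}_\delta(Z_1\ge\zf)>p$ as $\itwomin\to\infty$ (the paper uses monotone rather than dominated convergence), establish continuity in $\itwomin$ by dominated convergence, prove strict monotonicity by a pointwise comparison of the integrands on a set of positive measure, and conclude via the intermediate value theorem. The only substantive difference lies in the strict-monotonicity step: the paper argues that the sample-size formula is decreasing in $z_1$ and tends to zero (which presumes the $z_1^2$ denominator of the main text rather than the $\zf^2$ written in the lemma statement), so that the maximum is attained by $\itwomin$ on a half-line for every $\itwomin>0$, whereas your essential-infimum argument also covers a possible flat region $[0,c_{\min}]$ and accordingly interprets the conclusion as uniqueness of the minimal solution -- a reading consistent with how the lemma is invoked in the main text.
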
 
\begin{proof}
It holds for all $z_1\geq\zf$ that $\sqrt{I_2^{A}(z_1)}\nearrow\infty$ if $\itwomin\nearrow\infty$. Thus, for $\itwomin\nearrow\infty$,
\begin{align*}
    z_1\mapsto 1-\Phi\left(\Phi^{-1}(1-A(z_1))-\eta_f{\sqrt{I_2^{A}(z_1)}}/{\sqrt{I_{\delta}}}\right)
\end{align*}
gives a pointwise non-decreasing sequence of functions which converge pointwise to the constant function $z_1\mapsto 1$. Then, due to the monotone convergence theorem, $\Succ_A(\itwomin,I_1)\nearrow 1\cdot\mathbb{P}_{\delta}(Z_1\geq z_f)>p$ which implies for sufficiently large $\itwomin$ it holds $\Succ_A(\itwomin,I_1)> p$. It remains to show that there exists a unique minimal information such that equality holds. On the one hand, $\Succ_A(\itwomin,I_1)$ strictly increases in $\itwomin$. This follows from the fact that the used sample size formula for z-tests is monotonically decreasing in $z_1$ and converges to zero as $z_1$ approaches infinity. Indeed, this yields if $\itwomin$ increases, $I_2^{A}$ increases strictly on some non-empty interval $(z_i,\infty)$, $z_i\geq \zf$ and thus $\Succ_A(\itwomin,I_1)$ strictly increases. Furthermore, the dominated convergence theorem gives the continuity of $\Succ_A(\itwomin, I_1)$ in $\itwomin$. By putting all that together, we obtain: If $\Succ_A(0,I_1)\leq p$, we find a unique $\itwomin\geq 0$ such that $\Succ_A(\itwomin,I_1)=p$.
\end{proof}
In the main article, we use the previous lemma twice. In the first part of the paper, we consider all $\delta$ and $I_1$ which fulfill $\mathbb{P}_{\delta}(Z_1\geq\zf)>1-\beta$ and thus the lemma can be used for $p:=1-\beta$. In the second part of the paper we use that $\mathbb{P}_{\delta}(\text{reject~}H_0~|~Z_1\geq \zf)=1-\beta$ is equivalent to $\mathbb{P}_{\delta}(\text{reject~}H_0,Z_1\geq\zf)=(1-\beta)\cdot\mathbb{P}_{\delta}(Z_1\geq\zf)$ and set $p:=(1-\beta)\cdot\mathbb{P}_{\delta}(Z_1\geq\zf)$.
\begin{lemma}\label{LemmaMaximumInform}
Let $A:\mathbb{R}\rightarrow(0,0.5]$ be a conditional error function and $\delta=\xi\delrel$ for $\xi>\xi_{\min}=1.43$ be the a priori assumed effect and $I_1>\ionemin$, $\alpha,\beta\in(0,0.5)$. For the adaptive design with conditional error function $A$ and the non-adaptive design, the overall power at $\delta$ is achieved by choosing the required $\itwomin$. To distinguish between the two designs, we use the notations $\itwomin^{A}$ and $\itwomin^{\alpha}$. Furthermore, let $\zf>0$ and the fast-track procedure is terminated if $Z_1<\zf$ is observed after the first stage (binding futility stop). If $A(\zf)>\alpha$, the adaptive design has a lower maximum information of the second stage, i.e.
\begin{align}
\max\nolimits_{z_1\geq z_f}I_2^{A}(z_1) < \max\nolimits_{z_1\geq z_f}I_2^{\alpha}(z_1)  \label{ToShowMaximumInformComp}
\end{align}
and thus a lower maximum sum of information over both stages than the non-adaptive design. 
\end{lemma}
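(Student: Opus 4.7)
The plan is to reduce the comparison of the two maxima to the futility boundary $z_1=z_f$, and then, in the case where the minimum-information constraint binds for the adaptive design, to evaluate both success probabilities at a cleverly chosen common value of $\itwomin$.

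First I would verify that $I_2^A(z_1)$ is non-increasing in $z_1$ on $[z_f,\infty)$. Writing $u^A(z_1):=I_1\{\Phi^{-1}(1-\beta)+\Phi^{-1}(1-A(z_1))\}^2/z_1^2$, the assumption $A\le 0.5$ gives $\Phi^{-1}(1-A(z_1))\ge 0$, the monotonicity of $A$ makes $\Phi^{-1}(1-A(z_1))$ non-increasing, and $z_1^2$ is strictly increasing for $z_1>0$; hence $u^A$ is non-increasing on $[z_f,\infty)$ and
\begin{equation*}
\max_{z_1\ge z_f} I_2^A(z_1)=\max\{\itwomin^A,u^A(z_f)\},\qquad \max_{z_1\ge z_f} I_2^\alpha(z_1)=\max\{\itwomin^\alpha,u^\alpha(z_f)\},
\end{equation*}
where $u^\alpha(z_f)=I_1\eta_f^2/z_f^2$. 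Because $A(z_f)>\alpha$ and $\Phi^{-1}$ is strictly increasing, $u^A(z_f)<u^\alpha(z_f)$.

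Next I would split on which term realises the adaptive maximum. If $\itwomin^A\le u^A(z_f)$, then $\max I_2^A=u^A(z_f)<u^\alpha(z_f)\le \max I_2^\alpha$, giving \eqref{ToShowMaximumInformComp}. Otherwise $\max I_2^A=\itwomin^A$, and it suffices to show $\itwomin^A<\itwomin^*:=\max\{\itwomin^\alpha,u^\alpha(z_f)\}$, which equals $\max I_2^\alpha$. For this remaining case I would substitute the common value $\itwomin^*$ into both success-probability integrals. Since $\itwomin^*\ge u^\alpha(z_f)\ge u^\alpha(z_1)$ and $\itwomin^*>u^A(z_f)\ge u^A(z_1)$ for every $z_1\ge z_f$, the minimum-information bound binds throughout for both designs, so $I_2^A(z_1;\itwomin^*)=I_2^\alpha(z_1;\itwomin^*)=\itwomin^*$, and the integrands of $\Succ_A(\itwomin^*,I_1)$ and $\Succ_\alpha(\itwomin^*,I_1)$ differ only in the critical value. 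The hypothesis $A(z_f)>\alpha$ combined with the monotonicity of $A$ gives $\Phi^{-1}(1-A(z_1))\le \Phi^{-1}(1-A(z_f))<\Phi^{-1}(1-\alpha)$ uniformly on $[z_f,\infty)$, so the adaptive integrand exceeds the non-adaptive one by at least a positive constant on that interval. Integrating against $\phi(z_1-\sqrt{I_1}\delta)$ yields
\begin{equation*}
\Succ_A(\itwomin^*,I_1)>\Succ_\alpha(\itwomin^*,I_1)\ge \Succ_\alpha(\itwomin^\alpha,I_1)=1-\beta=\Succ_A(\itwomin^A,I_1),
\end{equation*}
where the middle inequality uses the monotonicity of $\Succ_\alpha$ in $\itwomin$ established in the proof of Lemma~\ref{LemmaUniqueMinInformSecondStage}. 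The strict monotonicity of $\Succ_A$ in $\itwomin$, also proved there, then forces $\itwomin^A<\itwomin^*$, closing this case. The statement about the sum of informations over both stages is immediate, since $I_1$ is common to both designs.

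I expect the main obstacle to be precisely the case in which $\itwomin^A$ binds. A direct pointwise comparison of the two designs at the same $\itwomin$ breaks down, because the adaptive design also uses a smaller second-stage information at moderate $z_1$ (on account of $A(z_1)>\alpha$), so it is not \emph{a priori} clear whether the more permissive critical value or the reduced information dominates. The device of substituting $\itwomin^*$ neutralises this tension: at this value both designs employ the same constant second-stage information on $[z_f,\infty)$, so the comparison reduces to one of critical values, which uniformly favours the adaptive design.
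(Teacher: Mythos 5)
Your proof is correct and follows essentially the same route as the paper's: both reduce the maxima to $\max\{\itwomin,\,u(\zf)\}$, dispose of the easy case, and then show $\itwomin^{A}$ cannot reach $\max_{z_1\ge\zf}I_2^{\alpha}(z_1)$ by comparing success probabilities in the regime where the minimum information binds everywhere and only the critical values differ. The paper phrases this last step as a direct contradiction at $\itwomin^{A}$ while you evaluate both designs at the common value $\itwomin^{*}$ and invoke monotonicity in $\itwomin$, which is only a cosmetic difference (your ``$\Succ_\alpha(\itwomin^{\alpha},I_1)=1-\beta$'' should read ``$\ge 1-\beta$'' to cover $\itwomin^{\alpha}=0$, but the chain of inequalities is unaffected).
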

\begin{proof}
It holds 
\begin{align}
        \max_{z_1\geq z_f}I_2^{A}(z_1,\itwomin^{A},I_1)=\max\left\{\itwomin^{A},I_1{\left\{\Phi^{-1}(1-\beta)+\Phi^{-1}(1-A(\zf))\right\}^2}/{\zf^2}\right\}~.\label{TermMaxInformation}
\end{align}
The same holds for the non-adaptive design when replacing $A$ with $\alpha$. The proof is straightforward: If $A(z_f)>\alpha$, then
\begin{align}
    I_1{\left\{\Phi^{-1}(1-\beta)+\Phi^{-1}(1-A(z_f))\right\}^2}/{\zf^2}<I_1{\left\{\Phi^{-1}(1-\beta)+\Phi^{-1}(1-\alpha)\right\}^2}/{\zf^2}\label{FallzahlFormelAdaptKleiner}
\end{align}
holds true. We now must only exclude that
 \begin{align}
        \itwomin^{A}\geq\max\nolimits_{z_1\geq z_f}I_2^{\alpha}(z_1)=\max\left\{\itwomin^{\alpha},{I_1\left\{\Phi^{-1}(1-\beta)+\Phi^{-1}(1-\alpha)\right\}^2}/{\zf^2}\right\}~.\label{WidSpruchCond1} 
\end{align}
If $\itwomin^{A}=0$, (\ref{WidSpruchCond1}) cannot be true because the second part of the maximum in (\ref{WidSpruchCond1}) is greater than $0$ due to our assumptions. Then, (\ref{ToShowMaximumInformComp}) is proven. If $\itwomin^{A}>0$ the overall power at $\delta$ is exactly $1-\beta$, i.e. $\Succ_{A}(\itwomin^{A},I_1)=1-\beta$. If (\ref{WidSpruchCond1}) were true, it would imply because (\ref{FallzahlFormelAdaptKleiner}) that for all $z_1\geq\zf$ it holds $I_2^{A}(z_1)=\itwomin^{A}$. Due to $A(z_f)>\alpha$ and due to the monotonicity of $A$ on $[z_f,\infty)$ which implies $A(z_1)\geq A(\zf)>\alpha$ it holds $\Phi^{-1}(1-A(z_1))<\Phi^{-1}(1-\alpha)$. Together with the fact that we have assumed (\ref{WidSpruchCond1}) and $I_{2}^{\alpha}(z_1)$ is decreasing in $z_1$, it can be concluded that the integrated of $\Succ_A(\itwomin^{A},I_1)$ in (\ref{allgemeineConditionOverallPowerAdaptivD}) is strictly greater than the integrand of $\Succ(\itwomin^{\alpha},I_1)$ in (\ref{IntegralOverallPowerSharperBorder}). Therefore, $\Succ_{A}(\itwomin^{A},I_1)>\Succ(\itwomin^{\alpha},I_1)$. Because $\Succ_A(\itwomin^{A},I_1)=1-\beta$ and $\Succ(\itwomin^{\alpha},I_1)\geq 1-\beta$, we have a contradiction. Thus, (\ref{WidSpruchCond1}) does not hold true, and (\ref{ToShowMaximumInformComp}) is fulfilled. 
\end{proof}
%The following example refers to Lemma \ref{LemmaMaximumInform}.
%\begin{example}\label{FormalExampleMaimumInformation}
%Let $\ac=0.15$, $\alpha=0.025$, $\beta=0.2$ like in the main article. For the inverse normal method it holds for all $\xi>\xi_{\min}$ and all $I_1>\ionemin$ (notice that $z_f$ depends on $I_1$) that $A_{I,z_f}(z_f)\geq A_{I,-\infty}(z_{f,\min})$ where $z_{f,\min}:=\max\{\sqrt{t_{\xi}(\ionemin)}\eta_f/\xi,\Phi^{-1}(1-\ac)\}$ (analogously for $A_{F,z_f}$). It can be calculated that  $c_I(-\infty)=0.0253$, $c_F(-\infty)=0.0044$ and it holds $A_{I,-\infty}(z)>\alpha$ iff $z >0.8041$ and $A_{F,-\infty}(z)>\alpha$ iff $z>0.9382$. For $\xi=1.75$ and $\xi=2$ we have $z_{f,\min}=1.1222$ and $z_{f,\min}= 1.0364$, respectively. This means for both considered $\xi$ for both adaptive designs for all $I_1>\ionemin$ definitely holds $A_{I,\zf}(\zf)>\alpha$ and $A_{F,\zf}(\zf)>\alpha$. Therefore, due to Lemma \ref{LemmaMaximumInform}, both adaptive designs have a lower maximum sum of information than the non-adaptive design. 
%\end{example}

\section{Numerical Implementation}
The images were created using the R-software. For the calculations, numerical methods like bisection search and the \texttt{integrate} function from the package \texttt{stats} were used. The R-code can be downloaded from the following link: \url{https://github.com/LianeKluge/Adaptive-Designs-in-Fast-Track-Registration-Processes}.

The following result is required for the numerical calculations in Section~\ref{subsectionInvestigationI_2const}.
\begin{lemma}\label{LemmaUesSlepiansLemma}
Let $Z(I_2):=Z(I_1,I_2):=\sqrt{{I_1}/{(I_1+I_2)}} Z_1 +\sqrt{{I_2}/{(I_1+I_2)}} Z_2$ where $Z_k=\hat{\vartheta}_k\sqrt{I_k}\sim\mathcal{N}(\delta\sqrt{I_k},1)$, $k=1,2$, and $\xi\geq 1$, i.e. $\delta\geq\delrel>0$. For $\zf\in\mathbb{R}$ the conditional success probability $\mathbb{P}_{\delta}(Z(I_2)\geq z_{\alpha}|Z_1<\zf)$ where $z_{\alpha}=\Phi^{-1}(1-\alpha)$ is increasing in $I_2$.
\end{lemma}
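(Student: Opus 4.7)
The plan is to reduce the lemma to the monotonicity of a bivariate standard normal CDF in its upper limits and its correlation, and then invoke the classical Slepian inequality (bivariate case). First, I would observe that under $\mathbb{P}_\delta$ the pair $(Z_1, Z(I_2))$ is jointly Gaussian, since $Z(I_2)$ is a linear combination of the independent normals $Z_1$ and $Z_2$. A direct calculation yields $\operatorname{Var}(Z(I_2)) = 1$, $\mathbb{E}_\delta[Z(I_2)] = \delta\sqrt{I_1+I_2}$, and $\operatorname{Corr}(Z_1, Z(I_2)) = \sqrt{I_1/(I_1+I_2)}$.

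Next, since $\mathbb{P}_\delta(Z_1 < z_f)$ does not depend on $I_2$, it suffices to show that the joint probability $\mathbb{P}_\delta(Z_1 < z_f,\, Z(I_2) < \Phi^{-1}(1-\alpha))$ is non-increasing in $I_2$. Centring and scaling both marginals, this joint probability can be written as
\[
\Phi_2\bigl(T,\, c(I_2);\, \rho(I_2)\bigr),
\]
where $T := z_f - \delta\sqrt{I_1}$ is constant in $I_2$, $c(I_2) := \Phi^{-1}(1-\alpha) - \delta\sqrt{I_1+I_2}$, $\rho(I_2) := \sqrt{I_1/(I_1+I_2)}$, and $\Phi_2(\cdot,\cdot;\rho)$ is the bivariate standard normal CDF with correlation $\rho$.

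Third, I would invoke two monotonicity facts about $\Phi_2$: (i) $\Phi_2(T, c; \rho)$ is non-decreasing in $c$ (trivially), and (ii) $\Phi_2(T, c; \rho)$ is non-decreasing in $\rho$, which is Slepian's lemma in the bivariate case, equivalent to the identity $\partial \Phi_2/\partial \rho = \phi_2 \geq 0$ where $\phi_2$ is the bivariate density. The assumption $\xi \geq 1$, i.e.\ $\delta \geq \delrel > 0$, then guarantees that $c(I_2)$ is decreasing in $I_2$, and $\rho(I_2)$ is always decreasing in $I_2$. Combining (i) and (ii) gives that $\Phi_2(T, c(I_2); \rho(I_2))$ is non-increasing in $I_2$, which is the desired statement.

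I do not anticipate a serious obstacle; the one place where care is needed is the hypothesis $\delta > 0$. If $\delta$ were non-positive, $c(I_2)$ would be non-decreasing while $\rho(I_2)$ is still decreasing, so the two monotonicities would pull in opposite directions and the argument would break. This is the single, clean point at which the assumption $\xi \geq 1$ enters; everything else (joint normality, the correlation formula, Slepian's lemma) is standard.
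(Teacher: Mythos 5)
Your proposal is correct and follows essentially the same route as the paper's proof: both reduce the claim to the bivariate normal orthant probability and combine Slepian's inequality (monotonicity in the correlation $\sqrt{I_1/(I_1+I_2)}$) with the fact that $\delta>0$ makes the standardized threshold for $Z(I_2)$ move favourably as $I_2$ grows. The paper phrases this by constructing explicit centred unit-variance vectors and comparing their covariances, while you work directly with $\Phi_2(\cdot,\cdot;\rho)$ and its partial derivatives, but the two arguments are the same in substance, including the observation that $\delta>0$ is exactly where the hypothesis $\xi\geq 1$ is needed.
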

\begin{proof}
Let $I_{2,1}<I_{2,2}$. We prove $\mathbb{P}_{\delta}(Z_1<\zf,Z(I_{2,1})\geq z_{\alpha})<\mathbb{P}_{\delta}(Z_1<\zf,Z(I_{2,2})\geq z_{\alpha})$ by using Slepian's inequality (see, for instance, Theorem~5.1.7 in the work of Tong \cite{tong2012multivariate}). Slepian's inequality compares the strength of the dependence of two multivariate normally distributed random vectors with the same mean and the same variance and further assumptions on the covariances. We define $X_1:=Y_1:=Z_1-\delta\sqrt{I_1}$ and $X_2:=-Z(I_{2,1})+\delta\sqrt{I_1+I_{2,1}}$,  $Y_2:=-Z(I_{2,2})+\delta\sqrt{I_1+I_{2,2}}$. Then $X=(X_1,X_2)$ and $Y=(Y_1,Y_2)$ are centered random vectors and all variances equal $1$. Furthermore, $\mathrm{Cov}(X_1,X_2)=-\mathrm{Cov}(Z_1,Z(I_{2,1}))=-\sqrt{I_1/{(I_1+I_{2,1})}}\leq-\sqrt{I_1/{(I_1+I_{2,2})}}=-\mathrm{Cov}(Z_1,Z(I_{2,2}))=\mathrm{Cov}(Y_1,Y_2)$. Slepian's inequality now gives for all $a_1,a_2\in\mathbb{R}$ that $\mathbb{P}_{\delta}(X_1< a_1, X_2\leq a_2)\leq\mathbb{P}_{\delta}(Y_1< a_1, Y_2\leq a_2)$. By setting $a_1=\zf-\delta\sqrt{I_1}$ and $a_2=-z_{\alpha}+\delta\sqrt{I_1+I_{2,1}}$ and using $\delta> 0$ which implies $a_2<-z_{\alpha}+\delta\sqrt{I_1+I_{2,2}}$, we obtain
\begin{align*}
    \mathbb{P}_{\delta}(Z_1<\zf,Z(I_{2,1})\geq z_{\alpha})&=\mathbb{P}_{\delta}(X_1< a_1, X_2\leq a_2)\leq\mathbb{P}_{\delta}(Y_1< a_1, Y_2\leq a_2)\\
    &<\mathbb{P}_{\delta}(Y_1<a_1,Y_2\leq-z_{\alpha}+\delta\sqrt{I_1+I_{2,2}})=\mathbb{P}_{\delta}(Z_1<\zf,Z(I_{2,2})\geq z_{\alpha})~.
\end{align*}
This directly implies the monotonicity of the conditional success probability in $I_2$.
\end{proof}
\section{Additional Plots for Section~\ref{NewChapterStudyDesignsFastTrack}}
\subsection{With binding futility stop}\label{AppendixPlotsNoCombinationStrat}
The plots in this section complete the discussion in Section~\ref{NewChapterStudyDesignsFastTrack}. The adaptive designs with conditional error functions $A_{I,\zf}$, $A_{F,\zf}$ with binding futility stop after unsuccessful conditional registration and the design with two separate studies are considered.
\begin{figure}[H]
\begin{subfigure}[c]{0.56\textwidth}
\includegraphics[width=0.8\textwidth]{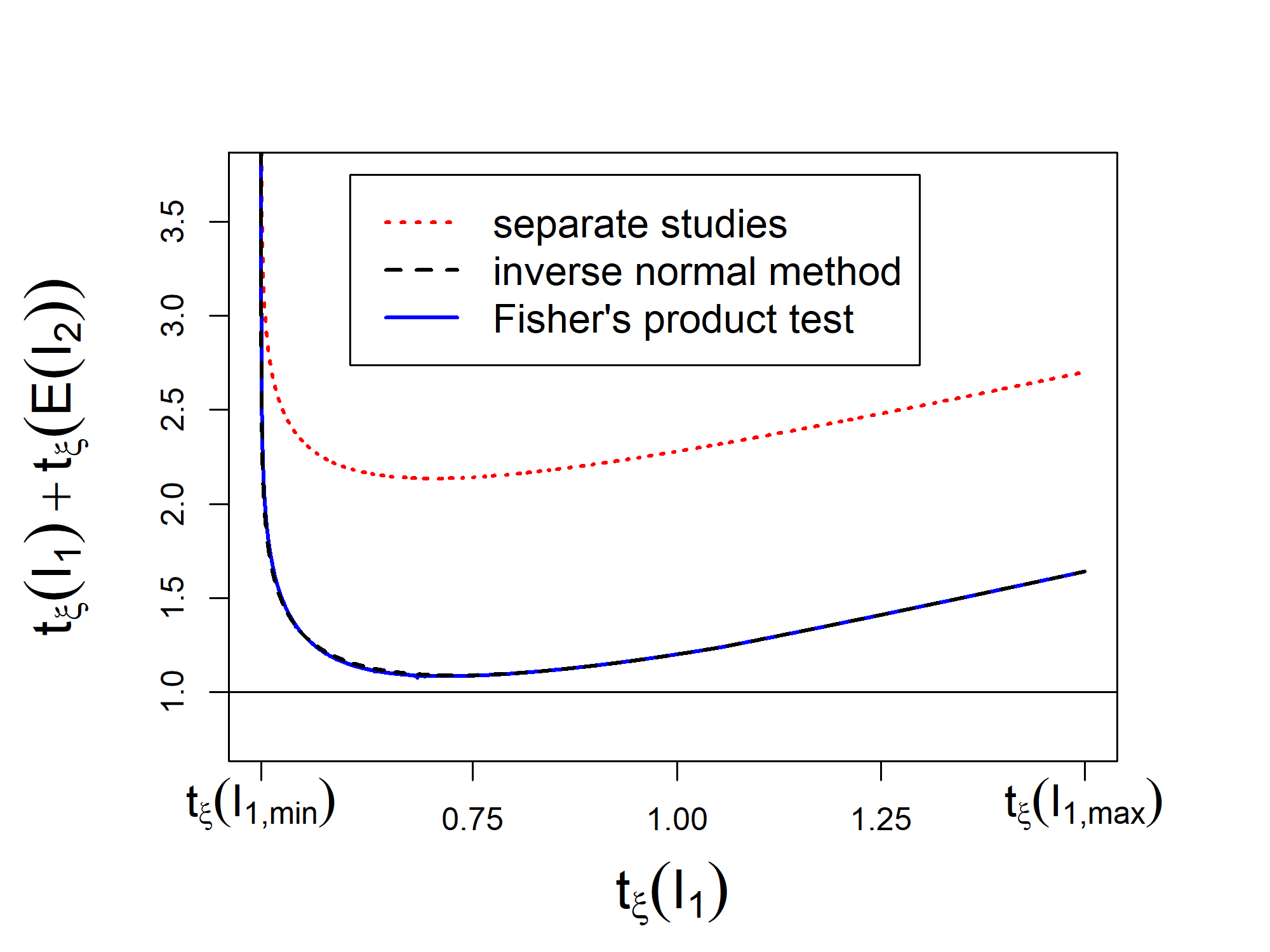}
\subcaption{$\xi=1.75$.}
\end{subfigure}
\begin{subfigure}[c]{0.56\textwidth}
\includegraphics[width=0.8\textwidth]{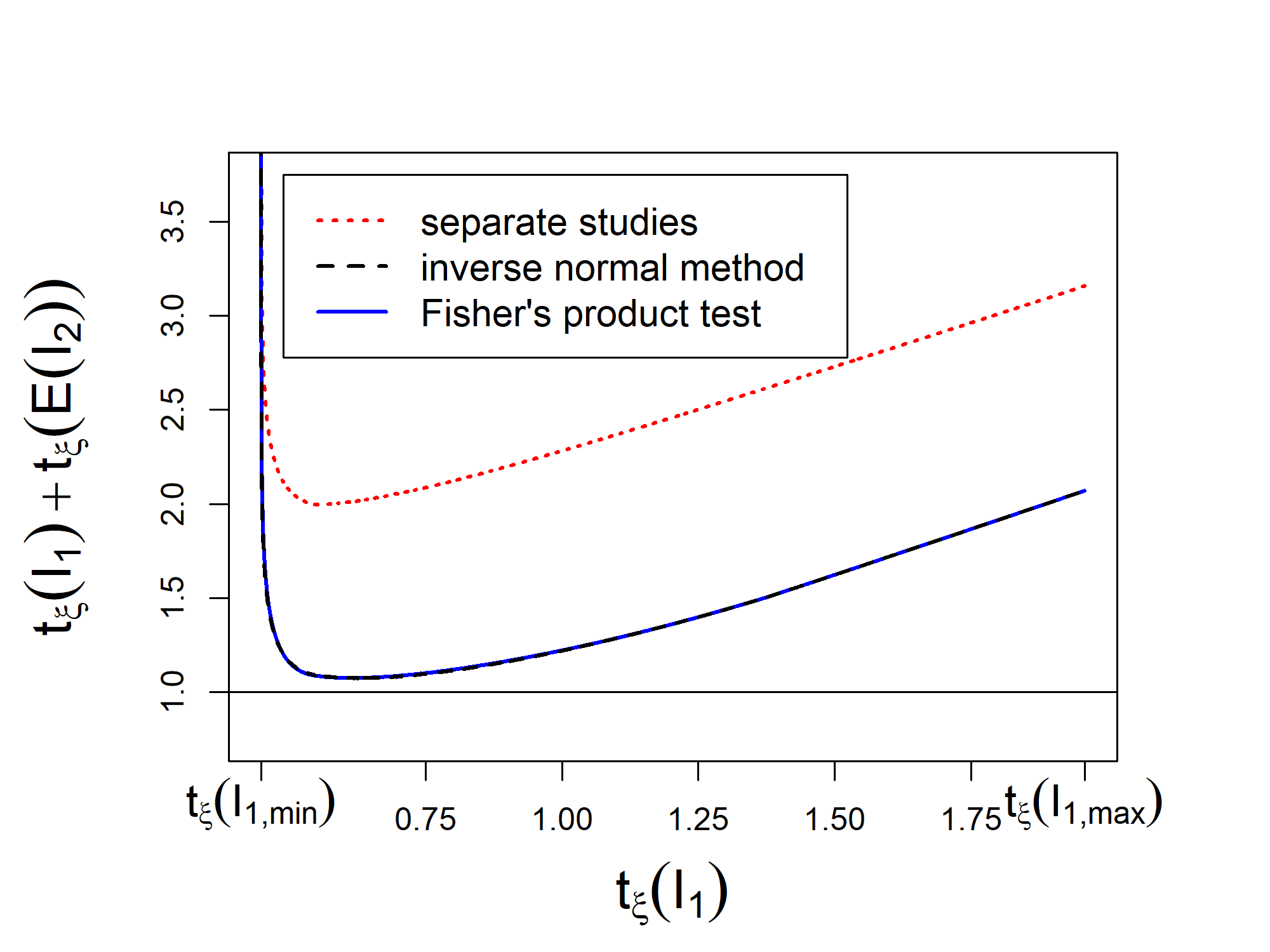}
\subcaption{$\xi=2$.}\label{fig:BF12}
\end{subfigure}
\caption{Mean information over both stages.}
\end{figure}
\begin{figure}[H]
\begin{subfigure}[c]{0.56\textwidth}
\includegraphics[width=0.8\textwidth]{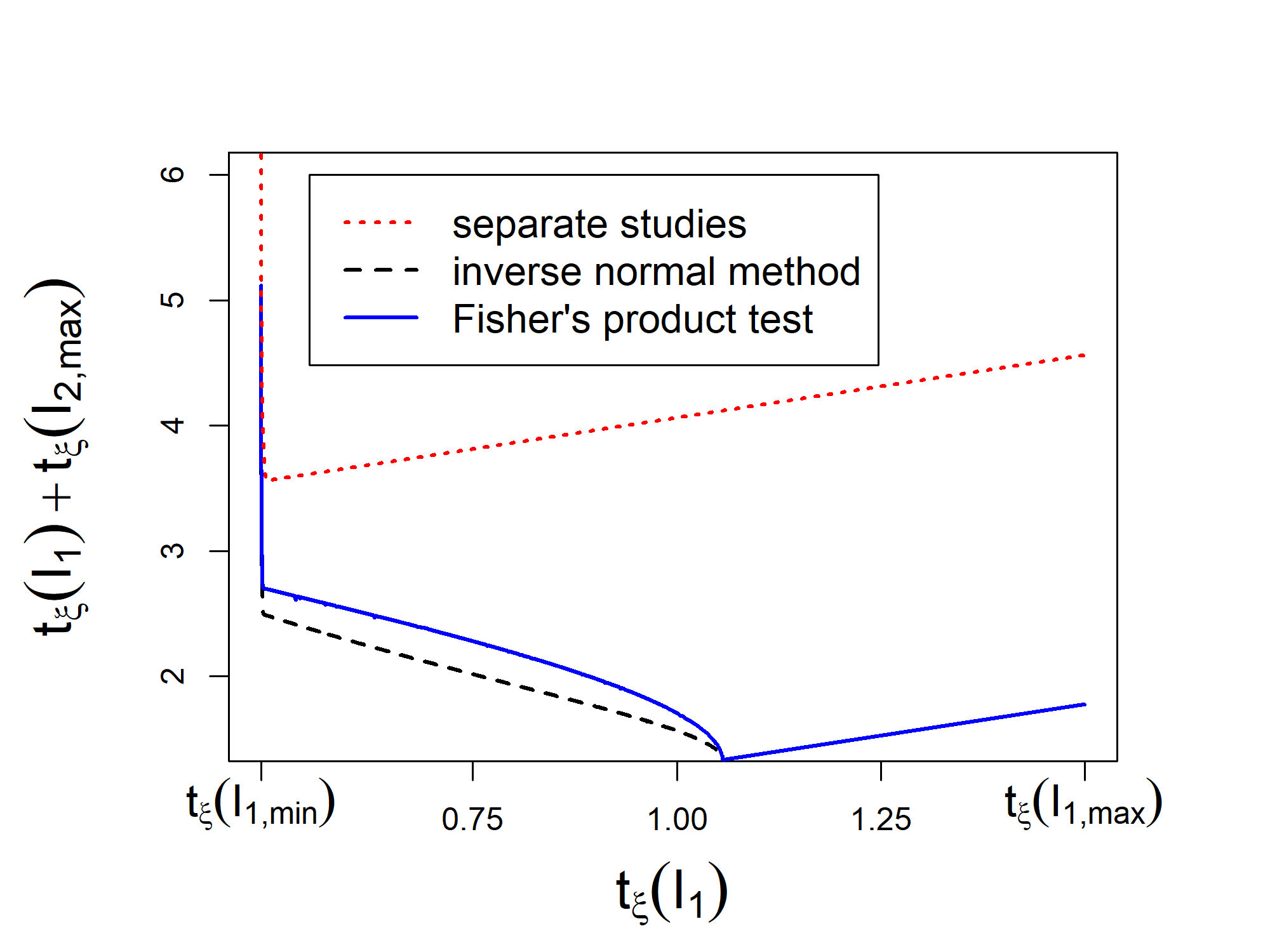}
\subcaption{$\xi=1.75$.}
\end{subfigure}
\begin{subfigure}[c]{0.56\textwidth}
\includegraphics[width=0.8\textwidth]{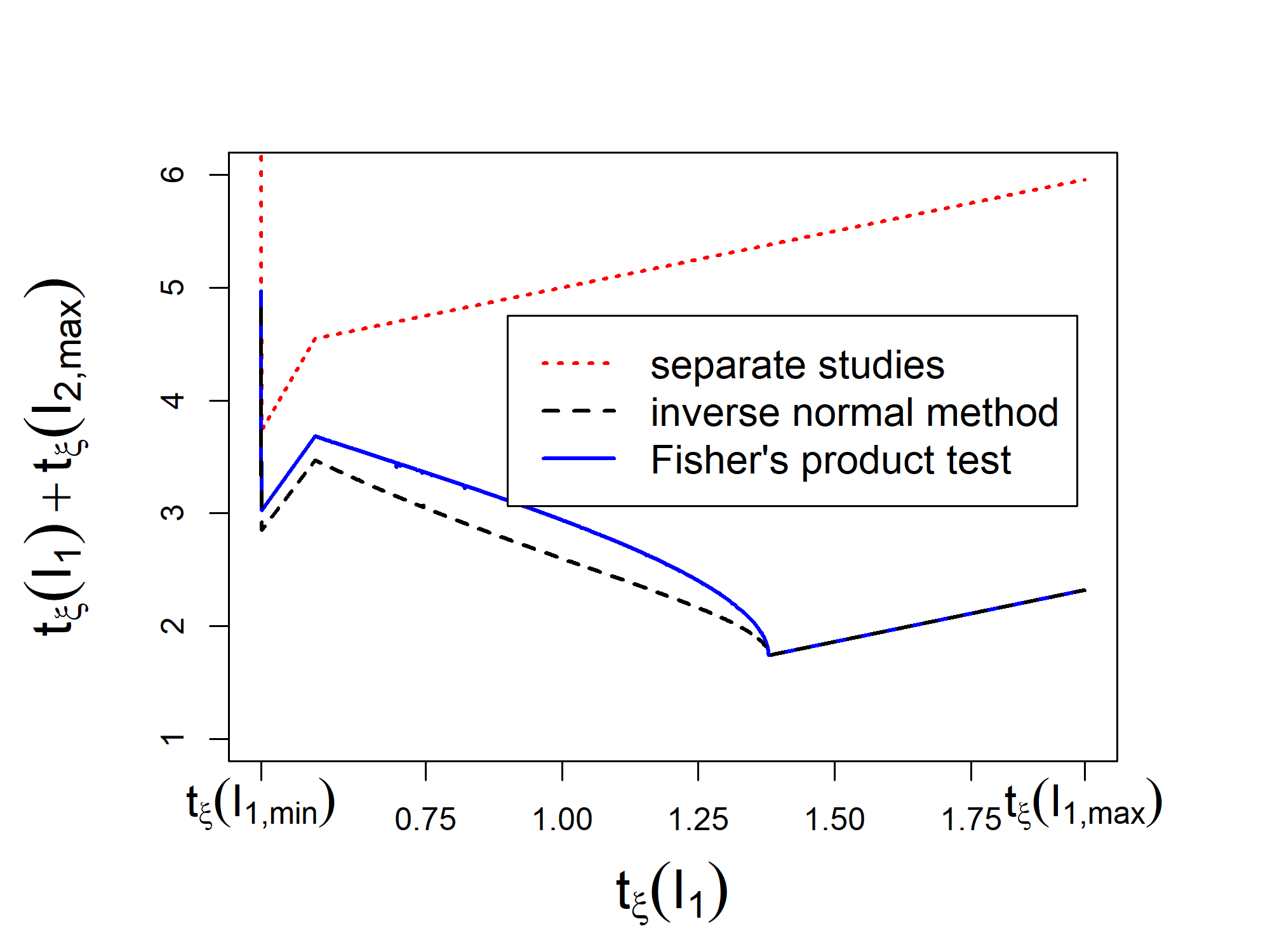}
\subcaption{$\xi=2$.}\label{fig:BF13}
\end{subfigure}
\caption{Maximum information over both stages.}
\end{figure}

\subsection{Without binding futility stop}\label{AppendixPlotsNoCombinationStratNonBinding}
In this Section, it is assumed as in Section~\ref{NewChapterStudyDesignsFastTrack} that at the time of the planning of a fast-track procedure by the applicants, a successful conditional registration is a requirement for permanent registration. Thus, we assume that $\mathbb{P}_{\delta}(Z_1\geq\zf, \text{~reject~}H_0)=1-\beta$ is striven at an a priori assumed or true effect $\delta=\xi\cdot\delrel$ where $\xi>\xi_{\min}$. We assume that the designs and informations are chosen such that this success probability is achieved, as in Section~\ref{NewChapterStudyDesignsFastTrack}. In contrast to Section~\ref{NewChapterStudyDesignsFastTrack}, the conditional error functions are not chosen in a way that there is a binding futility stop after unsuccessful conditional registration. Instead, in this section, we deal with this case like an unbinding futility stop and consider $A_{I,-\infty}$ and $A_{F,-\infty}$. These are somewhat conservative designs with somewhat larger sample sizes compared to $A_{I,\zf}$ and $A_{F,\zf}$. However, such designs provide
the opportunity to continue with a second stage for the permanent registration whenever there are good
reasons for and sufficient resources have been achieved. 
\begin{figure}[H]
\begin{subfigure}[c]{0.56\textwidth}
\includegraphics[width=.8\textwidth]{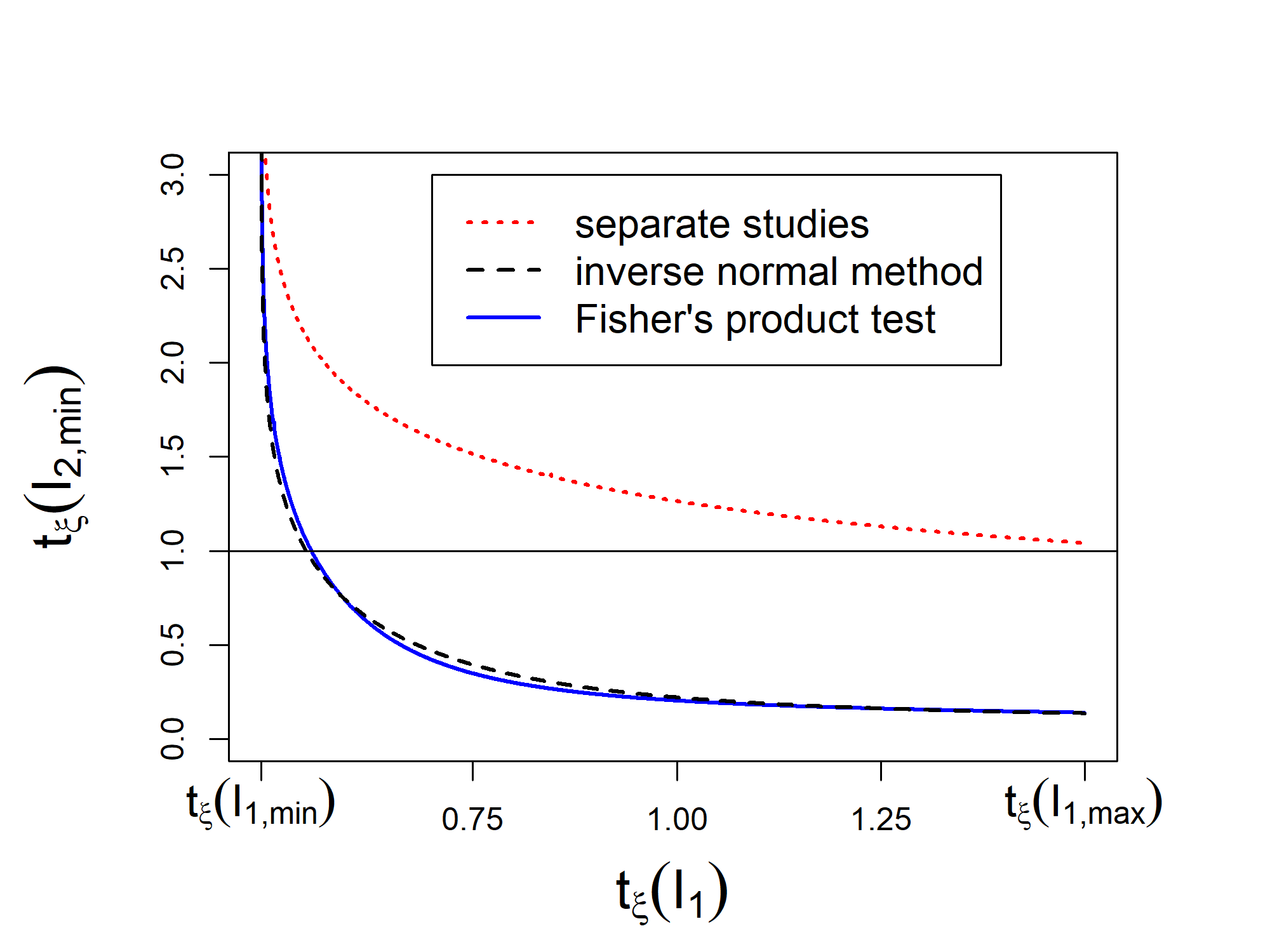}
\subcaption{$\xi=1.75$.}
\end{subfigure}
\begin{subfigure}[c]{0.56\textwidth}
\includegraphics[width=.8\textwidth]{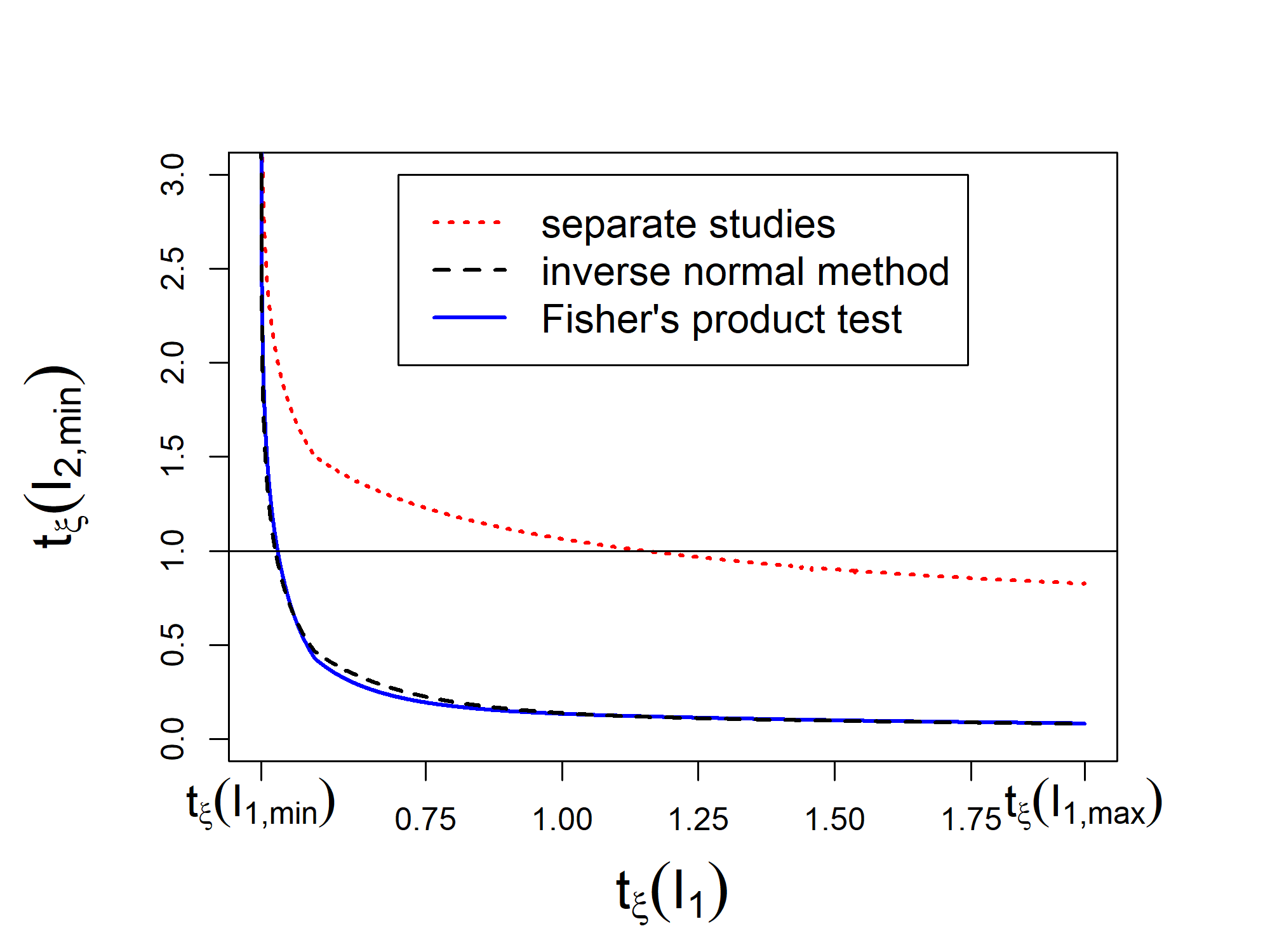}
\subcaption{$\xi=2$.}
\end{subfigure}
\caption{Minimum needed information for the second stage to achieve overall power of $1-\beta=0.8$ for the common non-adaptive design and the adaptive designs with conditional error functions $A_{I,-\infty}$ (inverse normal method) and $A_{F,-\infty}$ (Fisher's product test).}
\end{figure}
\begin{figure}[H]
\begin{subfigure}[c]{0.56\textwidth}
\includegraphics[width=.8\textwidth]{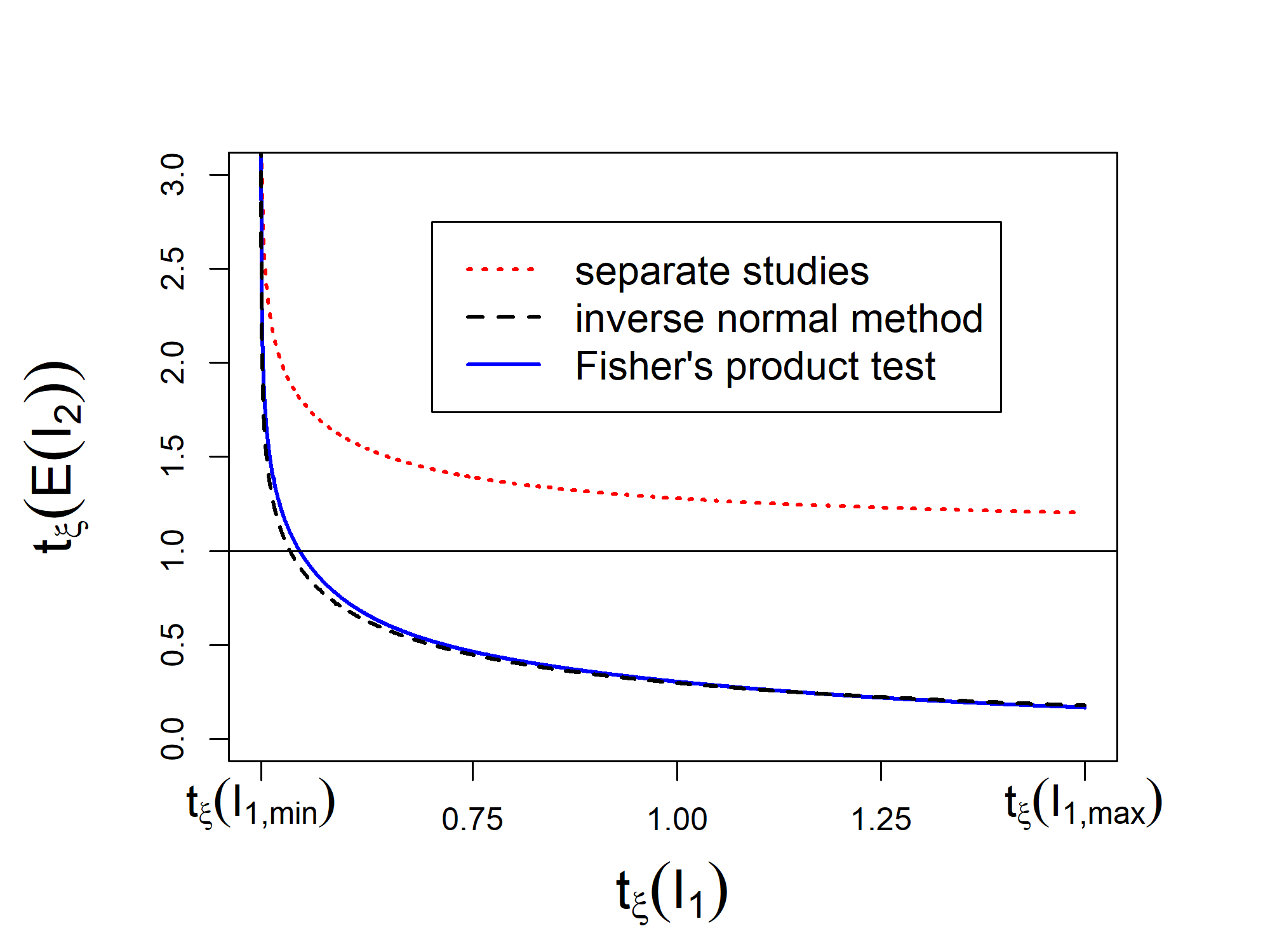}
\subcaption{$\xi=1.75$.}

\end{subfigure}
\begin{subfigure}[c]{0.56\textwidth}
\includegraphics[width=.8\textwidth]{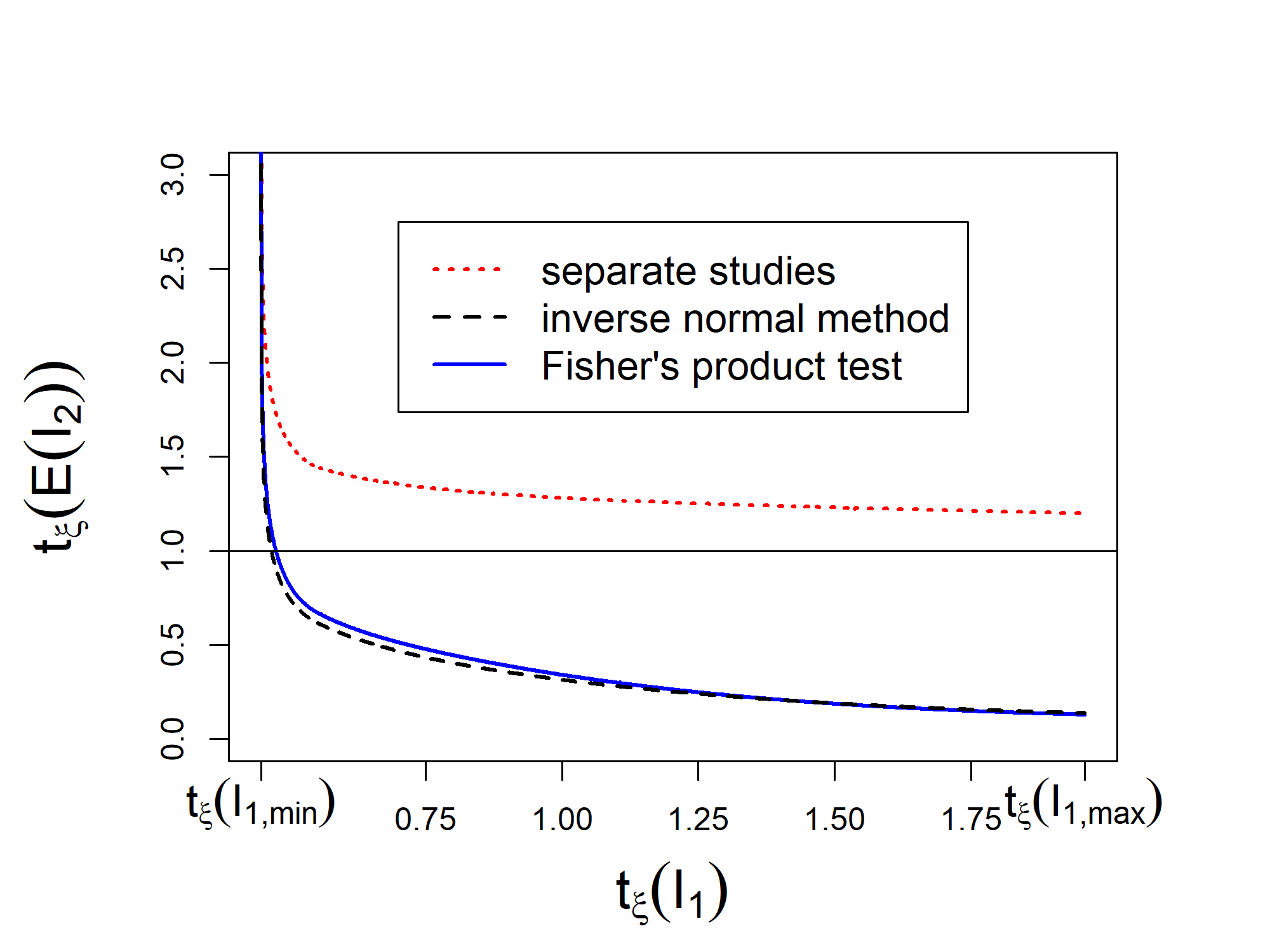}
\subcaption{$\xi=2$.}
\end{subfigure}
\caption{Visualisation of the mean information for the second stage for the non-adaptive design and the adaptive designs with conditional error functions $A_{I,-\infty}$ (inverse normal method) and $A_{F,-\infty}$ (Fisher's product test).}
\end{figure}
\begin{figure}[H]
\begin{subfigure}[c]{0.56\textwidth}
\includegraphics[width=.8\textwidth]{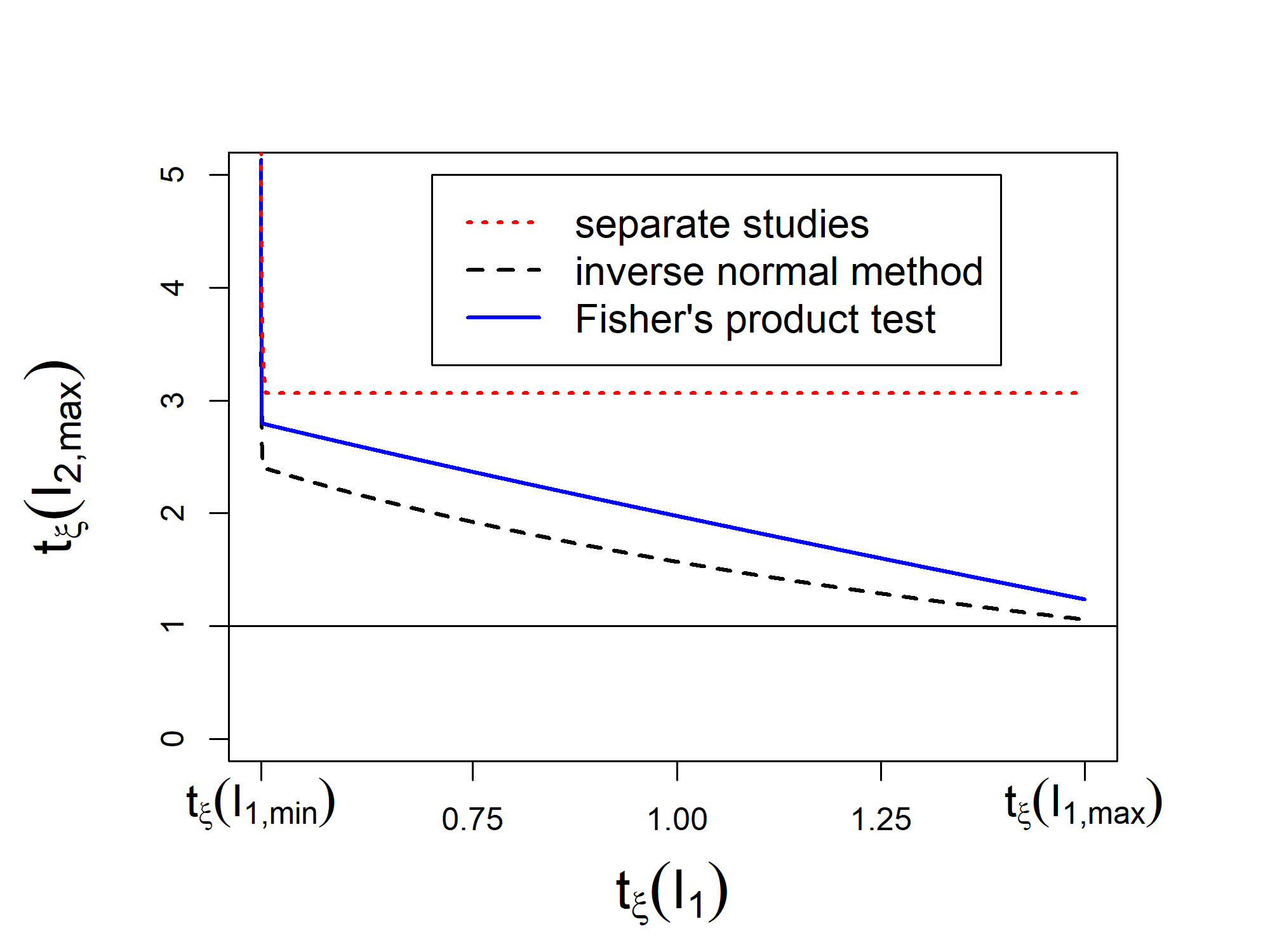}
\subcaption{$\xi=1.75$.}
\end{subfigure}
\begin{subfigure}[c]{0.56\textwidth}
\includegraphics[width=.8\textwidth]{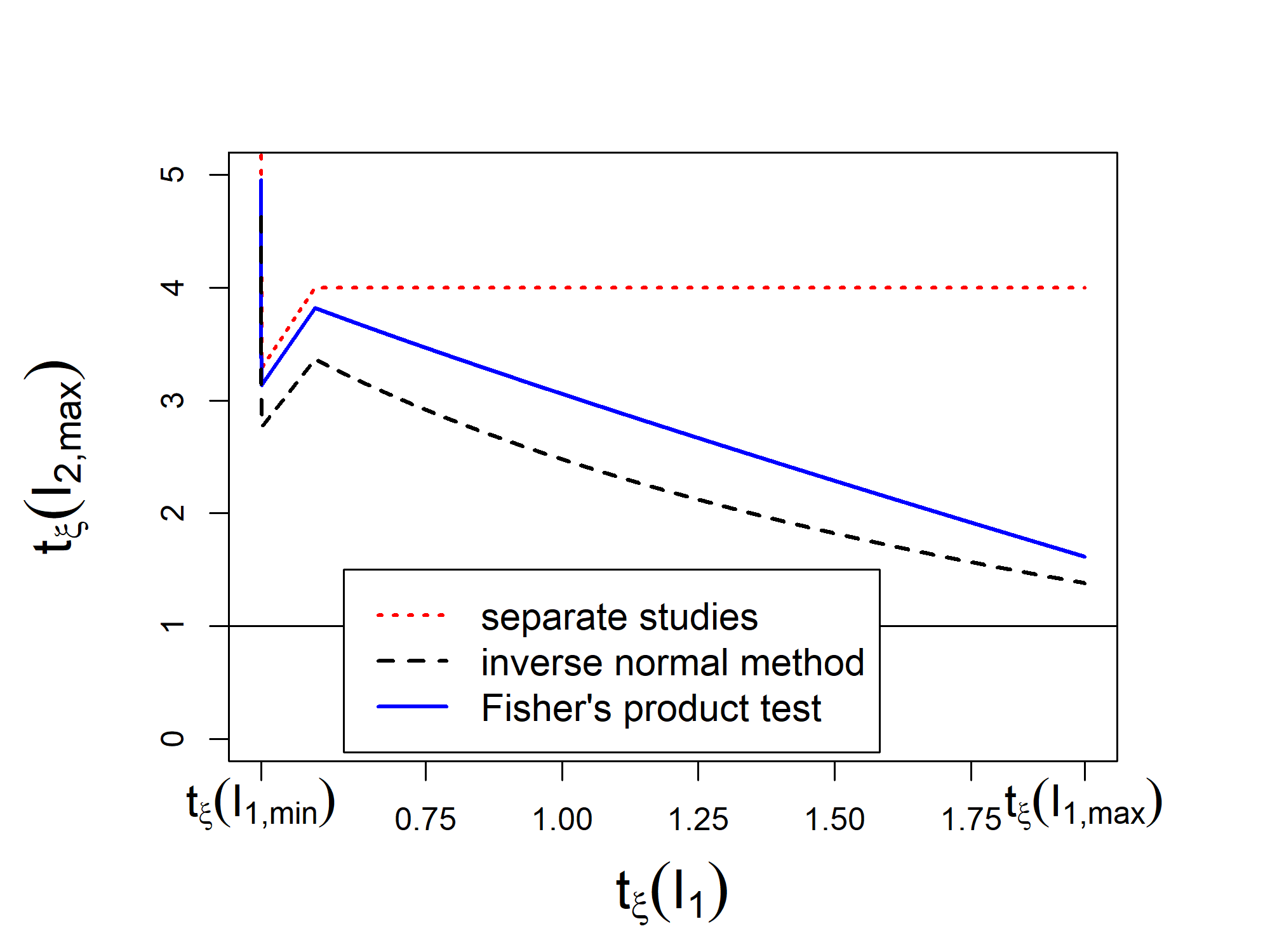}
\subcaption{$\xi=2$.}
\end{subfigure}
\caption{Visualisation of the maximum information for the second stage for the non-adaptive design and the adaptive designs with conditional error functions $A_{I,-\infty}$ (inverse normal method) and $A_{F,-\infty}$ (Fisher's product test).}
\end{figure}
\begin{figure}[H]
\begin{subfigure}[c]{0.56\textwidth}
\includegraphics[width=0.8\textwidth]{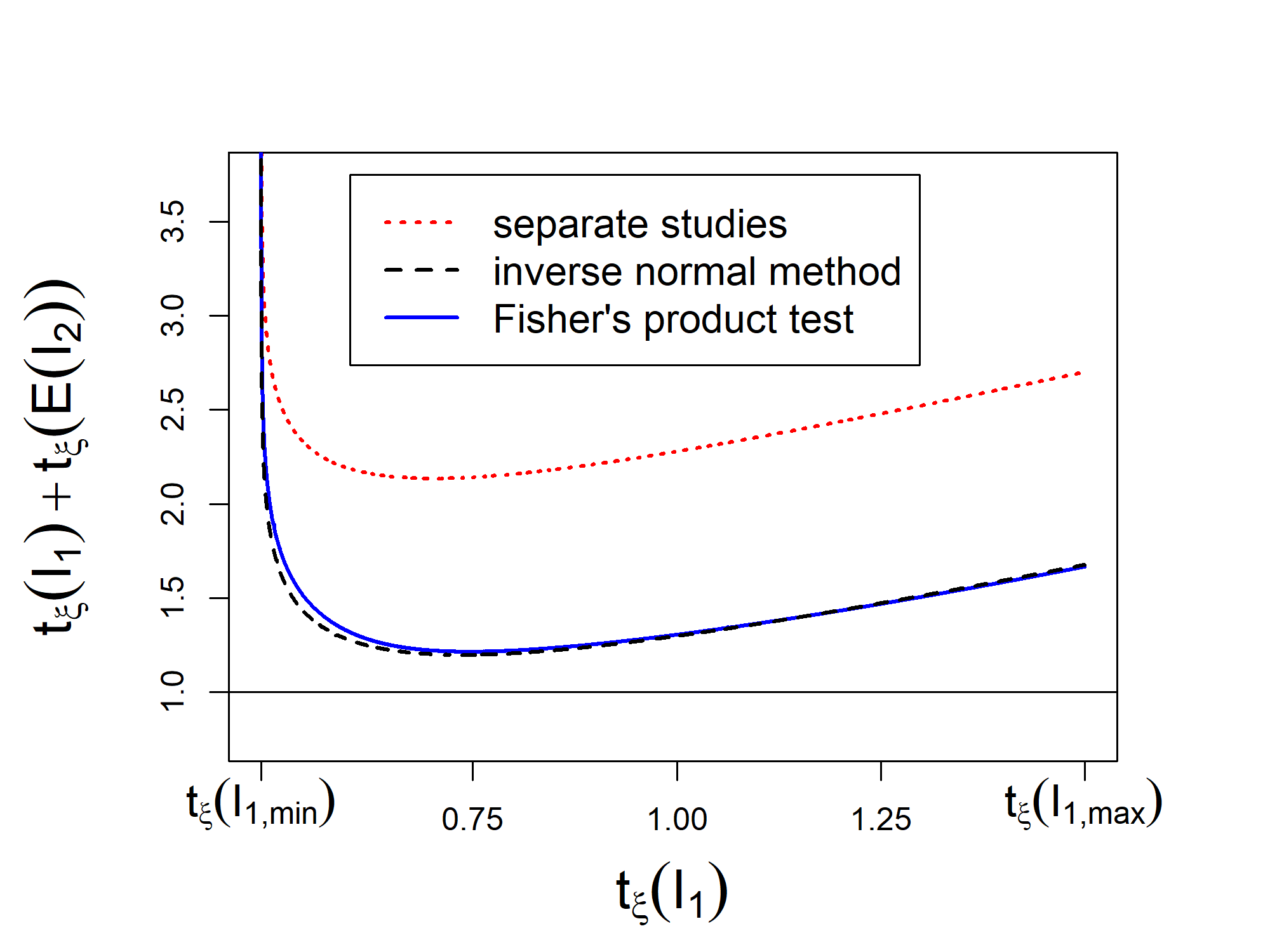}
\subcaption{$\xi=1.75$.}
\end{subfigure}
\begin{subfigure}[c]{0.56\textwidth}
\includegraphics[width=0.8\textwidth]{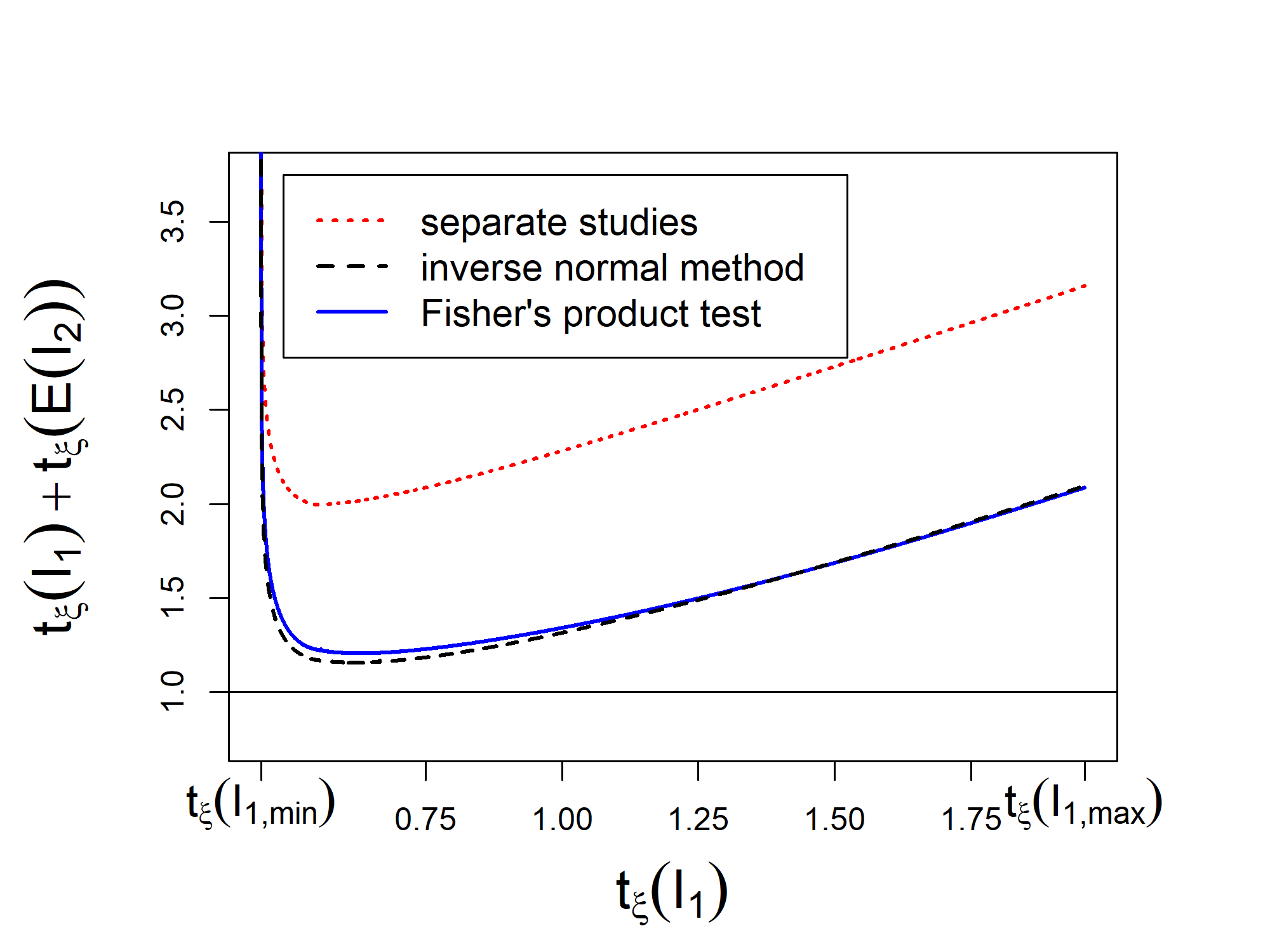}
\subcaption{$\xi=2$.}
\end{subfigure}
\caption{Mean information over both stages.}
\end{figure}

\begin{figure}[H]
\begin{subfigure}[c]{0.56\textwidth}
\includegraphics[width=0.8\textwidth]{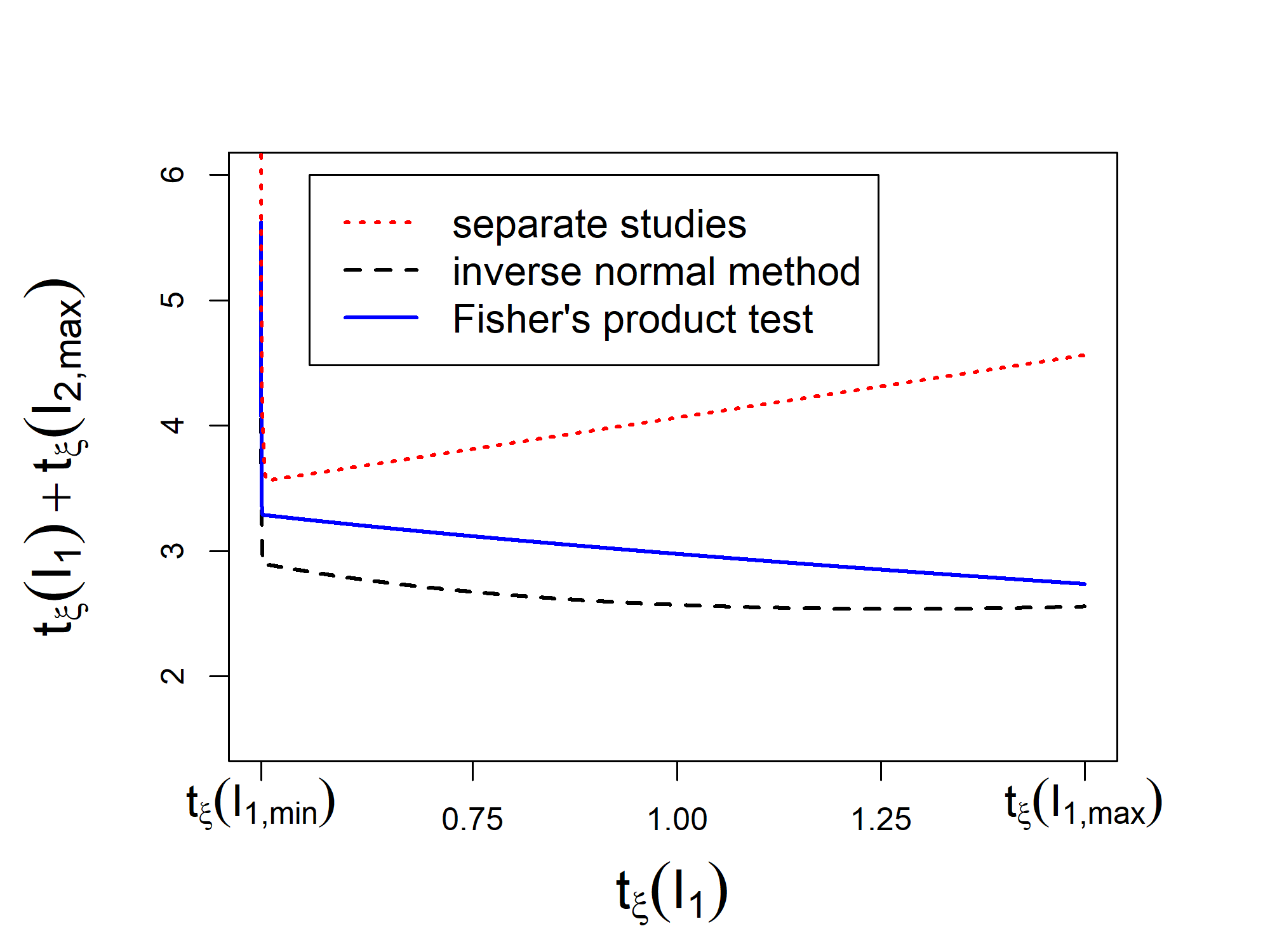}
\subcaption{$\xi=1.75$.}
\end{subfigure}
\begin{subfigure}[c]{0.56\textwidth}
\includegraphics[width=0.8\textwidth]{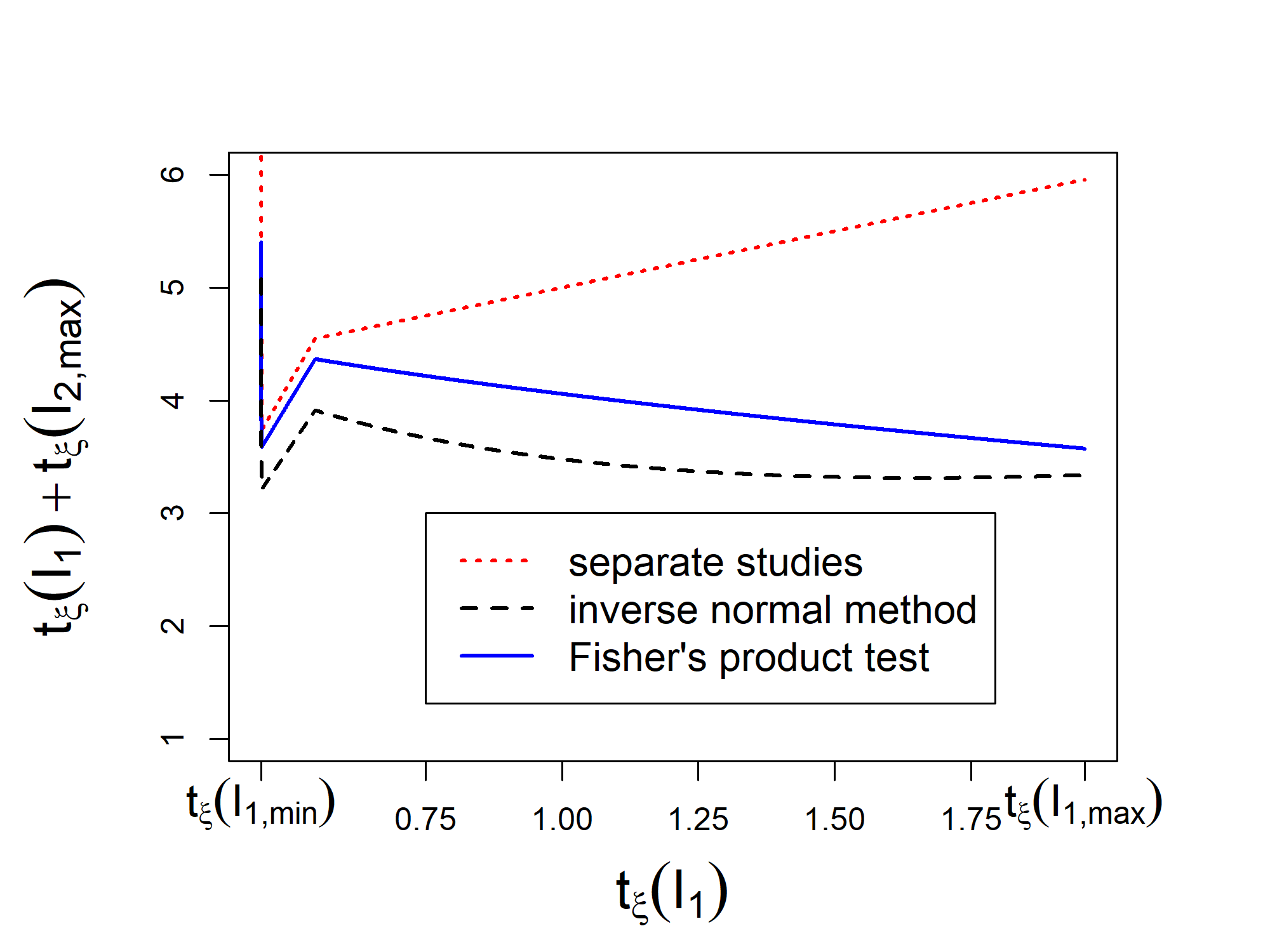}
\subcaption{$\xi=2$.}
\end{subfigure}
\caption{Maximum information over both stages.}
\end{figure}
\section{Additional Plots for Section~\ref{ChapterNonBindingFutility}}\label{AppendixPlotsCombinationStrategy}
The plots in this section complete the discussion in Section~\ref{ChapterNonBindingFutility}. Four further plots for the effect assumption $\xi=1.25$ are presented. Additionally, all efficiency parameters considered in Section~\ref{ChapterNonBindingFutility} for $\xi=1.25$ are calculated under the assumption $\xi=1.43=\xi_{\min}$.
\begin{figure}[h]
\includegraphics[width=1\textwidth]{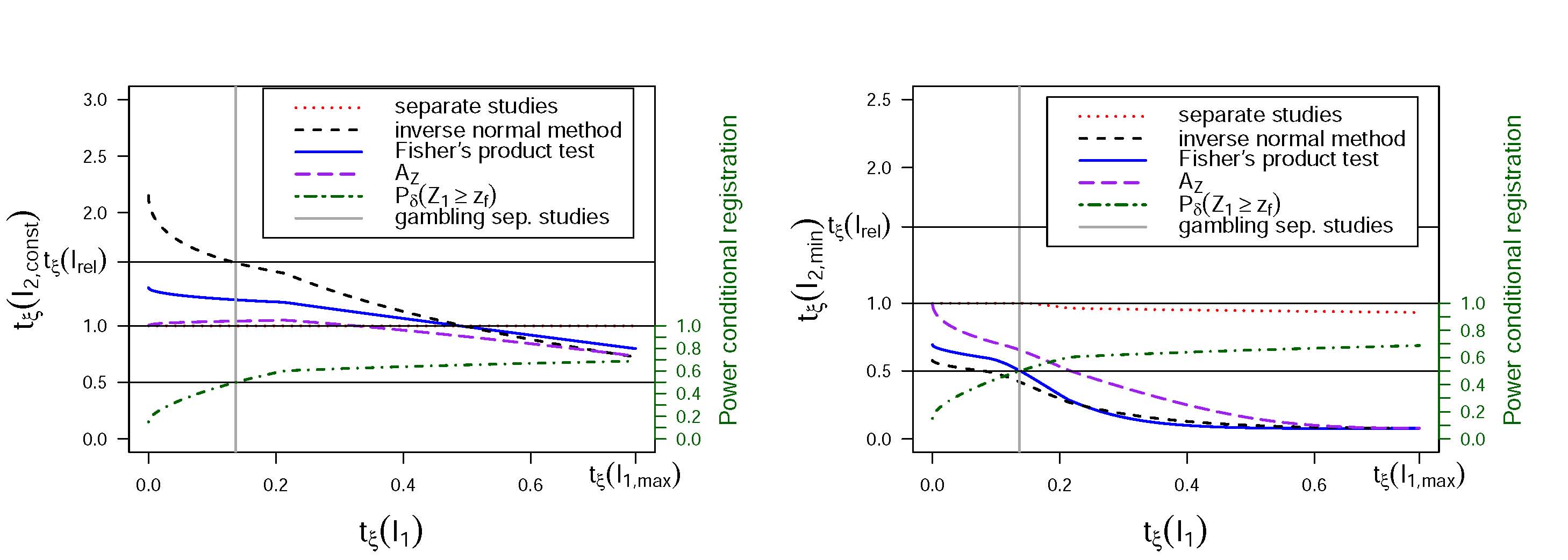}
\caption{Constant second stage information after non-successful conditional registration (left) and minimum second stage information after successful conditional registration (right) for $\xi=1.25$.}
\end{figure}
\begin{figure}[H]
\begin{subfigure}[c]{0.56\textwidth}
\includegraphics[width=.8\textwidth]{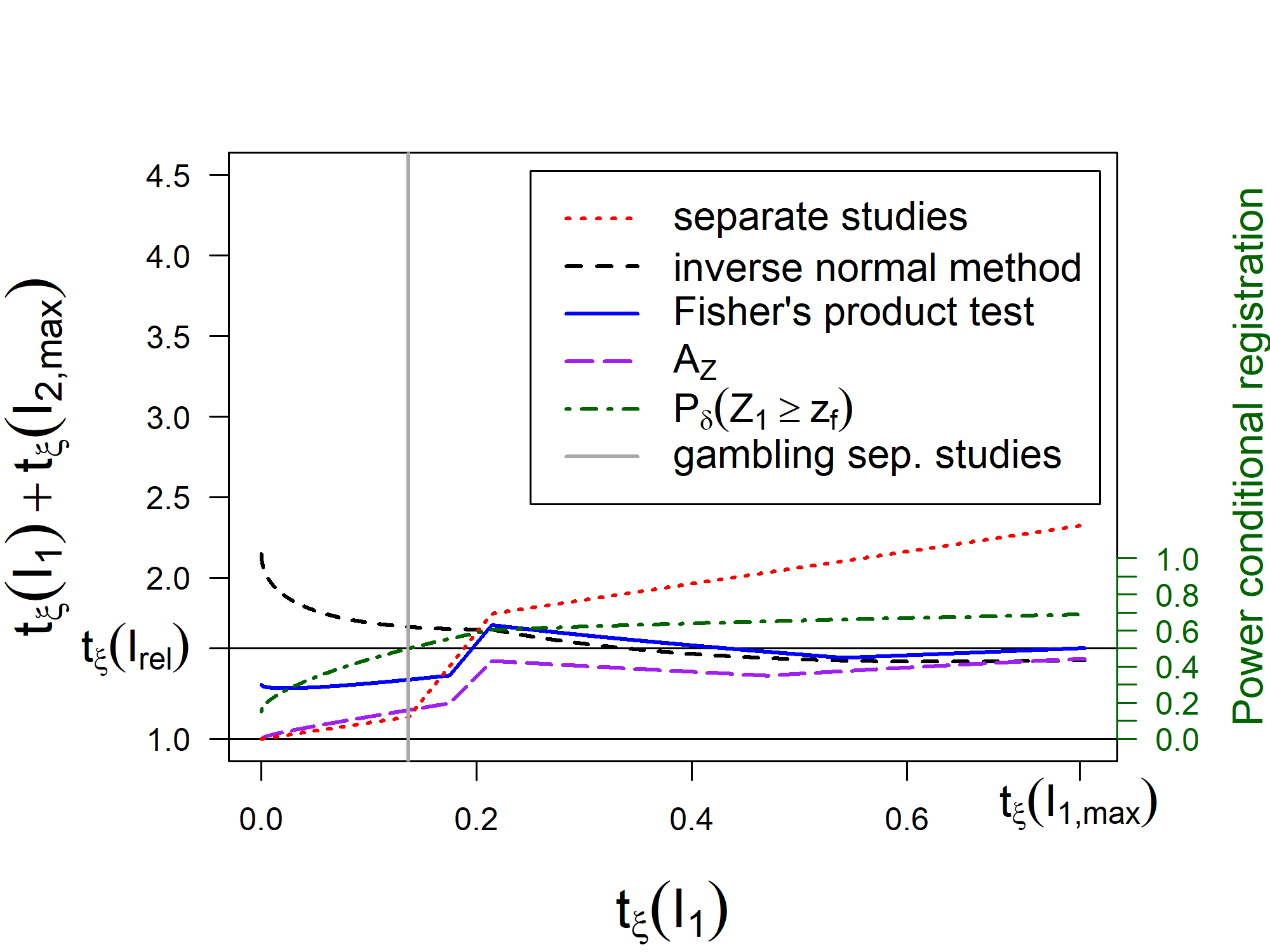}
\subcaption{Maximum Information.}\label{xi1.25MaxInformBothStages}
\end{subfigure}
\begin{subfigure}[c]{0.56\textwidth}
\includegraphics[width=.8\textwidth]{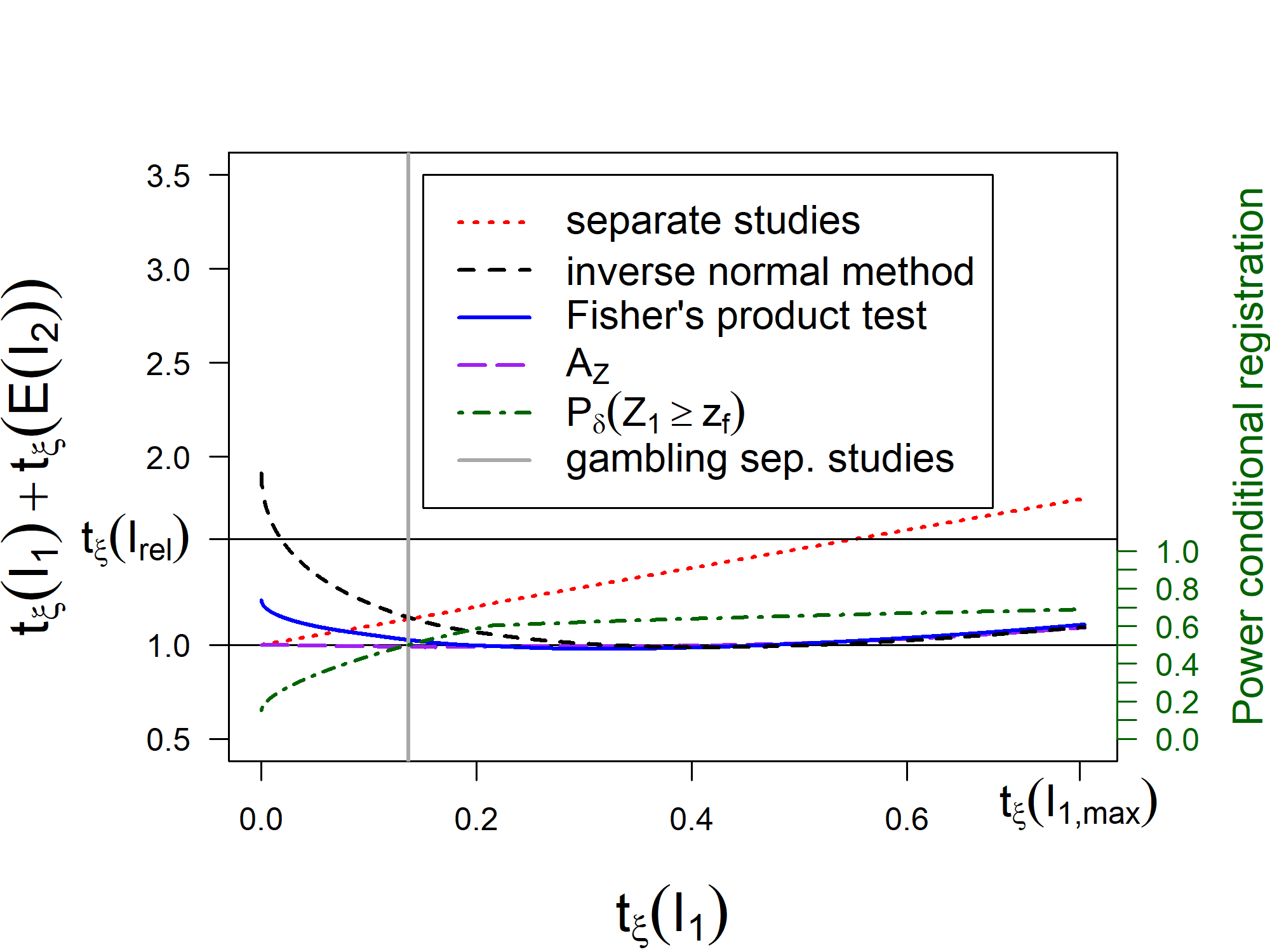}
\subcaption{Mean information.}\label{fig:BP07}
\end{subfigure}
\caption{Maximum and mean information over both stages for $\xi=1.25$.}
\end{figure}
\begin{figure}[H]
\begin{subfigure}[c]{0.5\textwidth}
\includegraphics[width=.95\textwidth]{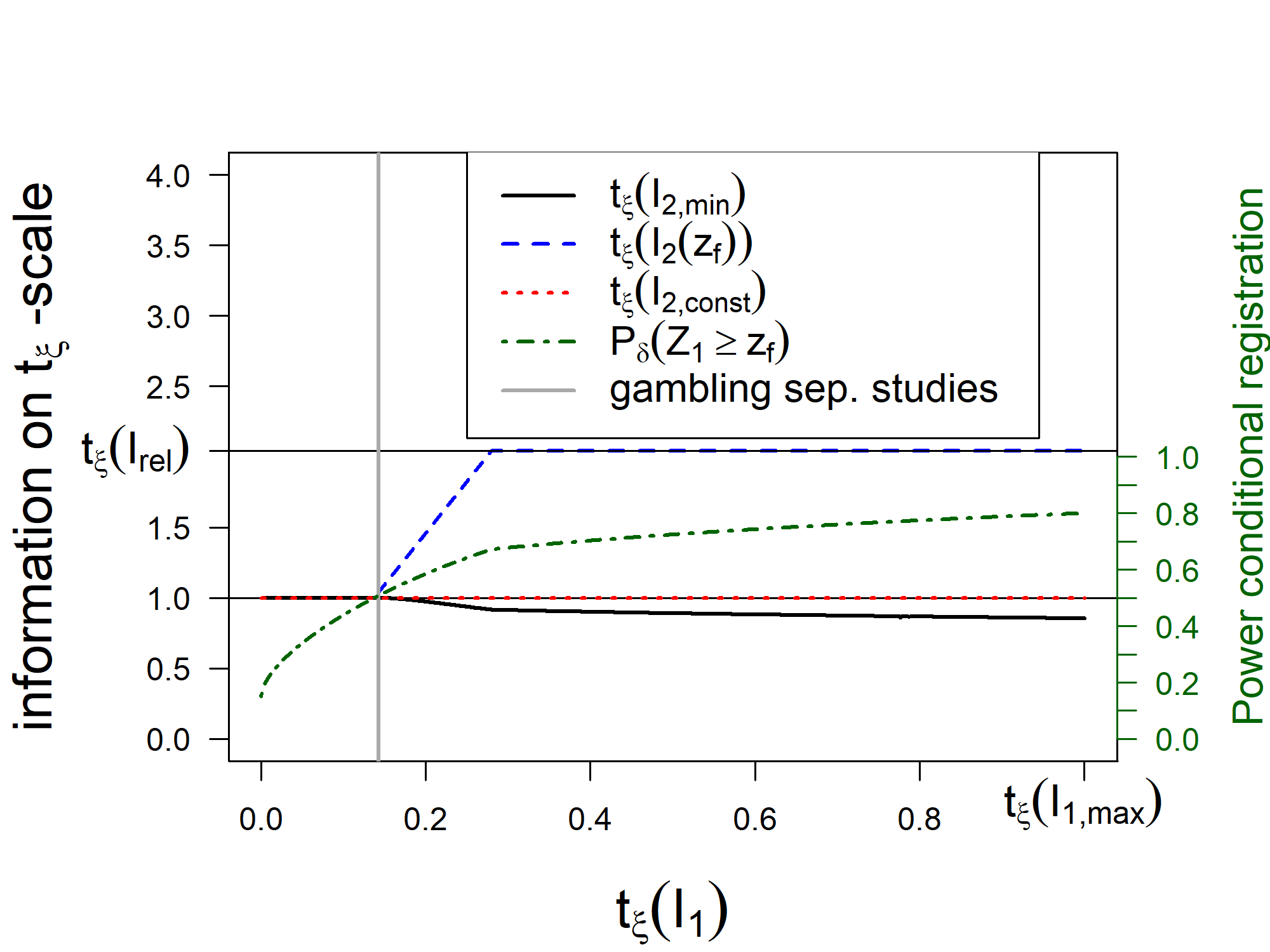}
\subcaption{Non-adaptive design.}
\end{subfigure}
\begin{subfigure}[c]{0.5\textwidth}
\includegraphics[width=.95\textwidth]{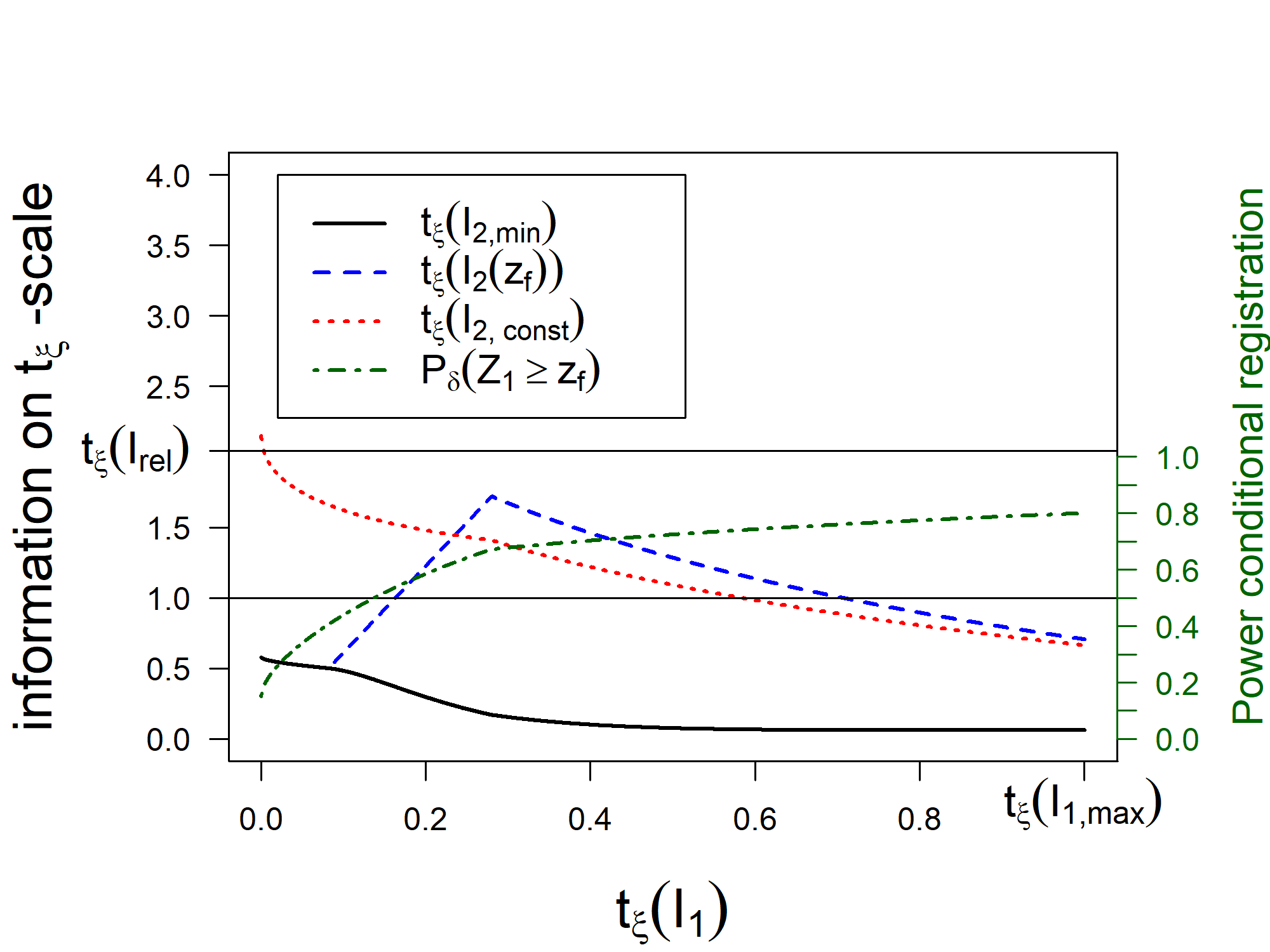}
\subcaption{Inverse normal method.}
\end{subfigure}
\begin{subfigure}[c]{0.5\textwidth}
\includegraphics[width=.95\textwidth]{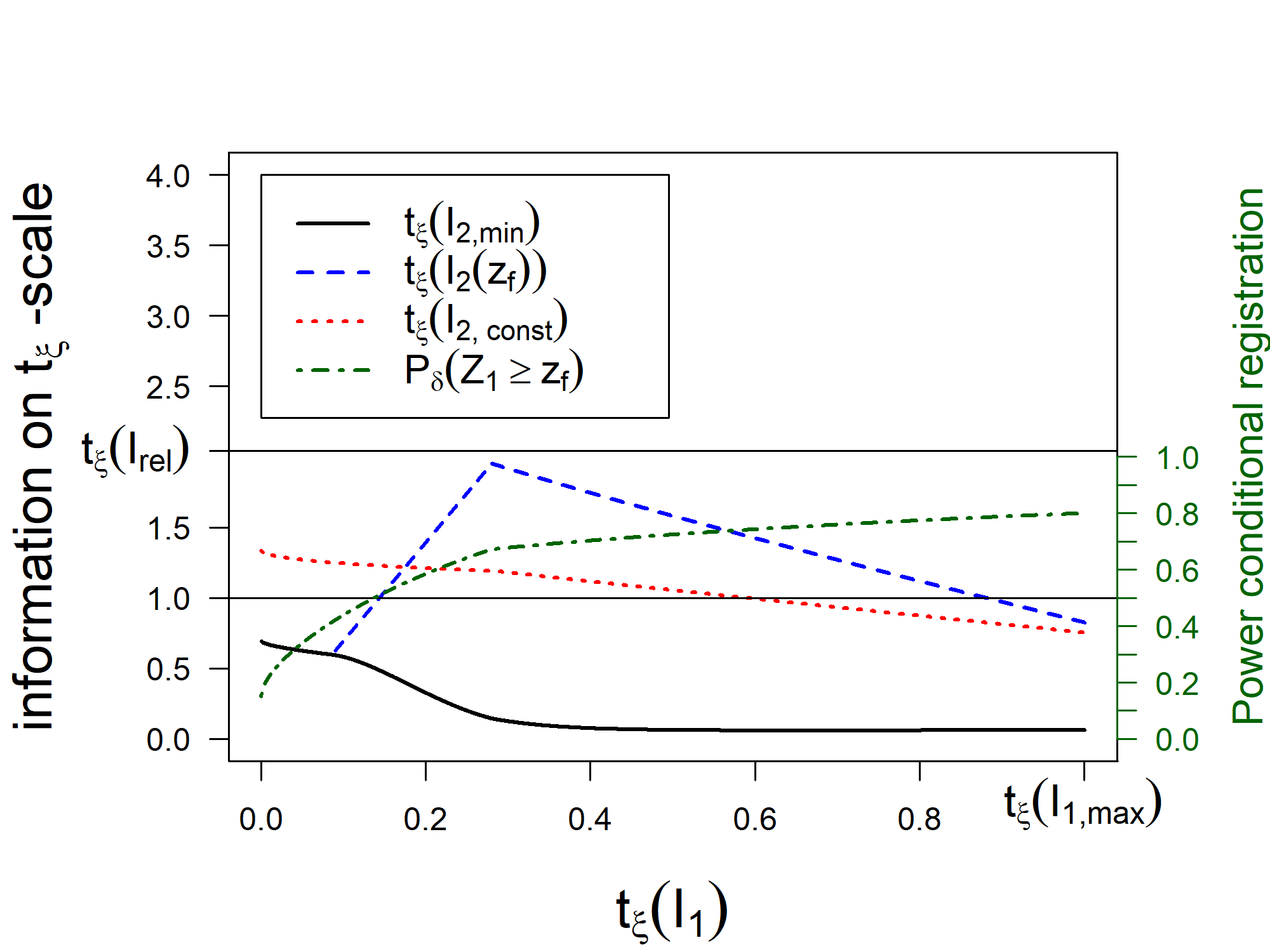}
\subcaption{Fisher's product test.}
\end{subfigure}
\begin{subfigure}[c]{0.5\textwidth}
\includegraphics[width=.95\textwidth]{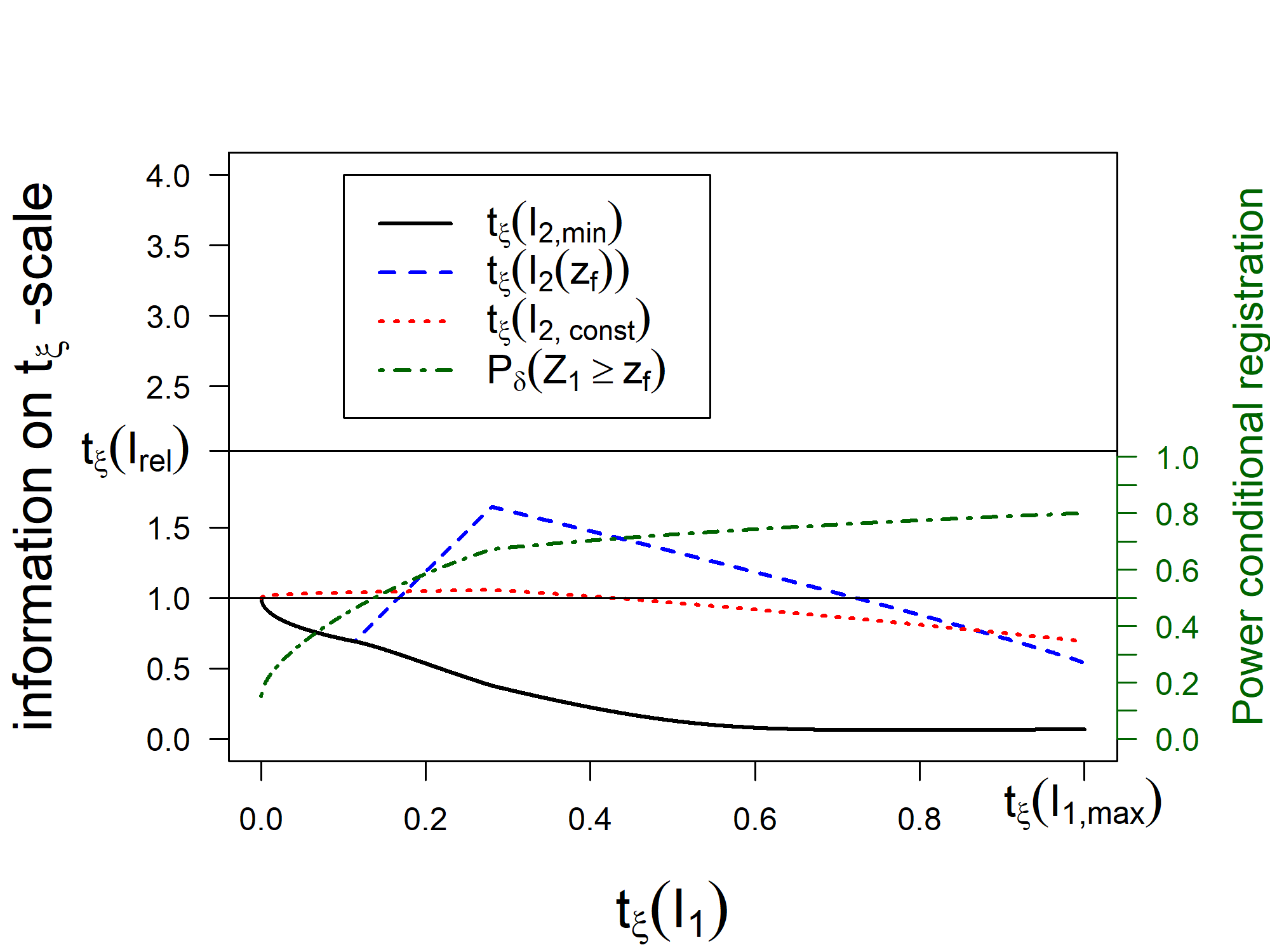}
\subcaption{$A_Z$.}
\end{subfigure}
\caption{Comparison of minimum and maximum information for the second stage of the suggested combination strategy (no conditional registration required for permanent registration) with a non-adaptive and the adaptive designs when $\alpha_c=0.15$ and $\xi=\xi_{\min}=1.43$. }
\end{figure}

\begin{figure}[H]
\begin{subfigure}[c]{0.56\textwidth}
\includegraphics[width=.8\textwidth]{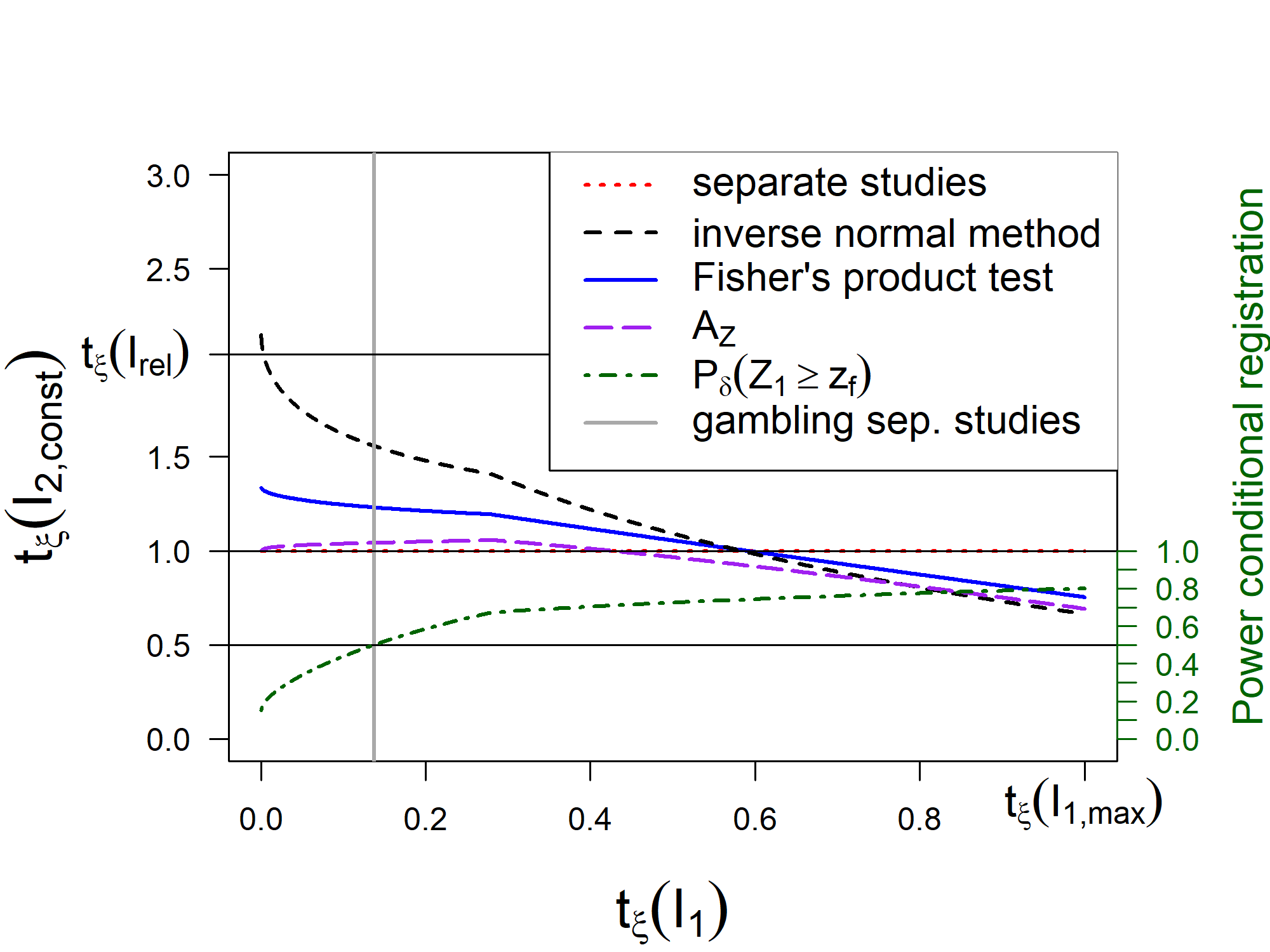}
\subcaption{Constant second stage information.}
\end{subfigure}
\begin{subfigure}[c]{0.56\textwidth}
\includegraphics[width=.8\textwidth]{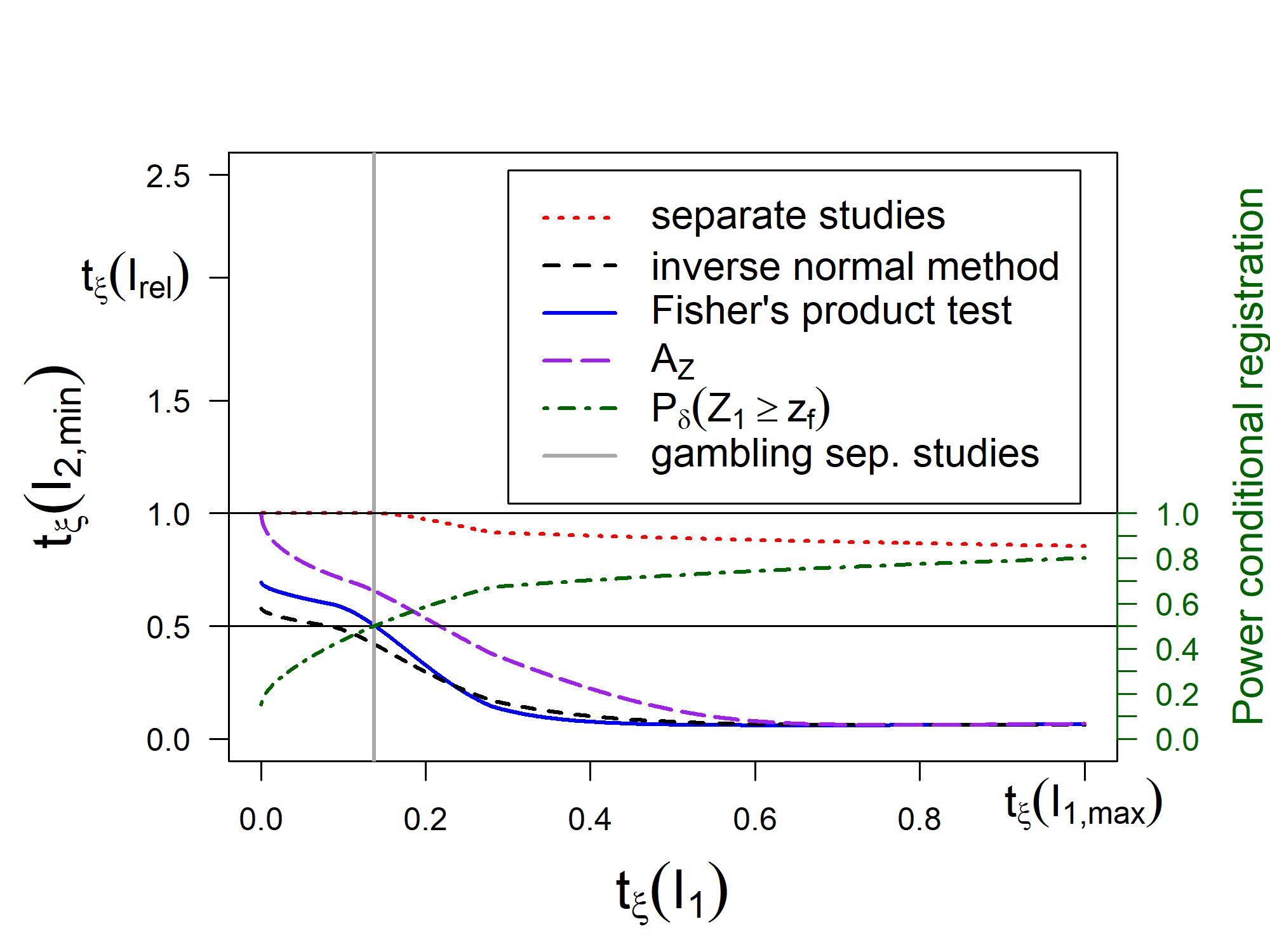}
\subcaption{Minimum second stage information\\ for upper branch.} 
\end{subfigure}
\caption{Constant second stage information after non-successful conditional registration and minimum second stage information after successful conditional registration for $\xi=\xi_{\min}=1.43$.}
\end{figure}

\begin{figure}[H]
\begin{subfigure}[c]{0.56\textwidth}
\includegraphics[width=.8\textwidth]{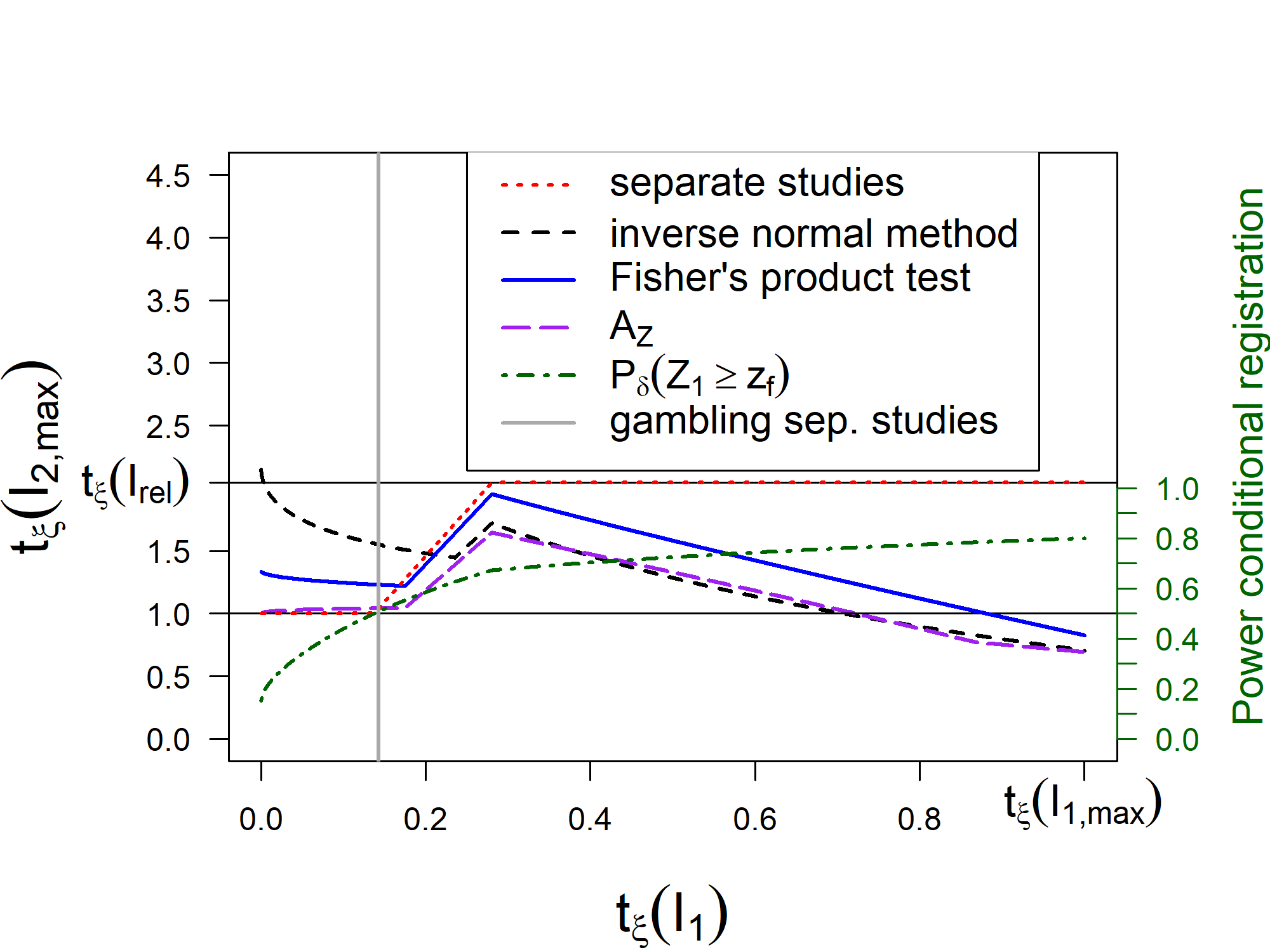}
\subcaption{Maximum information.}
\end{subfigure}
\begin{subfigure}[c]{0.56\textwidth}
\includegraphics[width=.8\textwidth]{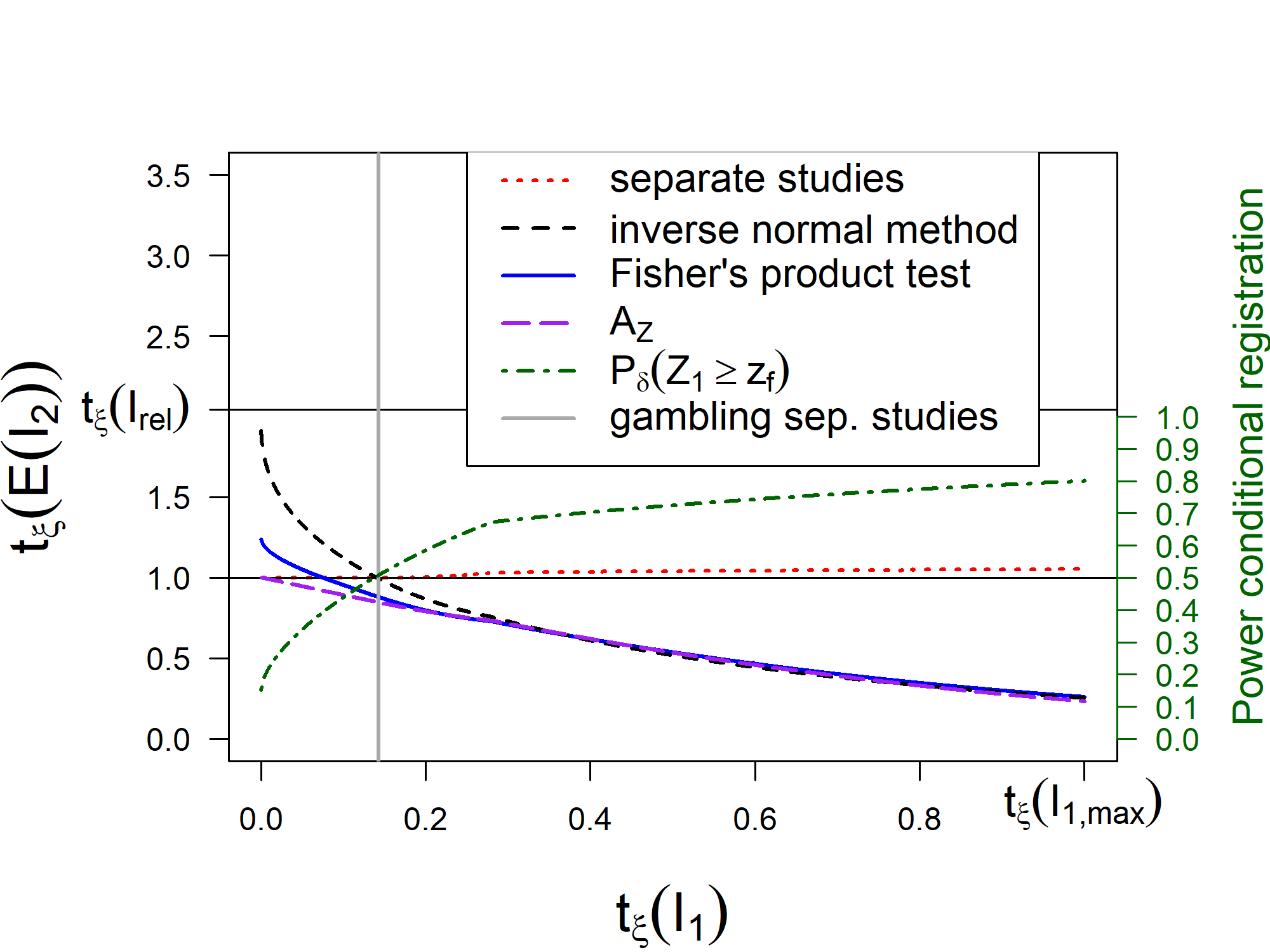}
\subcaption{Mean information.}
\end{subfigure}
\caption{Maximum and mean information of second stage for $\xi=\xi_{\min}=1.43$.}
\end{figure}

\begin{figure}[H]
\begin{subfigure}[c]{0.56\textwidth}
\includegraphics[width=.8\textwidth]{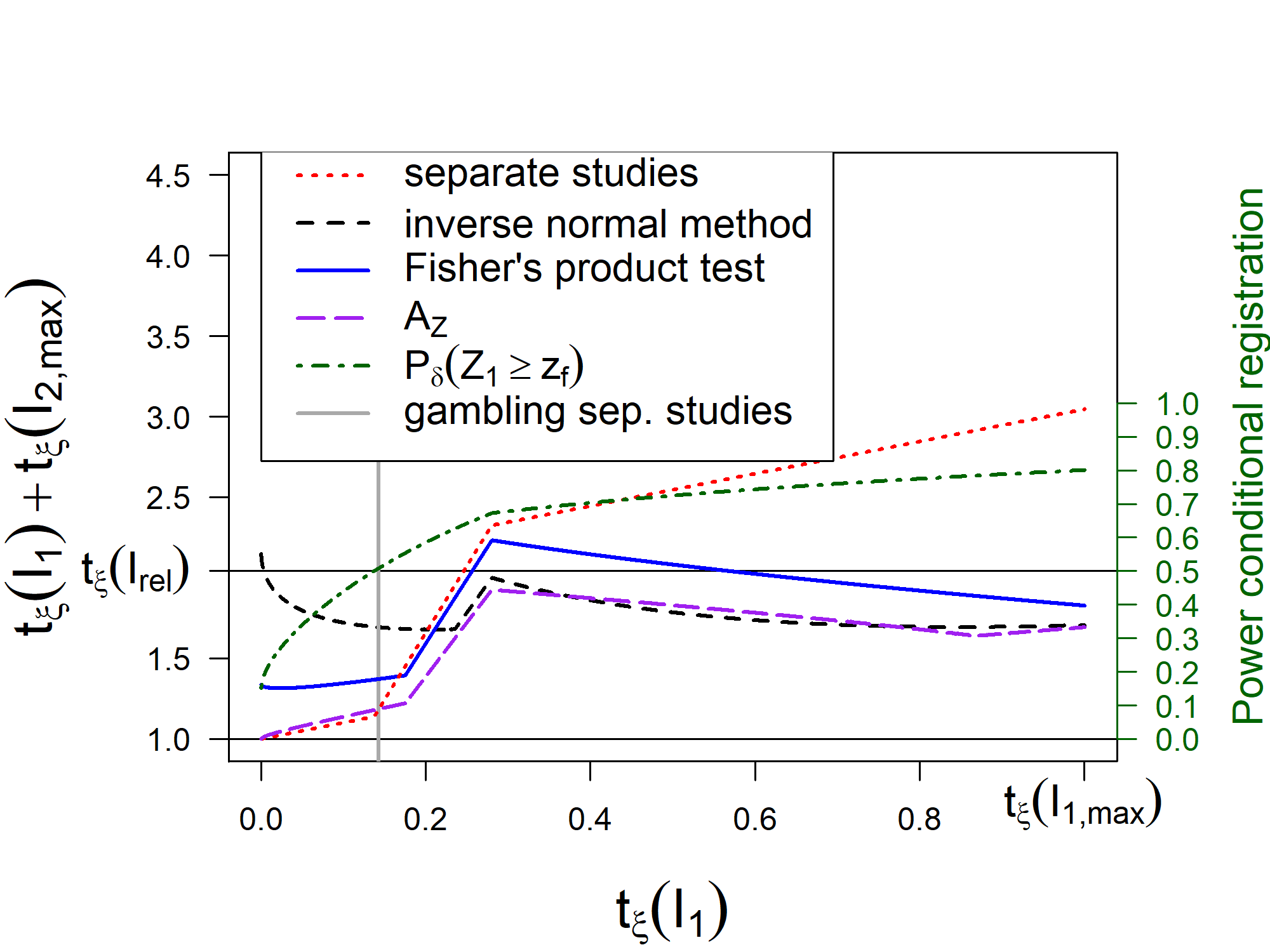}
\subcaption{Maximum Information.}
\end{subfigure}
\begin{subfigure}[c]{0.56\textwidth}
\includegraphics[width=.8\textwidth]{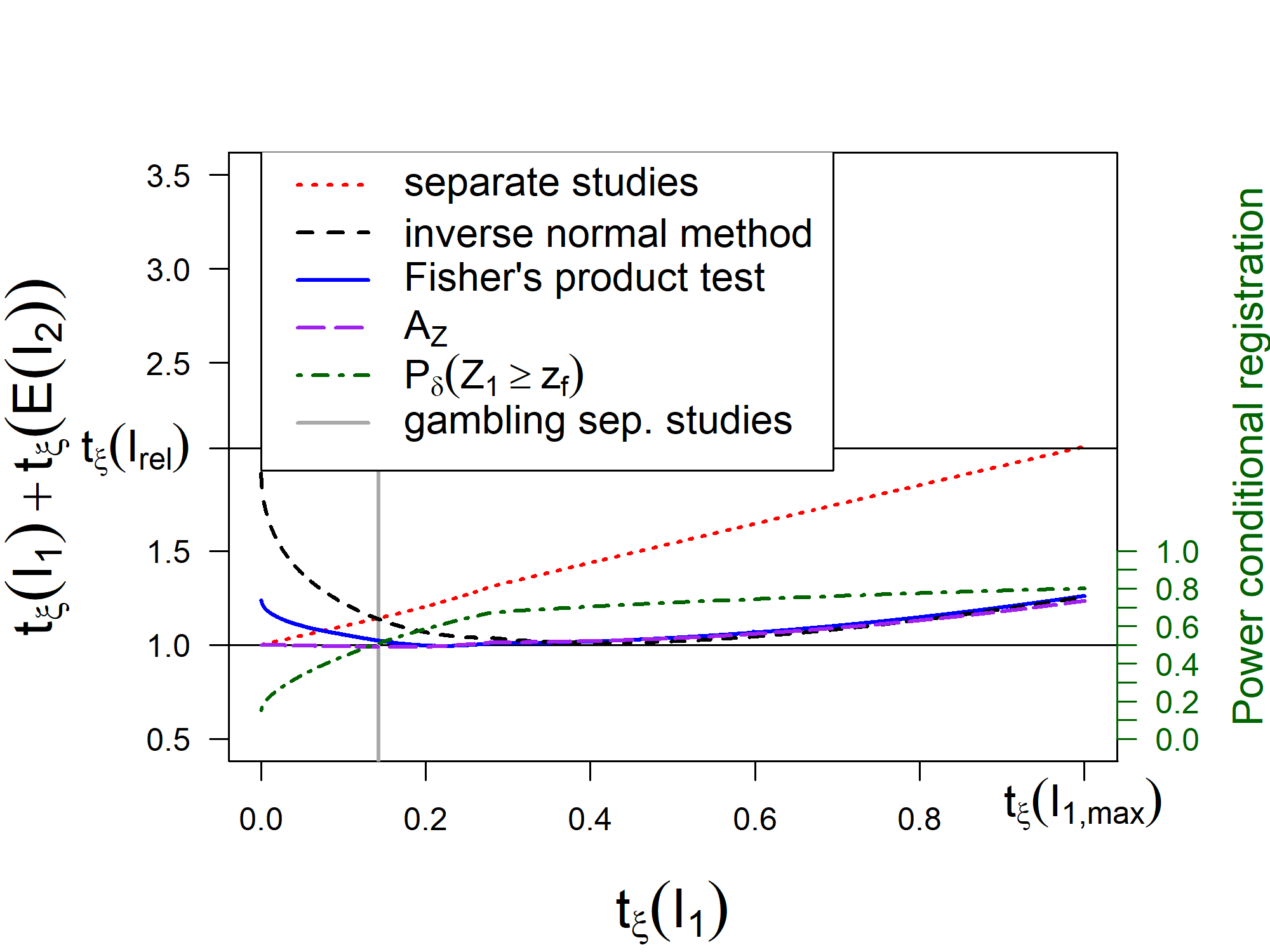}
\subcaption{Mean information.}
\end{subfigure}

\caption{Maximum and mean information over both stages for $\xi=\xi_{\min}=1.43$.}
\end{figure}

\end{document}